\title{From Matching Logic To Parallel Imperative Language Verification} 
\titlerunning{From Matching Logic To Parallel Imperative Language Verification} 
\author{ShangBei Wang}{Nanjing University of Aeronautics and Astronautics, Nanjing, China} {wangshangbei123@nuaa.edu.cn}{https://orcid.org/0000-0002-5047-3717}{}
\authorrunning{ShangBei Wang} 
\keywords{Matching Logic, Operational Semantics, Program Verification, Parallel Language} 
\begin{document}

\maketitle

\begin{abstract}
Program verification is to develop the program's proof system, and to
prove the proof system soundness with respect to a trusted operational semantics of the program. However, many practical program verifiers are not based on operational semantics and can't seriously validate the program. Matching logic is proposed to make program verification based on operational semantics. In this paper, following Grigore Ro{\c{s}}u et al's work, we consider matching logic for parallel imperative language(PIMP). According to our investigation, this paper is the first study on matching logic for PIMP. In our matching logic, we  redefine "interference-free" to character parallel rule and prove the soundness of matching logic to the operational semantics of PIMP. We also link PIMP's operational semantics and PIMP's verification formally by constructing a matching logic verifier for PIMP which executes
rewriting logic semantics symbolically on configuration patterns and is sound and complete to matching logic for PIMP. That is our matching logic verifier for PIMP is sound to the operational semantics of PIMP. Finally, we also verify the matching logic verifier through an example which is a standard
problem in parallel programming.
\end{abstract}
\section{Introduction}
\label{sec:typesetting-summary}
Operational semantics\cite{klin2011bialgebras}\cite{plotkin1981structural}\cite{aceto2001structural} is called ``transition semantics'' whose basic idea is to use a sequence $\gamma_0\to \gamma_1\to\gamma_i\to\cdots$ of configurations to formalize the execution of a program. $\gamma_i$ is either a terminal configuration or a nonterminal configuration.
The development of semantics engineering frameworks $\mathbb{K}$\cite{DBLP:journals/cacm/rosu10}\cite{DBLP:journals/cacm/rosu14}, Ott\cite{DBLP:journals/cacm/Sewell10} and PLT-Redex\cite{DBLP:journals/cacm/klein12} make it very easy to define the operational semantics of a programming language.
Consequently, the operational semantics of C\cite{DBLP:conf/focs/Hathhorn15}\cite{DBLP:journals/cacm/ellison12}, Java\cite{DBLP:conf/focs/Bogdanas15}, Python\cite{DBLP:journals/cacm/politz13}, CAML\cite{DBLP:conf/focs/Owens08}, JavaScript\cite{DBLP:conf/focs/Park15}\cite{DBLP:journals/cacm/Bodin14} have been proposed. The advantages of operational semantics such as easy to define and understand, being executable and being tested, make it suitable as trusted reference model for language. The ideal program  verification should use such operational semantics, unchanged, to produce proof certificates. However, program verification rarely use operational semantics directly, because the proof based on operational semantics directly involve the corresponding  transition system, which is generally considered low-level. Hoare\cite{DBLP:journals/cacm/Apt19}\cite{DBLP:journals/cacm/Hoare69} and dynamic logic\cite{harel1984} are typically used because their reasoning seems higher level. However, the set of abstract proof rules to define language semantics in Hoare and dynamic logic are hard to understand and trust. The essence of program verification is to develop a program's proof system, and to prove the proof system sound with respect to a trusted operational semantics of the program. However, instead of being based on a formal semantics, many practical program verifiers\cite{calcagno2015moving}\cite{bjorner2015}\cite{domenica2015method}\cite{DBLP:journals/cacm/Sasse07} convert the target program language to an intermediate validation language or simply implement ad-hoc verification condition. If a program verification is not based on the formal semantics of the program, the program verification does not seriously validate the program and the result can't be trusted.\\
Matching logic\cite{rosu2017matching} is proposed to make program verification based on operational semantics. To reason about program, we first need to define program configurations. Matching logic configuration patterns consist of variables, symbols in signature, first-order logical connectives and existential quantifiers.
For example $\exists z(o=<\!\!\mathrm{x}:=\mathrm{x}*\mathrm{y}\!\!>_k<\!\!\mathrm{x}\mapsto x,\mathrm{y}\mapsto z,\rho\!\!>_{\mathit{env}}<\!\!m\!\!>_{\mathit{mem}}\land x \ne 0)$ is matching logic configuration pattern where $o$ is a distinguished variable of sort $\mathit{Cfg}$ and $<\!\!\cdots\!\!>_k$ holds code fragment and$<\!\!\cdots\!\!>_{env}<\!\!\cdots\!\!>_{mem}$ holds program state and $x \ne 0$ is the constraint, an arbitrary first-order logical formula. There are two reasons why matching logic is particularly suitable for program reasoning:
\begin{itemize}
\item By matching logic configuration patterns, we can get access to any detail in the program and hide irrelevant details using existential quantization;
\item Both the operational semantics and reachability properties of a program can be described as matching logic rules between configuration patterns.
\end{itemize}
Like Hoare logic, correctness pair $\exists X(o=<\!\!c\!\!>_k<\!\!\rho\!\!>_{\mathit{env}}<\!\!m\!\!>_{\mathit{mem}}\land \varphi)\Downarrow \exists X(o=<\!\!\cdot\!\!>_k<\!\!\rho'\!\!>_{\mathit{env}}<\!\!m'\!\!>_{\mathit{mem}}\land\varphi')$ in matching logic relates configuration before the execution of $c$ to configuration after it's execution. Correctness pair should not be viewed as an independent object but as the result of a proof outline which carries intermediate proof information.
Matching logic has a lot of achievements\cite{rocsu2010matching}\cite{rocsu2012hoare}\cite{rosu2009rewriting}\cite{rosu2011matching}\cite{rosu2012checking}\cite{rocsu2020matching} in sequential imperative language (IMP). In \cite{rosu2009rewriting}, Grigore Ro{\c{s}}u et al presented matching logic proof system of IMP and proved the soundness of the matching logic proof system w.r.t. the operational semantics of IMP. However, when multiple processes execute in parallel, the results are complex and difficult to handle because the execution order of actions in different processes is unpredictable. A number of proof systems of PIMP have been proposed, such as temporal logic\cite{pnueli1977the} and Hoare logic of parallel program \cite{BF00268134}. Unfortunately, according to our investigation, there is no research on matching logic proof system for PIMP. \\
In this paper, following Grigore Ro{\c{s}}u et al's work\cite{rosu2009rewriting},  we consider matching logic for PIMP, which include parallelism in matching logic proof system of IMP. Matching logic for PIMP provides a simple and understandable way to deal with parallelism. More importantly, it is intuitive and suitable as a basis for a reliable proof outline. In \cite{BF00268134}, the definition of "interference-free" is key in characterizing parallel rule. However, this definition can't be used directly in matching logic. We had to redefine it so that it could be used in matching logic. Rewriting logic is the theoretical basis of $\mathbb{K}$, Ott and PLT-Redex.
we use rewriting logic semantics\cite{marti2002rewriting}\cite{meseguer1992conditional}\cite{meseguer2012twenty} to give operational semantics of PIMP and define
the operational semantics of PIMP as a rewrite theory. We also try to link PIMP's operational semantics and PIMP's verification formally. First, we prove the soundness of matching logic with respect to operational semantics of PIMP, and then matching logic verifier for PIMP is given which execute rewriting logic semantics symbolically on configuration patterns and is sound and complete for matching logic system of PIMP. Figure 1 shows the relationship among operational semantics, matching logic and matching logic verifier.\\
\begin{figure}
  \centering
  \includegraphics[width=3.5in,height=11in,clip,keepaspectratio]{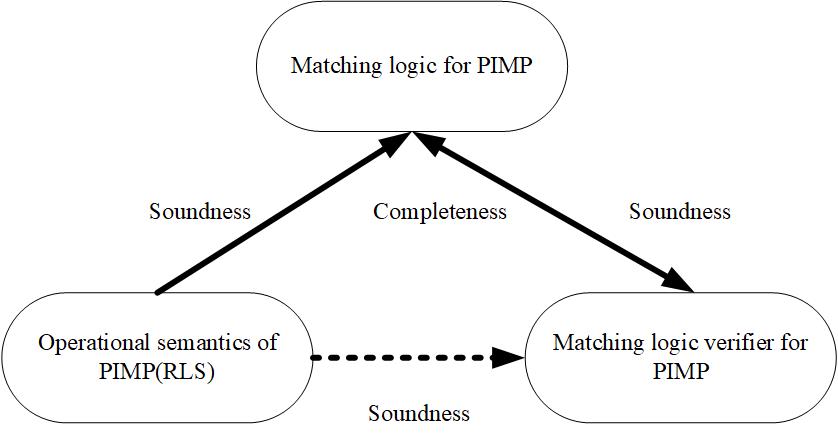}\\
  \caption{Relationship among operational semantics, matching logic and matching logic verifier}\label{fig1}
\end{figure}
The paper is organized as follows. The next section presents PIMP, a parallel imperative language with PAR and AWAIT operations to describing cooperation between processes-synchronization, mutual exclusion; with Env operation to describing a process is executed in an arbitrary "state", that is, in parallel with other processes. In addition, we define operational semantics of PIMP by rewriting logic and prove several useful properties. Section 3 introduces matching logic proof system for PIMP and prove the soundness of matching logic with respect to operational semantics of PIMP. Section 4 gives the matching logic verifier for PIMP and shows that it is sound and complete for matching logic proof system for PIMP. That is the matching logic verifier for PIMP is sound to the operational semantics of PIMP. We also verify the matching logic verifier through an example which is a standard problem in parallel programming.
\section{Operational semantics of PIMP}
In this section, we introduce a simple parallel imperative language(PIMP) by adding await operation, parallel operation and array operations to the IMP\cite{rosu2009rewriting}.
The operational semantics of PIMP is defined as a rewrite logic theory $(\Sigma_\mathrm{PIMP},\mathcal{E}_\mathrm{PIMP},\mathcal{R}_\mathrm{PIMP})$\cite{cserbuanuctua2009rewriting}. Figure 2 shows the complete rewrite theory. The signature $\Sigma_\mathrm{PIMP}$ consists of the PIMP's syntax and the syntax of configurations.
The configurations of PIMP have the form $<\!\!\cdots\!\!>_k<\!\!\cdots\!\!>_{env}<\!\!\cdots\!\!>_{mem}$, containing: a computation, an environment and a memory. The $\mathit{Env}$ sort is a partial mapping and has the form $x_1\mapsto i_1,x_2\mapsto i_2,\cdots,x_n\mapsto i_n$ with $\mathit{\cdot}$ representing the empty environment. Like $\mathit{Env}$ sort, the $\mathit{Mem}$ sort is also a partial map structure but from positive naturals to integers.
The $\mathit{C}$ sort is computation and $\mathit{\cdot}$ is empty computation with its usual properties : $\mathit{\cdot}\;;c=c$ and $\mathit{\cdot}\;\Vert c=c=c\Vert\;\mathit{\cdot}$. $\mathit{await }\;\mathit{b}\;\mathit{ then }\;\mathit{cc}$ computation acts as a flexible but primitive tool for mutual exclusion. Only when the condition $\mathit{b}$ is true, the process can execute $\mathit{await }\;\mathit{b}\;\mathit{ then }\;\mathit{cc}$, otherwise, the process is blocked. The computation $\mathit{cc}$ is an indivisible action. During its execution, other processes are blocked. Hence, it is very desirable to not contain $\Vert$, $\mathit{while}$ and $\mathit{await}$ computations in $\mathit{CC}$ sort.
$\mathit{A}:=\mathit{Array}(e_1, e_2, \cdots, e_n)$ evaluates $e_1, e_2, \cdots, e_n$ to $i_1, i_2, \cdots, i_n$, allocates a contiguous space with size $n$ and starting address as positive integer $p$ in memory, writes values $i_1, i_2, \cdots, i_n$ in order in that space and assigns $p$ to $A$. $x:=A[e]$ evaluates $e$ to positive integer $p$, evaluates $A$ to positive integer $q$ and assigns to $x$ the value at location $p+q$ in memory ($p+q$ must be allocated). $A[e_1]:=e_2$ evaluates $e_1$ to
positive integer $p$, $A$ to positive integer $q$ and $e_2$ to value $i$, writes $i$ to location $p+q$ in memory ($p+q$ must be allocated).\\
\begin{figure}
  \scriptsize
  \begin{tabular}{l}
    \\[-2mm]
    \hline
    \hline\\[-2mm]
    {\bf \small Rewrite Theory Of PIMP}\\
    \hline\\\\[-2mm]
    \vspace{1mm}
   \textbf{Abstract Syntax:}\\\\
$\mathit{E}::=0\mid1\mid2\mid\cdots\mid\mathit{Var}\mid-\mathit{E}\mid\mathit{E_1}+\mathit{E_2}\mid\mathit{E_1}-\mathit{E_2}
    \mid\mathit{E_1}*\mathit{E_2}\mid\mathit{E_1}\div\mathit{E_2}\mid\mathit{E_1} \;mod\;\mathit{E_2}$\\\\
 $\mathit{B}::=true\mid false\mid\mathit{E_1}=\mathit{E_2}\mid\mathit{E_1}\neq\mathit{E_2}\mid\mathit{E_1}<\mathit{E_2}\mid\mathit{E_1}>\mathit{E_2} $\\\\
$\mathit{CC}::=\mathit{Var}:=\mathit{E}\mid\mathit{if}\;(\mathit{B})\;\mathit{CC_1}\;\mathit{else }\;\mathit{CC_2}\mid \mathit{CC_1};\mathit{CC_2}\mid\mathit{Var}:= \mathit{Array}(\mathit{Seq}^{-,-}[E])\mid$\\
$\mathit{Var}[E_1]:=E_2\mid \mathit{Var_1}:=\mathit{Var_2}[E]\mid \mathit{skip}$\\\\
$\mathit{C}::=\mathit{Var}:=\mathit{E}\mid\mathit{if}\;(\mathit{B})\;\mathit{C_1}\;\mathit{else }\;\mathit{C_2}\mid
\mathit{C_1};\mathit{C_2}\mid\mathit{while}\; \mathit{B}\; \mathit{do}\; \mathit{C}\mid\mathit{C_1}\parallel\mathit{C_2}\mid\mathit{await }\;\mathit{B}\;\mathit{then}\;\mathit{CC}\mid$\\
$\mathit{skip}\mid\mathit{Var}:= \mathit{Array}(\mathit{Seq}^{-,-}[E])\mid \mathit{Var}[E_1]:=E_2\mid \mathit{Var_1}:=\mathit{Var_2}[E]$\\\\
\textbf{Configurtation:}\\\\
$\mathit{Cfg}::=<\!\!\mathit{C}\!\!>_k<\!\!\mathit{Env}\!\!>_{env}<\!\!\mathit{Mem}\!\!>_{mem}$ \\\\
$\mathit{Env}::=\mathit{Map}_.^{-,-}[\mathit{Var},\mathit{Int}]$ \\\\
$\mathit{Mem}::=\mathit{Map}_.^{-,-}[\mathit{Nat^+},\mathit{Int}]$ \\\\
\textbf{Semantic Rules:}\\\\
$\textrm{SKIP:}\dfrac{\cdot}{<\!\!skip\!\!>_k<\!\!\rho\!\!>_{env}<\!\!m\!\!>_{mem}\xrightarrow{P}<\!\!\cdot\!\!>_k<\!\!\rho\!\!>_{env}<\!\!m\!\!>_{mem}}$ \\\\
$\textrm{SEQ:}\dfrac{<\!\!c_1\!\!>_k<\!\!\rho\!\!>_{env}<\!\!m\!\!>_{mem}\xrightarrow{P}<\!\!c_1'\!\!>_k<\!\!\rho'\!\!>_{env}<\!\!m'\!\!>_{mem}}{<\!\!c_1;c_2\!\!>_k<\!\!\rho\!\!>_{env}<\!\!m\!\!>_{mem}\xrightarrow{P}<\!\!c_1';c_2\!\!>_k<\!\!\rho'\!\!>_{env}<\!\!m'\!\!>_{mem}}$\\
\\
$\textrm{ASGN1:}\dfrac{\cdot}{<\!\!x:=e\!\!>_k<\!\!\rho\!\!>_{env}<\!\!m\!\!>_{mem}\xrightarrow{P}<\!\!\cdot\!\!>_k<\!\!\rho[\rho(e)/x]\!\!>_{env}<\!\!m\!\!>_{mem}}$\\
\\
$\textrm{ASGN2:}\dfrac{\cdot}{<\!\!x:=A[e]\!\!>_k<\!\!\rho\!\!>_{env}<\!\!m\!\!>_{mem}\xrightarrow{P}<\!\!\cdot\!\!>_k<\!\!\rho[m(\rho(A)+_\mathit{Int}\rho(e))/x]\!\!>_{env}<\!\!m\!\!>_{mem}}$\\
\\
$\textrm{ASGN3:}\dfrac{\cdot}{<\!\!A[e_1]:=e_2\!\!>_k<\!\!\rho\!\!>_{env}<\!\!m\!\!>_{mem}\xrightarrow{P}<\!\!\cdot\!\!>_k<\!\!\rho\!\!>_{env}<\!\!m[\rho(e_2)/(\rho(A)+_\mathit{Int}\rho(e_1))]\!\!>_{mem}}$\\
\\
$\textrm{ARRAY:}\dfrac{\cdot}{<\!\!A:=\mathit{Array}(\overline{e})\!\!>_k<\!\!\rho\!\!>_{env}<\!\!m\!\!>_{mem}\xrightarrow{P}<\!\!\cdot\!\!>_k<\!\!\rho[p/A]\!\!>_{env}<\!\!p\mapsto[\rho(\overline{e})],m\!\!>_{mem}}$\\
\\
$\textrm{IF1:}\dfrac{\rho(b)\;\;is\;\;true}{<\!\!\mathit{if}\;(b)\;c_1\;\mathit{else}\;c_2\!\!>_k<\!\!\rho\!\!>_{env}<\!\!m\!\!>_{mem}\xrightarrow{P}<\!\!c_1\!\!>_k<\!\!\rho\!\!>_{env}<\!\!m\!\!>_{mem}}$\\
\\
$\textrm{IF2:}\dfrac{\rho(b)\;\;is\;\;false}{<\!\!\mathit{if }\;(b)\;c_1\;\mathit{else}\;c_2\!\!>_k<\!\!\rho\!\!>_{env}<\!\!m\!\!>_{mem}\xrightarrow{P}<\!\!c_2\!\!>_k<\!\!\rho\!\!>_{env}<\!\!m\!\!>_{mem}}$\\
\\
\textrm{WHILE1:}$\dfrac{\rho(b)\;\;is\;\;true}{<\!\!\mathit{while }\;b\;\mathit{ do }\;c\!\!>_k<\!\!\rho\!\!>_{env}<\!\!m\!\!>_{mem}\xrightarrow{P}<\!\!c;\mathit{while }\;b\;\mathit{ do }\;c\!\!>_k<\!\!\rho\!\!>_{env}<\!\!m\!\!>_{mem}}$\\
\\
\textrm{WHILE2:}$\dfrac{\rho(b)\;\;is\;\;false}{<\!\!\mathit{while }\;b\;\mathit{ do }\;c\!\!>_k<\!\!\rho\!\!>_{env}<\!\!m\!\!>_{mem}\xrightarrow{P}<\!\!\cdot\!\!>_k<\!\!\rho\!\!>_{env}<\!\!m\!\!>_{mem}}$\\
\\
\textrm{AWAIT:}$\dfrac{\rho(b)\;\;is\;\;true,<\!\!cc\!\!>_k<\!\!\rho\!\!>_{env}<\!\!m\!\!>_{mem}\xrightarrow{P}^*<\!\!\cdot\!\!>_k<\!\!\rho'\!\!>_{env}<\!\!m'\!\!>_{mem}}{<\!\!\mathit{await }\;b\;\mathit{ then } \; cc\!\!>_k<\!\!\rho\!\!>_{env}<\!\!m\!\!>_{mem}\xrightarrow{P}<\!\!\cdot\!\!>_k<\!\!\rho'\!\!>_{env}<\!\!m'\!\!>_{mem}}$\\
\\
\textrm{PAR1}$\dfrac{<\!\!c_1\!\!>_k<\!\!\rho\!\!>_{env}<\!\!m\!\!>_{mem}\xrightarrow{P}<\!\!c_1'\!\!>_k<\!\!\rho'\!\!>_{env}<\!\!m'\!\!>_{mem}}{<\!\!c_1\parallel c_2\!\!>_k<\!\!\rho\!\!>_{env}<\!\!m\!\!>_{mem}\xrightarrow{P}<\!\!c_1'\parallel c_2\!\!>_k<\!\!\rho'\!\!>_{env}<\!\!m'\!\!>_{mem}}$\\
\\
\textrm{PAR2}$\dfrac{<\!\!c_2\!\!>_k<\!\!\rho\!\!>_{env}<\!\!m\!\!>_{mem}\xrightarrow{P}<\!\!c_2'\!\!>_k<\!\!\rho'\!\!>_{env}<\!\!m'\!\!>_{mem}}{<\!\!c_1\parallel c_2\!\!>_k<\!\!\rho\!\!>_{env}<\!\!m\!\!>_{mem}\xrightarrow{P}<\!\!c_1\parallel c_2'\!\!>_k<\!\!\rho'\!\!>_{env}<\!\!m'\!\!>_{mem}}$\\
\\
\textrm{ENV}$\dfrac{\cdot}{<\!\!c\!\!>_k<\!\!\rho\!\!>_{env}<\!\!m\!\!>_{mem}\xrightarrow{E}<\!\!c\!\!>_k<\!\!\rho'\!\!>_{env}<\!\!m'\!\!>_{mem}}$\\
 \\
    \hline
    \hline
  \end{tabular}
  \caption{Rewrite Theory Of PIMP}\label{Fig1}
\end{figure}
The $\mathcal{E}_\mathrm{PIMP}$ contains equations which define bags, sequences and maps. We do not list these equations explicitly in Figure 2 because our main goal is to give PIMP an operational semantics in the rewriting logic framework.
The semantic rules\cite{reynolds2009theories} are the core of a rewrite semantics, and usually each language construct has at least one semantic rule.
$+, -, \ast, \div,\;\mathit{mod}$ constructs are reduced to the domain  $+_\mathrm{Int}, -_\mathrm{Int}, \ast_\mathrm{Int}, \div_\mathrm{Int}$, $mod_\mathrm{Int}$ when its arguments become integers and the $=, \neq, <, >$ constructs are reduced to the domain $=_\mathrm{Bool}, \neq_\mathrm{Bool}, <_\mathrm{Bool}, >_\mathrm{Bool}$ when its arguments become integers. $\mathit{PAR1}$ and $\mathit{PAR2}$ semantic rules use non-determinism to simulate parallelism, but they are defined in such a way that the results are equivalent to those which would be obtained using true parallelism. $\mathit{ARRAY}$ semantic rule chose some arbitrary positive integer $p$ such that ($p\mapsto[\rho(\overline{e})],m$) is a well-formed map and update
the environment and the memory accordingly where $p\mapsto[\rho(\overline{e})]$ is a shorthand for $p\mapsto\rho(e_1),p+1\mapsto\rho(e_2),\cdots,p+n-1\mapsto\rho(e_n)$ and $\overline{e}\in \mathit{Seq}^{-,-}[E]$ and $\rho(\overline{e})\in \mathit{Seq}^{-,-}[Int]$. \\
It has been suggested\cite{barringer1984now}\cite{abrahamson1979modal} that a computation should be thought of as being executed in an arbitrary "state", that is, in parallel with other computations. Therefore, there are two types of semantic rules, $<\!\!c\!\!>_k<\!\!\rho\!\!>_{env}<\!\!m\!\!>_{mem}\xrightarrow{P}<\!\!c'\!\!>_k<\!\!\rho'\!\!>_{env}<\!\!m'\!\!>_{mem}$ represents update of the environment $\rho$ to $\rho'$ and memory $m$ to $m'$ because $c$ is executed; $<\!\!c\!\!>_k<\!\!\rho\!\!>_{env}<\!\!m\!\!>_{mem}\xrightarrow{E}<\!\!c\!\!>_k<\!\!\rho'\!\!>_{env}<\!\!m'\!\!>_{mem}$ represents update of the environment $\rho$ to $\rho'$ and memory $m$ to $m'$ because other computation is executed which is in parallel with $c$. We can now formally define the operational semantics of PIMP as a rewrite logic theory.
\begin{definition}
Let PIMP denotes the rewriting logic theory $(\Sigma_\mathrm{PIMP},\mathcal{E}_\mathrm{PIMP},\mathcal{R}_\mathrm{PIMP})$ in Figure 2, $L=P$ or $L=E$, $\mathrm{PIMP}\models t\xrightarrow{L}t'$ indicates that $t\xrightarrow{L}t'$ can be derived. $\mathrm{PIMP}\models t\xrightarrow{L}^* t'$ indicates that the rule $t\xrightarrow{L} t'$ can be derived in zero or more steps.
\end{definition}
Computation $k$ is \emph{well-terminated} iff it is equal to an integer value or to "$\cdot$". Let $\mathrm{PIMP}^o$ be the algebraic specification $(\Sigma_\mathrm{PIMP},\mathcal{E}_\mathrm{PIMP}^o)$ where $\mathcal{E}_\mathrm{PIMP}^o\subseteq \mathcal{E}_\mathrm{PIMP}$ contains equations defining bags, sequences and maps.
Let $\mathcal{T}^o$ be the initial $\mathrm{PIMP}^o$ algebra. Terms $<\!\!k\!\!>_k<\!\!\rho\!\!>_{env}<\!\!m\!\!>_{mem}$ in $\mathcal{T}^o$ of sort $\mathit{Cfg}$ are called \emph{concrete configurations}. If $k$ is well-terminated, concrete configurations $<\!\!k\!\!>_k<\!\!\rho\!\!>_{env}<\!\!m\!\!>_{mem}$ are called \emph{final configurations}.
\begin{definition}
An execution of concrete configuration $<\!\!k_0\!\!>_k<\!\!\rho_0\!\!>_{env}<\!\!m_0\!\!>_{mem}$ is any finite or infinite sequence of the form: $\sigma\equiv<\!\!k_0\!\!>_k<\!\!\rho_0\!\!>_{env}<\!\!m_0\!\!>_{mem}\xrightarrow{L}<\!\!k_1\!\!>_k<\!\!\rho_1\!\!>_{env}<\!\!m_1\!\!>_{mem}\xrightarrow{L}\cdots\xrightarrow{L}<\!\!k_i\!\!>_k<\!\!\rho_i\!\!>_{env}<\!\!m_i\!\!>_{mem}\xrightarrow{L}<\!\!k_{i+1}\!\!>_k<\!\!\rho_{i+1}\!\!>_{env}<\!\!m_{i+1}\!\!>_{mem}\xrightarrow{L}\cdots$.
If the sequence is finite and there exits $j\in Int$, for all $j'\geq j$, $<\!\!k_{j'}\!\!>_k<\!\!\rho_{j'}\!\!>_{env}<\!\!m_{j'}\!\!>_{mem}$ is final configuration, we call $\sigma$ terminable, otherwise, $\sigma$ no terminable.
\end{definition}
For example $c_1\equiv x:=3; \mathit{if}(x>0)\; x:=x+1\;\mathit{else}\; x:=x-1$ and $c_2\equiv \mathit{await}\; x>0 \; \mathit{then}\; x:=10$, then a terminable execution of $<\!\!c_1\Vert c_2\!\!>_k<\!\!x\mapsto 0\!\!>_{env}<\!\!\cdot\!\!>_{mem}$ is :
$$<\!\!c_1\Vert c_2\!\!>_k<\!\!x\mapsto 0\!\!>_{env}<\!\!\cdot\!\!>_{mem}\xrightarrow{P}$$
$$<\!\!\mathit{if}(x>0) x:=x+1\;\mathit{else}\; x:=x-1\Vert\; \mathit{await}\; x>0 \; \mathit{then}\;x:=10\!\!>_k<\!\!x\mapsto 3\!\!>_{env}<\!\!\cdot\!\!>_{mem}\xrightarrow{P}$$
$$<\!\!\mathit{if}(x>0)\; x:=x+1\;\mathit{else}\; x:=x-1\Vert \cdot\!\!>_k<\!\!x\mapsto 10\!\!>_{env}<\!\!\cdot\!\!>_{mem}\xrightarrow{P}$$
$$<\!\! x:=x+1\!\!>_k<\!\!x\mapsto 10\!\!>_{env}<\!\!\cdot\!\!>_{mem}\xrightarrow{P}<\!\!\cdot\!\!>_k<\!\!x\mapsto 11\!\!>_{env}<\!\!\cdot\!\!>_{mem}$$
which can be broken down into two terminable executions:
$$<\!\!c_1\!\!>_k<\!\!x\mapsto 0\!\!>_{env}<\!\!\cdot\!\!>_{mem}\xrightarrow{P}<\!\!\mathit{if}(x>0) x:=x+1\;\mathit{else}\; x:=x-1\!\!>_k<\!\!x\mapsto 3\!\!>_{env}<\!\!\cdot\!\!>_{mem}$$
$$\xrightarrow{E}<\!\!\mathit{if}(x>0) x:=x+1\;\mathit{else}\; x:=x-1\!\!>_k<\!\!x\mapsto 10\!\!>_{env}<\!\!\cdot\!\!>_{mem}\xrightarrow{P}$$
$$<\!\! x:=x+1\!\!>_k<\!\!x\mapsto 10\!\!>_{env}<\!\!\cdot\!\!>_{mem}\xrightarrow{P}<\!\! \cdot\!\!>_k<\!\!x\mapsto 11\!\!>_{env}<\!\!\cdot\!\!>_{mem}$$
and
$$<\!\!c_2\!\!>_k<\!\!x\mapsto 0\!\!>_{env}<\!\!\cdot\!\!>_{mem}\xrightarrow{E}<\!\!c_2\!\!>_k<\!\!x\mapsto 3\!\!>_{env}<\!\!\cdot\!\!>_{mem}\xrightarrow{P}<\!\!\cdot\!\!>_k<\!\!x\mapsto 10\!\!>_{env}<\!\!\cdot\!\!>_{mem}$$
$$\xrightarrow{E}<\!\!\cdot\!\!>_k<\!\!x\mapsto 10\!\!>_{env}<\!\!\cdot\!\!>_{mem}\xrightarrow{E}<\!\!\cdot\!\!>_k<\!\!x\mapsto 11\!\!>_{env}<\!\!\cdot\!\!>_{mem}
$$
\begin{proposition}\label{prop1}
For any $c_1, c_2\in \mathit{C}$, if $\sigma$ is an execution of concrete configuration $<\!\!c_1\Vert c_2\!\!>_k<\!\!\rho\!\!>_{env}<\!\!m\!\!>_{mem}$ and $\sigma$ is terminable, then $\sigma$ can be broken down into two terminable executions $\sigma_1, \sigma_2$, which are executions of concrete configurations $<\!\!c_1\!\!>_k<\!\!\rho\!\!>_{env}<\!\!m\!\!>_{mem}$ and $<\!\! c_2\!\!>_k<\!\!\rho\!\!>_{env}<\!\!m\!\!>_{mem}$ respectively.
\end{proposition}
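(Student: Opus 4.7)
The plan is to proceed by induction on the length of the terminable execution $\sigma$, building $\sigma_1$ and $\sigma_2$ step-by-step as projections of $\sigma$ onto the two component computations. The key structural observation is that in the rewrite theory of Figure~2, the only semantic rules whose left-hand side matches a configuration of the form $<\!\!c_1 \Vert c_2\!\!>_k<\!\!\rho\!\!>_{env}<\!\!m\!\!>_{mem}$ are \textrm{PAR1}, \textrm{PAR2}, and \textrm{ENV} (together with the equations $\cdot\Vert c=c=c\Vert\;\cdot$, which only apply when one side is already well-terminated). Therefore every step of $\sigma$ falls into exactly one of three cases.

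For the base case, a length-zero execution is already final, and by the equations $\cdot\Vert c=c=c\Vert\;\cdot$ the only way a term $<\!\!c_1\Vert c_2\!\!>_k\cdots$ is well-terminated is when both $c_1$ and $c_2$ reduce modulo $\mathcal{E}_{\mathrm{PIMP}}^o$ to $\cdot$ (or an integer), so the empty projections $\sigma_1,\sigma_2$ are themselves terminable executions. For the inductive step, I would perform case analysis on the first transition of $\sigma$:
\begin{itemize}
\item If it is a \textrm{PAR1} step, then by the premise of \textrm{PAR1} there is a derivable $P$-step $<\!\!c_1\!\!>_k<\!\!\rho\!\!>_{env}<\!\!m\!\!>_{mem}\xrightarrow{P}<\!\!c_1'\!\!>_k<\!\!\rho'\!\!>_{env}<\!\!m'\!\!>_{mem}$; prepend this to $\sigma_1$, and prepend an \textrm{ENV} step $<\!\!c_2\!\!>_k<\!\!\rho\!\!>_{env}<\!\!m\!\!>_{mem}\xrightarrow{E}<\!\!c_2\!\!>_k<\!\!\rho'\!\!>_{env}<\!\!m'\!\!>_{mem}$ to $\sigma_2$.
\item If it is a \textrm{PAR2} step, symmetrically: prepend a $P$-step to $\sigma_2$ and an \textrm{ENV} step to $\sigma_1$.
\item If it is an \textrm{ENV} step, prepend an \textrm{ENV} step on the new $\rho',m'$ to both $\sigma_1$ and $\sigma_2$.
\end{itemize}
Apply the induction hypothesis to the tail of $\sigma$ (which is itself terminable, starting from the successor configuration) to obtain the projected tails; concatenating with the prepended steps yields $\sigma_1,\sigma_2$ as executions of $<\!\!c_1\!\!>_k<\!\!\rho\!\!>_{env}<\!\!m\!\!>_{mem}$ and $<\!\!c_2\!\!>_k<\!\!\rho\!\!>_{env}<\!\!m\!\!>_{mem}$ respectively.

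Termination of the projections follows from the base-case observation applied to the final configuration of $\sigma$: both component computations end in $\cdot$. The main obstacle I anticipate is the \textrm{AWAIT} rule, whose premise $<\!\!cc\!\!>_k\cdots\xrightarrow{P}^*<\!\!\cdot\!\!>_k\cdots$ is itself a multi-step reduction, raising the worry that an \textrm{AWAIT} inside $c_1$ (executed via \textrm{PAR1}) might be interleaved with actions of $c_2$. However, in the rewrite theory each \textrm{AWAIT} constitutes a single top-level $\xrightarrow{P}$ transition (its internal $\xrightarrow{P}^*$ reduction is discharged as a side condition, not as part of the outer trace), so no interleaving of $c_2$ is possible during that step; hence projecting it onto $\sigma_1$ as a single \textrm{AWAIT} $P$-step and onto $\sigma_2$ as a single \textrm{ENV} step is sound. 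A secondary subtlety is that equational rewriting by $\cdot\Vert c=c=c\Vert\;\cdot$ may collapse the $\Vert$ in $\sigma$ before one component has literally reached $\cdot$ as a syntactic $C$-term; this is handled by working modulo $\mathcal{E}_{\mathrm{PIMP}}^o$ throughout, so that the induction is performed on equivalence classes of configurations.
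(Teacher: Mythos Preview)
Your proposal is correct and takes essentially the same approach as the paper: both argue by a step-by-step projection of $\sigma$, sending a \textrm{PAR1} step to a $P$-step in $\sigma_1$ and an $E$-step in $\sigma_2$, a \textrm{PAR2} step symmetrically, and an \textrm{ENV} step to $E$-steps in both components. The paper presents this as a direct construction over the indexed sequence rather than as an explicit induction on length, and it does not discuss the \textrm{AWAIT} or equational-collapse subtleties you raise, but the underlying argument is the same.
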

\begin{proof}
Suppose
$$\sigma\equiv <\!\!c_{10}\Vert c_{20}\!\!>_k<\!\!\rho_0\!\!>_{env}<\!\!m_0\!\!>_{mem}\xrightarrow{L}<\!\!c_{11}\Vert c_{21}\!\!>_k<\!\!\rho_1\!\!>_{env}<\!\!m_1\!\!>_{mem}\xrightarrow{L}\cdots\xrightarrow{L}$$
$$<\!\!c_{1i}\Vert c_{2i}\!\!>_k<\!\!\rho_i\!\!>_{env}<\!\!m_i\!\!>_{mem}\xrightarrow{L}\cdots\xrightarrow{L}<\!\!c_{1n}\Vert c_{2n}\!\!>_k<\!\!\rho_{n}\!\!>_{env}<\!\!m_n\!\!>_{mem}
$$
where $c_{10}=c_1$ and $c_{20}=c_2$ and $\rho_0=\rho$ and $m_0=m$.\\
For $1\leq i\leq n$, $<\!\!c_{1(i-1)}\Vert c_{2(i-1)}\!\!>_k<\!\!\rho_{i-1}\!\!>_{env}<\!\!m_{i-1}\!\!>_{mem}\xrightarrow{L}<\!\!c_{1i}\Vert c_{2i}\!\!>_k<\!\!\rho_i\!\!>_{env}<\!\!m_i\!\!>_{mem}$, if $L=P$, there are two ways in which the inference is done.\\
\textbf{Case1:} By PAR1 rule,
$$
\dfrac{<\!\!c_{1(i-1)}\!\!>_k<\!\!\rho_{i-1}\!\!>_{env}<\!\!m_{i-1}\!\!>_{mem}\xrightarrow{P}<\!\!c_{1i}\!\!>_k<\!\!\rho_i\!\!>_{env}<\!\!m_i\!\!>_{mem}}{<\!\!c_{1(i-1)}\Vert c_{2(i-1)}\!\!>_k<\!\!\rho_{i-1}\!\!>_{env}<\!\!m_{i-1}\!\!>_{mem}\xrightarrow{P}<\!\!c_{1i}\parallel c_{2(i-1)}\!\!>_k<\!\!\rho_i\!\!>_{env}<\!\!m_i\!\!>_{mem}}
$$
Set
$$<\!\!c_{1(i-1)}\!\!>_k<\!\!\rho_{i-1}\!\!>_{env}<\!\!m_{i-1}\!\!>_{mem}\xrightarrow{P}<\!\!c_{1i}\!\!>_k<\!\!\rho_i\!\!>_{env}<\!\!m_i\!\!>_{mem}$$
$$<\!\!c_{2(i-1)}\!\!>_k<\!\!\rho_{i-1}\!\!>_{env}<\!\!m_{i-1}\!\!>_{mem}\xrightarrow{E}<\!\!c_{2(i-1)}\!\!>_k<\!\!\rho_i\!\!>_{env}<\!\!m_i\!\!>_{mem}$$
\textbf{Case2:} By PAR2 rule,
$$
\dfrac{<\!\!c_{2(i-1)}\!\!>_k<\!\!\rho_{i-1}\!\!>_{env}<\!\!m_{i-1}\!\!>_{mem}\xrightarrow{P}<\!\!c_{2i}\!\!>_k<\!\!\rho_i\!\!>_{env}<\!\!m_i\!\!>_{mem}}{<\!\!c_{1(i-1)}\Vert c_{2(i-1)}\!\!>_k<\!\!\rho_{i-1}\!\!>_{env}<\!\!m_{i-1}\!\!>_{mem}\xrightarrow{P}<\!\!c_{1(i-1)}\parallel c_{2i}\!\!>_k<\!\!\rho_i\!\!>_{env}<\!\!m_i\!\!>_{mem}}
$$
Set
$$<\!\!c_{1(i-1)}\!\!>_k<\!\!\rho_{i-1}\!\!>_{env}<\!\!m_{i-1}\!\!>_{mem}\xrightarrow{E}<\!\!c_{1(i-1)}\!\!>_k<\!\!\rho_i\!\!>_{env}<\!\!m_i\!\!>_{mem}$$
$$<\!\!c_{2(i-1)}\!\!>_k<\!\!\rho_{i-1}\!\!>_{env}<\!\!m_{i-1}\!\!>_{mem}\xrightarrow{P}<\!\!c_{2i}\!\!>_k<\!\!\rho_i\!\!>_{env}<\!\!m_i\!\!>_{mem}$$
If $L=E$, set
$$<\!\!c_{1(i-1)}\!\!>_k<\!\!\rho_{i-1}\!\!>_{env}<\!\!m_{i-1}\!\!>_{mem}\xrightarrow{E}<\!\!c_{1(i-1)}\!\!>_k<\!\!\rho_i\!\!>_{env}<\!\!m_i\!\!>_{mem}$$
$$<\!\!c_{2(i-1)}\!\!>_k<\!\!\rho_{i-1}\!\!>_{env}<\!\!m_{i-1}\!\!>_{mem}\xrightarrow{E}<\!\!c_{2(i-1)}\!\!>_k<\!\!\rho_i\!\!>_{env}<\!\!m_i\!\!>_{mem}
$$
Hence, $\sigma$ can be broken down into two terminable executions $\sigma_1, \sigma_2$, which are executions of concrete configurations $<\!\!c_1\!\!>_k<\!\!\rho\!\!>_{env}<\!\!m\!\!>_{mem}$ and $<\!\! c_2\!\!>_k<\!\!\rho\!\!>_{env}<\!\!m\!\!>_{mem}$ respectively.
\end{proof}
\begin{lemma}
If $<\!\!c_1\!\!>_k<\!\!\rho\!\!>_{\mathit{env}}<\!\!m\!\!>_{\mathit{mem}}\xrightarrow{L}^n<\!\!c_1'\!\!>_k<\!\!\rho'\!\!>_{\mathit{env}}<\!\!m'\!\!>_{\mathit{mem}}$, then $<\!\!c_1;c_2\!\!>_k<\!\!\rho\!\!>_{\mathit{env}}<\!\!m\!\!>_{\mathit{mem}}\xrightarrow{L}^n<\!\!c_1';c_2\!\!>_k<\!\!\rho'\!\!>_{\mathit{env}}<\!\!m'\!\!>_{\mathit{mem}}$, $n\in Int $.
\end{lemma}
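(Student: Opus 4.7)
The plan is a straightforward induction on $n$, the number of steps in the derivation $<\!\!c_1\!\!>_k<\!\!\rho\!\!>_{\mathit{env}}<\!\!m\!\!>_{\mathit{mem}}\xrightarrow{L}^n<\!\!c_1'\!\!>_k<\!\!\rho'\!\!>_{\mathit{env}}<\!\!m'\!\!>_{\mathit{mem}}$. The base case $n=0$ is immediate, since then $c_1=c_1'$, $\rho=\rho'$, $m=m'$, and $<\!\!c_1;c_2\!\!>_k<\!\!\rho\!\!>_{\mathit{env}}<\!\!m\!\!>_{\mathit{mem}}$ reaches itself in zero steps.

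For the inductive step, assume the claim for $n$ and consider an $(n{+}1)$-step derivation. Split it as one step followed by $n$ steps, writing $<\!\!c_1\!\!>_k<\!\!\rho\!\!>_{\mathit{env}}<\!\!m\!\!>_{\mathit{mem}}\xrightarrow{L}<\!\!c_1''\!\!>_k<\!\!\rho''\!\!>_{\mathit{env}}<\!\!m''\!\!>_{\mathit{mem}}\xrightarrow{L}^n<\!\!c_1'\!\!>_k<\!\!\rho'\!\!>_{\mathit{env}}<\!\!m'\!\!>_{\mathit{mem}}$. I distinguish the two possible labels. If $L=P$, I apply the SEQ rule to the first step, obtaining $<\!\!c_1;c_2\!\!>_k<\!\!\rho\!\!>_{\mathit{env}}<\!\!m\!\!>_{\mathit{mem}}\xrightarrow{P}<\!\!c_1'';c_2\!\!>_k<\!\!\rho''\!\!>_{\mathit{env}}<\!\!m''\!\!>_{\mathit{mem}}$, and then invoke the induction hypothesis on the remaining $n$ steps to push the sequence to $<\!\!c_1';c_2\!\!>_k<\!\!\rho'\!\!>_{\mathit{env}}<\!\!m'\!\!>_{\mathit{mem}}$. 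If $L=E$, the first step leaves the computation unchanged ($c_1''=c_1$), and the ENV rule applies uniformly to any computation, so I obtain $<\!\!c_1;c_2\!\!>_k<\!\!\rho\!\!>_{\mathit{env}}<\!\!m\!\!>_{\mathit{mem}}\xrightarrow{E}<\!\!c_1;c_2\!\!>_k<\!\!\rho''\!\!>_{\mathit{env}}<\!\!m''\!\!>_{\mathit{mem}}$ directly, and finish again by the induction hypothesis.

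Since the statement is silent about which label $L$ appears at each step, I implicitly read $\xrightarrow{L}^n$ as a uniform sequence of either all $P$-steps or all $E$-steps, which matches the notation used in Definition~1 and Proposition~\ref{prop1}; if instead mixed labels are intended, the argument is the same, dispatching on the label of the first step at each inductive level.

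There is no real obstacle: the SEQ rule was designed to lift $P$-transitions of $c_1$ to $c_1;c_2$, and ENV is indifferent to the code fragment. The only point requiring a brief sanity check is the $L=E$ case, where one must observe that ENV applied at $c_1;c_2$ produces exactly the same environment/memory update as ENV applied at $c_1$, so that no information about the step is lost when lifting.
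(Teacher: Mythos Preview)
Your proposal is correct and follows essentially the same approach as the paper: induction on $n$, with the base case $n=0$ trivial and the inductive step handled by peeling off the first transition, dispatching on whether its label is $P$ (use SEQ) or $E$ (use ENV, noting $c_1''=c_1$), and then invoking the induction hypothesis on the remaining steps. The only cosmetic difference is that the paper applies the induction hypothesis to the tail first and then lifts the head step, whereas you lift the head step first and then apply the hypothesis; the logic is identical.
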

\begin{proof}
Suppose
$$
<\!\!c_1\!\!>_k<\!\!\rho\!\!>_{\mathit{env}}<\!\!m\!\!>_{\mathit{mem}}\xrightarrow{L}^n<\!\!c_1'\!\!>_k<\!\!\rho'\!\!>_{\mathit{env}}<\!\!m'\!\!>_{\mathit{mem}}, n\in Int
$$
If $n=0$, then
$$
<\!\!c_1\!\!>_k<\!\!\rho\!\!>_{\mathit{env}}<\!\!m\!\!>_{\mathit{mem}}\xrightarrow{L}^0<\!\!c_1\!\!>_k<\!\!\rho\!\!>_{\mathit{env}}<\!\!m\!\!>_{\mathit{mem}}
$$
Obviously,
$$
<\!\!c_1;c_2\!\!>_k<\!\!\rho\!\!>_{\mathit{env}}<\!\!m\!\!>_{\mathit{mem}}\xrightarrow{L}^0<\!\!c_1;c_2\!\!>_k<\!\!\rho\!\!>_{\mathit{env}}<\!\!m\!\!>_{\mathit{mem}}
$$
If $n\neq0$, we prove by induction on the length of execution path, that is induction on $n$.
$$
<\!\!c_1\!\!>_k<\!\!\rho\!\!>_{\mathit{env}}<\!\!m\!\!>_{\mathit{mem}}\xrightarrow{L}<\!\!c_1''\!\!>_k<\!\!\rho''\!\!>_{\mathit{env}}<\!\!m''\!\!>_{\mathit{mem}}\xrightarrow{L}^{n-1}<\!\!c_1'\!\!>_k<\!\!\rho'\!\!>_{\mathit{env}}<\!\!m'\!\!>_{\mathit{mem}}
$$
By the induction hypothesis of
$$
<\!\!c_1''\!\!>_k<\!\!\rho''\!\!>_{\mathit{env}}<\!\!m''\!\!>_{\mathit{mem}}\xrightarrow{L}^{n-1}<\!\!c_1'\!\!>_k<\!\!\rho'\!\!>_{\mathit{env}}<\!\!m'\!\!>_{\mathit{mem}}
$$
we conclude
$$
<\!\!c_1'';c_2\!\!>_k<\!\!\rho''\!\!>_{\mathit{env}}<\!\!m''\!\!>_{\mathit{mem}}\xrightarrow{L}^{n-1}<\!\!c_1';c_2\!\!>_k<\!\!\rho'\!\!>_{\mathit{env}}<\!\!m'\!\!>_{\mathit{mem}}
$$
If $L$ in $<\!\!c_1\!\!>_k<\!\!\rho\!\!>_{\mathit{env}}<\!\!m\!\!>_{\mathit{mem}}\xrightarrow{L}<\!\!c_1''\!\!>_k<\!\!\rho''\!\!>_{\mathit{env}}<\!\!m''\!\!>_{\mathit{mem}}$ is $P$, then SEQ rule implies
$$
<\!\!c_1;c_2\!\!>_k<\!\!\rho\!\!>_{\mathit{env}}<\!\!m\!\!>_{\mathit{mem}}\xrightarrow{L}<\!\!c_1'';c_2\!\!>_k<\!\!\rho''\!\!>_{\mathit{env}}<\!\!m''\!\!>_{\mathit{mem}}
$$
If $L$ in $<\!\!c_1\!\!>_k<\!\!\rho\!\!>_{\mathit{env}}<\!\!m\!\!>_{\mathit{mem}}\xrightarrow{L}<\!\!c_1''\!\!>_k<\!\!\rho''\!\!>_{\mathit{env}}<\!\!m''\!\!>_{\mathit{mem}}$ is $E$, then $c_1''=c_1$ and
$$
<\!\!c_1;c_2\!\!>_k<\!\!\rho\!\!>_{\mathit{env}}<\!\!m\!\!>_{\mathit{mem}}\xrightarrow{L}<\!\!c_1'';c_2\!\!>_k<\!\!\rho''\!\!>_{\mathit{env}}<\!\!m''\!\!>_{\mathit{mem}}
$$
Hence, $<\!\!c_1;c_2\!\!>_k<\!\!\rho\!\!>_{\mathit{env}}<\!\!m\!\!>_{\mathit{mem}}\xrightarrow{L}^n<\!\!c_1';c_2\!\!>_k<\!\!\rho'\!\!>_{\mathit{env}}<\!\!m'\!\!>_{\mathit{mem}}$.
\end{proof}
\begin{proposition} $<\!\!c_1;c_2\!\!>_k<\!\!\rho\!\!>_{\mathit{env}}<\!\!m\!\!>_{\mathit{mem}}\xrightarrow{L}^*<\!\!\cdot\!\!>_k<\!\!\rho'\!\!>_{\mathit{env}}<\!\!m'\!\!>_{\mathit{mem}}$ for some final configuration $<\!\!\cdot\!\!>_k<\!\!\rho'\!\!>_{\mathit{env}}<\!\!m'\!\!>_{\mathit{mem}}$ iff there exits some final configuration $<\!\!\cdot\!\!>_k<\!\!\rho''\!\!>_{\mathit{env}}<\!\!m''\!\!>_{\mathit{mem}}$ such that $<\!\!c_1\!\!>_k<\!\!\rho\!\!>_{\mathit{env}}<\!\!m\!\!>_{\mathit{mem}}\xrightarrow{L}^*<\!\!\cdot\!\!>_k<\!\!\rho''\!\!>_{\mathit{env}}<\!\!m''\!\!>_{\mathit{mem}}$ and $<\!\!c_2\!\!>_k<\!\!\rho''\!\!>_{\mathit{env}}<\!\!m''\!\!>_{\mathit{mem}}\xrightarrow{L}^*<\!\!\cdot\!\!>_k<\!\!\rho'\!\!>_{\mathit{env}}<\!\!m'\!\!>_{\mathit{mem}}$.
\end{proposition}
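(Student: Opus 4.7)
The plan is to prove the two directions separately. The $(\Leftarrow)$ direction follows almost directly from the preceding Lemma, whereas the $(\Rightarrow)$ direction requires induction on the length of the execution combined with a case analysis on the first rule applied.

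For the $(\Leftarrow)$ direction, suppose $<\!\!c_1\!\!>_k<\!\!\rho\!\!>_{env}<\!\!m\!\!>_{mem}\xrightarrow{L}^{*}<\!\!\cdot\!\!>_k<\!\!\rho''\!\!>_{env}<\!\!m''\!\!>_{mem}$ and $<\!\!c_2\!\!>_k<\!\!\rho''\!\!>_{env}<\!\!m''\!\!>_{mem}\xrightarrow{L}^{*}<\!\!\cdot\!\!>_k<\!\!\rho'\!\!>_{env}<\!\!m'\!\!>_{mem}$. First I would invoke the preceding Lemma (instantiated with $c_1'=\cdot$) to obtain $<\!\!c_1;c_2\!\!>_k<\!\!\rho\!\!>_{env}<\!\!m\!\!>_{mem}\xrightarrow{L}^{*}<\!\!\cdot\;;c_2\!\!>_k<\!\!\rho''\!\!>_{env}<\!\!m''\!\!>_{mem}$, then apply the equation $\cdot\;;c=c$ from $\mathcal{E}_{\mathrm{PIMP}}$ to identify this configuration with $<\!\!c_2\!\!>_k<\!\!\rho''\!\!>_{env}<\!\!m''\!\!>_{mem}$, and finally concatenate with the hypothesized execution of $c_2$.

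For the $(\Rightarrow)$ direction, I would proceed by induction on the number $n$ of steps in the execution $<\!\!c_1;c_2\!\!>_k<\!\!\rho\!\!>_{env}<\!\!m\!\!>_{mem}\xrightarrow{L}^{n}<\!\!\cdot\!\!>_k<\!\!\rho'\!\!>_{env}<\!\!m'\!\!>_{mem}$. Inspecting the rules of Figure 2, the only rules that can fire when the code is an honest sequential composition are SEQ and ENV, since PAR, SKIP, assignment and control-flow rules all require a different head constructor. If the first step is ENV, then the computation is unchanged and I would replay the same ENV step on $c_1$ alone, reducing to the induction hypothesis on the remaining $n-1$ steps. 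If the first step is SEQ, giving a $P$-step $<\!\!c_1\!\!>_k\xrightarrow{P}<\!\!c_1'\!\!>_k$, I would split on whether $c_1'=\cdot$: in that case, the equation $\cdot\;;c_2=c_2$ lets me read the remaining $n-1$ steps as an execution of $c_2$, so the post-step state serves as $(\rho'',m'')$; otherwise, the residual is $<\!\!c_1';c_2\!\!>_k$, the induction hypothesis on the shorter execution yields an execution of $c_1'$ followed by one of $c_2$, and prepending the SEQ-induced step produces the required execution of $c_1$.

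The main obstacle will be the interaction between the syntactic decomposition and the equation $\cdot\;;c=c$, since the transition relation is taken modulo $\mathcal{E}_{\mathrm{PIMP}}$: I must choose the splitting moment to be precisely the first time the left operand becomes $\cdot$, and I must handle the base case $n=0$ carefully (the configuration $<\!\!c_1;c_2\!\!>_k$ is final only when both $c_1$ and $c_2$ are equal to $\cdot$, in which case the empty executions of $c_1$ and $c_2$ trivially witness the split). A minor side obligation is checking that the ENV rule, whose premise is $\cdot$, is always available on $c_1$ alone, which is immediate.
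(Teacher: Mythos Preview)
Your approach is correct and takes a genuinely different route from the paper. For the $(\Leftarrow)$ direction, both you and the paper invoke the preceding Lemma in the same way. For the $(\Rightarrow)$ direction, however, the paper proceeds by \emph{structural induction on $c_1$}, producing ten cases (one per syntactic form: \textit{skip}, the three assignments, \textit{array}, \textit{if}, \textit{while}, \textit{await}, sequential composition, parallel), with the \textit{while} and parallel cases requiring an auxiliary well-founded induction on the length of the execution and some rather delicate substructure reasoning. Your approach instead uses a single induction on the number $n$ of steps with only a two-way split on the first transition (ENV versus SEQ), which is the standard small-step decomposition argument. Your route is shorter and more uniform---it avoids re-examining every language construct and sidesteps the ad~hoc substructure bookkeeping the paper needs in Case~9---at the modest cost of having to be explicit about the modulo-$\mathcal{E}_{\mathrm{PIMP}}$ issues you already flagged (the collapse $\cdot\,;c=c$ and the degenerate cases $c_1=\cdot$ or $c_2=\cdot$). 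The paper's structural induction, by contrast, makes the intermediate state $(\rho'',m'')$ explicit in each base case, which is slightly more informative but considerably longer.
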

\begin{proof}
$(\Leftarrow)$\\
If there exit some final configuration $<\!\!\cdot\!\!>_k<\!\!\rho''\!\!>_{\mathit{env}}<\!\!m''\!\!>_{\mathit{mem}}$ such that
$$
<\!\!c_1\!\!>_k<\!\!\rho\!\!>_{\mathit{env}}<\!\!m\!\!>_{\mathit{mem}}\xrightarrow{L}^*<\!\!\cdot\!\!>_k<\!\!\rho''\!\!>_{\mathit{env}}<\!\!m''\!\!>_{\mathit{mem}}
$$
and
$$
<\!\!c_2\!\!>_k<\!\!\rho''\!\!>_{\mathit{env}}<\!\!m''\!\!>_{\mathit{mem}}\xrightarrow{L}^*<\!\!\cdot\!\!>_k<\!\!\rho'\!\!>_{\mathit{env}}<\!\!m'\!\!>_{\mathit{mem}}
$$
Since $<\!\!\cdot\!\!>_k<\!\!\rho''\!\!>_{\mathit{env}}<\!\!m''\!\!>_{\mathit{mem}}$ is final configuration, there exits $n\in Int$ such that $<\!\!c_1\!\!>_k<\!\!\rho\!\!>_{\mathit{env}}<\!\!m\!\!>_{\mathit{mem}}\xrightarrow{L}^n<\!\!\cdot\!\!>_k<\!\!\rho''\!\!>_{\mathit{env}}<\!\!m''\!\!>_{\mathit{mem}}$. By Lemma 1, we conclude
$$
<\!\!c_1;c_2\!\!>_k<\!\!\rho\!\!>_{\mathit{env}}<\!\!m\!\!>_{\mathit{mem}}\xrightarrow{L}^n<\!\!c_2\!\!>_k<\!\!\rho''\!\!>_{\mathit{env}}<\!\!m''\!\!>_{\mathit{mem}}
$$
Hence, $<\!\!c_1;c_2\!\!>_k<\!\!\rho\!\!>_{\mathit{env}}<\!\!m\!\!>_{\mathit{mem}}\xrightarrow{L}^*<\!\!\cdot\!\!>_k<\!\!\rho'\!\!>_{\mathit{env}}<\!\!m'\!\!>_{\mathit{mem}}$.\\
$(\Rightarrow)$
Suppose
$$<\!\!c_1;c_2\!\!>_k<\!\!\rho\!\!>_{\mathit{env}}<\!\!m\!\!>_{\mathit{mem}}\xrightarrow{E}^*<\!\!c_1;c_2\!\!>_k<\!\!\rho_1\!\!>_{\mathit{env}}<\!\!m_1\!\!>_{\mathit{mem}}\xrightarrow{P}\cdots\xrightarrow{L}^*<\!\!\cdot\!\!>_k<\!\!\rho'\!\!>_{\mathit{env}}<\!\!m'\!\!>_{\mathit{mem}}$$
our goal is to find $<\!\!\cdot\!\!>_k<\!\!\rho''\!\!>_{\mathit{env}}<\!\!m''\!\!>_{\mathit{mem}}$ which satisfy the property. we prove by structural induction on $c_1$.\\
\textbf{Case1:} $c_1\equiv \mathrm{x}:=e$, by SEQ and ASGN1 rules,
$$
<\!\!\mathrm{x}:=e;c_2\!\!>_k<\!\!\rho_1\!\!>_{\mathit{env}}<\!\!m_1\!\!>_{\mathit{mem}}\xrightarrow{P}<\!\!c_2\!\!>_k<\!\!\rho_1[\rho_1(e)/\mathrm{x}]\!\!>_{\mathit{env}}<\!\!m_1\!\!>_{\mathit{mem}}$$
$$\xrightarrow{L}^*
<\!\!\cdot\!\!>_k<\!\!\rho'\!\!>_{\mathit{env}}<\!\!m'\!\!>_{\mathit{mem}}
$$
Set $\rho''=\rho_1[\rho_1(e)/\mathrm{x}]$ and $m''=m_1$.\\
\textbf{Case2:} $c_1\equiv \mathrm{x}:=A[e]$, by SEQ and ASGN2 rules,
$$
<\!\!\mathrm{x}:=A[e];c_2\!\!>_k<\!\!\rho_1\!\!>_{\mathit{env}}<\!\!m_1\!\!>_{\mathit{mem}}\xrightarrow{P}$$
$$<\!\!c_2\!\!>_k<\!\!\rho_1[m_1(\rho_1(A)+_{Int}\rho_1(e))/\mathrm{x}]\!\!>_{\mathit{env}}<\!\!m_1\!\!>_{\mathit{mem}}\xrightarrow{L}^*<\!\!\cdot\!\!>_k<\!\!\rho'\!\!>_{\mathit{env}}<\!\!m'\!\!>_{\mathit{mem}}
$$
Set $\rho''=\rho_1[m_1(\rho_1(A)+_{Int}\rho_1(e))/\mathrm{x}]$ and $m''=m_1$.\\
\textbf{Case3:} $c_1\equiv A[e_1]:=e_2$, by SEQ and ASGN3 rules,
$$
<\!\!A[e_1]:=e_2;c_2\!\!>_k<\!\!\rho_1\!\!>_{\mathit{env}}<\!\!m_1\!\!>_{\mathit{mem}}\xrightarrow{P}$$
$$<\!\!c_2\!\!>_k<\!\!\rho_1\!\!>_{\mathit{env}}<\!\!m_1[\rho_1(e_2)/(\rho_1(A)+_{Int}\rho_1(e_1))]\!\!>_{\mathit{mem}}\xrightarrow{L}^*<\!\!\cdot\!\!>_k<\!\!\rho'\!\!>_{\mathit{env}}<\!\!m'\!\!>_{\mathit{mem}}
$$
Set $\rho''=\rho_1$ and $m''=m_1[\rho_1(e_2)/(\rho_1(A)+_{Int}\rho_1(e_1))]$.\\
\textbf{Case4:} $c_1\equiv A:=Array(\overline{e})$, by SEQ and ARRAY rules,
$$
<\!\!A:=Array(\overline{e});c_2\!\!>_k<\!\!\rho_1\!\!>_{\mathit{env}}<\!\!m_1\!\!>_{\mathit{mem}}\xrightarrow{P}<\!\!c_2\!\!>_k<\!\!\rho_1[p/A]\!\!>_{\mathit{env}}<\!\!p\mapsto[\rho_1(\overline{e})],m_1\!\!>_{\mathit{mem}}$$
$$\xrightarrow{L}^*<\!\!\cdot\!\!>_k<\!\!\rho'\!\!>_{\mathit{env}}<\!\!m'\!\!>_{\mathit{mem}}
$$
Set $\rho''=\rho_1[p/A]$ and $m''=p\mapsto[\rho_1(\overline{e})],m_1$.\\
\textbf{Case5:} $c_1\equiv \mathit{if}\;(\mathit{b})\;c_3\;\mathit{else}\;c_4$. If $\rho_1(b)$ is true, then by IF1 and SEQ rules,
$$
<\!\!\mathit{if}\;(\mathit{b})\;c_3\;\mathit{else}\;c_4;c_2\!\!>_k<\!\!\rho_1\!\!>_{\mathit{env}}<\!\!m_1\!\!>_{\mathit{mem}}\xrightarrow{P}<\!\!c_3;c_2\!\!>_k<\!\!\rho_1\!\!>_{\mathit{env}}<\!\!m_1\!\!>_{\mathit{mem}}\xrightarrow{L}^*$$
$$<\!\!\cdot\!\!>_k<\!\!\rho'\!\!>_{\mathit{env}}<\!\!m'\!\!>_{\mathit{mem}}
$$
Since $c_3$ is substructure of $c_1$, by the induction of
$$
<\!\!c_3;c_2\!\!>_k<\!\!\rho_1\!\!>_{\mathit{env}}<\!\!m_1\!\!>_{\mathit{mem}}\xrightarrow{L}^*<\!\!\cdot\!\!>_k<\!\!\rho'\!\!>_{\mathit{env}}<\!\!m'\!\!>_{\mathit{mem}}
$$
there exits some final configuration $<\!\!\cdot\!\!>_k<\!\!\rho_2\!\!>_{\mathit{env}}<\!\!m_2\!\!>_{\mathit{mem}}$ such that
$$
<\!\!c_3\!\!>_k<\!\!\rho_1\!\!>_{\mathit{env}}<\!\!m_1\!\!>_{\mathit{mem}}\xrightarrow{L}^*<\!\!\cdot\!\!>_k<\!\!\rho_2\!\!>_{\mathit{env}}<\!\!m_2\!\!>_{\mathit{mem}}$$
$$<\!\!c_2\!\!>_k<\!\!\rho_2\!\!>_{\mathit{env}}<\!\!m_2\!\!>_{\mathit{mem}}\xrightarrow{L}^*<\!\!\cdot\!\!>_k<\!\!\rho'\!\!>_{\mathit{env}}<\!\!m'\!\!>_{\mathit{mem}}
$$
Set $\rho''=\rho_2$ and $m''=m_2$.\\
If $\rho_1(b)$ is false, then by IF2 and SEQ rules,
$$
<\!\!\mathit{if}\;(\mathit{b})\;c_3\;\mathit{else}\;c_4;c_2\!\!>_k<\!\!\rho_1\!\!>_{\mathit{env}}<\!\!m_1\!\!>_{\mathit{mem}}\xrightarrow{P}<\!\!c_4;c_2\!\!>_k<\!\!\rho_1\!\!>_{\mathit{env}}<\!\!m_1\!\!>_{\mathit{mem}}\xrightarrow{L}^*$$
$$<\!\!\cdot\!\!>_k<\!\!\rho'\!\!>_{\mathit{env}}<\!\!m'\!\!>_{\mathit{mem}}
$$
Since $c_4$ is substructure of $c_1$, by the induction of
$$
<\!\!c_4;c_2\!\!>_k<\!\!\rho_1\!\!>_{\mathit{env}}<\!\!m_1\!\!>_{\mathit{mem}}\xrightarrow{L}^*<\!\!\cdot\!\!>_k<\!\!\rho'\!\!>_{\mathit{env}}<\!\!m'\!\!>_{\mathit{mem}}
$$
there exits some final configuration $<\!\!\cdot\!\!>_k<\!\!\rho_2\!\!>_{\mathit{env}}<\!\!m_2\!\!>_{\mathit{mem}}$ such that
$$
<\!\!c_4\!\!>_k<\!\!\rho_1\!\!>_{\mathit{env}}<\!\!m_1\!\!>_{\mathit{mem}}\xrightarrow{L}^*<\!\!\cdot\!\!>_k<\!\!\rho_2\!\!>_{\mathit{env}}<\!\!m_2\!\!>_{\mathit{mem}}$$
$$<\!\!c_2\!\!>_k<\!\!\rho_2\!\!>_{\mathit{env}}<\!\!m_2\!\!>_{\mathit{mem}}\xrightarrow{L}^*<\!\!\cdot\!\!>_k<\!\!\rho'\!\!>_{\mathit{env}}<\!\!m'\!\!>_{\mathit{mem}}
$$
Set $\rho''=\rho_2$ and $m''=m_2$.\\
\textbf{Case6:} $c_1\equiv \mathit{while }\;\mathit{b}\;\mathit{ do }\;c_3$. Since $<\!\!\cdot\!\!>_k<\!\!\rho'\!\!>_{\mathit{env}}<\!\!m'\!\!>_{\mathit{mem}}$ is final configuration, there exits $n\in Int$ such that
$$
<\!\!\mathit{while }\;\mathit{b}\;\mathit{ do }\;c_3;c_2\!\!>_k<\!\!\rho\!\!>_{\mathit{env}}<\!\!m\!\!>_{\mathit{mem}}\xrightarrow{L}^n<\!\!\cdot\!\!>_k<\!\!\rho'\!\!>_{\mathit{env}}<\!\!m'\!\!>_{\mathit{mem}}
$$
We prove the property by well-founded induction on $n$ for any $\rho$ and $m$. If $\rho_1(b)$ is true, by WHILE1 and SEQ rule,
$$
<\!\!\mathit{while }\;\mathit{b}\;\mathit{ do }\;c_3;c_2\!\!>_k<\!\!\rho_1\!\!>_{\mathit{env}}<\!\!m_1\!\!>_{\mathit{mem}}\xrightarrow{P}<\!\!c_3;\mathit{while }\;\mathit{b}\;\mathit{ do }\;c_3 ;c_2\!\!>_k<\!\!\rho_1\!\!>_{\mathit{env}}<\!\!m_1\!\!>_{\mathit{mem}}$$
$$\xrightarrow{L}^{n_1}<\!\!\cdot\!\!>_k<\!\!\rho'\!\!>_{\mathit{env}}<\!\!m'\!\!>_{\mathit{mem}}
$$
with $n_1<n$. Since $c_3$ is substructure of $c_1$, by the induction of
$$
<\!\!c_3;\mathit{while }\;\mathit{b}\;\mathit{ do }\;c_3 ;c_2\!\!>_k<\!\!\rho_1\!\!>_{\mathit{env}}<\!\!m_1\!\!>_{\mathit{mem}}\xrightarrow{L}^{n_1}<\!\!\cdot\!\!>_k<\!\!\rho'\!\!>_{\mathit{env}}<\!\!m'\!\!>_{\mathit{mem}}
$$
there exits some final configuration $<\!\!\cdot\!\!>_k<\!\!\rho_2\!\!>_{\mathit{env}}<\!\!m_2\!\!>_{\mathit{mem}}$ such that
$$
<\!\!c_3\!\!>_k<\!\!\rho_1\!\!>_{\mathit{env}}<\!\!m_1\!\!>_{\mathit{mem}}\xrightarrow{L}^*<\!\!\cdot\!\!>_k<\!\!\rho_2\!\!>_{\mathit{env}}<\!\!m_2\!\!>_{\mathit{mem}}$$
$$<\!\!\mathit{while }\;\mathit{b}\;\mathit{ do }\;c_3 ;c_2\!\!>_k<\!\!\rho_2\!\!>_{\mathit{env}}<\!\!m_2\!\!>_{\mathit{mem}}\xrightarrow{L}^{n_2}<\!\!\cdot\!\!>_k<\!\!\rho'\!\!>_{\mathit{env}}<\!\!m'\!\!>_{\mathit{mem}}
$$
with $n_2<n$. By the inner induction hypothesis ($n_2<n$), there exits some final configuration
$<\!\!\cdot\!\!>_k<\!\!\rho_3\!\!>_{\mathit{env}}<\!\!m_3\!\!>_{\mathit{mem}}$ such that
$$
<\!\!\mathit{while }\;\mathit{b}\;\mathit{ do }\;c_3\!\!>_k<\!\!\rho_2\!\!>_{\mathit{env}}<\!\!m_2\!\!>_{\mathit{mem}}\xrightarrow{L}^*<\!\!\cdot\!\!>_k<\!\!\rho_3\!\!>_{\mathit{env}}<\!\!m_3\!\!>_{\mathit{mem}}$$
$$<\!\!c_2\!\!>_k<\!\!\rho_3\!\!>_{\mathit{env}}<\!\!m_3\!\!>_{\mathit{mem}}\xrightarrow{L}^*<\!\!\cdot\!\!>_k<\!\!\rho'\!\!>_{\mathit{env}}<\!\!m'\!\!>_{\mathit{mem}}
$$
Set $\rho''=\rho_3$ and $m''=m_3$.\\
If $\rho_1(b)$ is false, then by WHILE2 and SEQ rule,
$$
<\!\!\mathit{while }\;\mathit{b}\;\mathit{ do }\;c_3;c_2\!\!>_k<\!\!\rho_1\!\!>_{\mathit{env}}<\!\!m_1\!\!>_{\mathit{mem}}\xrightarrow{P}<\!\!c_2\!\!>_k<\!\!\rho_1\!\!>_{\mathit{env}}<\!\!m_1\!\!>_{\mathit{mem}}\xrightarrow{L}^{n_1}
<\!\!\cdot\!\!>_k<\!\!\rho'\!\!>_{\mathit{env}}<\!\!m'\!\!>_{\mathit{mem}}
$$
Set $\rho''=\rho_1$ and $m''=m_1$.\\
\textbf{Case7:} $c_1\equiv \mathit{await}\;\mathit{b}\;\mathit{then}\;cc $. $<\!\!\cdot\!\!>_k<\!\!\rho'\!\!>_{\mathit{env}}<\!\!m'\!\!>_{\mathit{mem}}$ is final configuration, there exits $\rho_1,m_1$ such that $\rho_1(b)$ is true, then by SEQ and AWAIT rules,
$$
<\!\!\mathit{await}\;\mathit{b}\;\mathit{then}\;cc;c_2\!\!>_k<\!\!\rho_1\!\!>_{\mathit{env}}<\!\!m_1\!\!>_{\mathit{mem}}\xrightarrow{P}<\!\!c_2\!\!>_k<\!\!\rho_2\!\!>_{\mathit{env}}<\!\!m_2\!\!>_{\mathit{mem}}\xrightarrow{L}^*$$
$$<\!\!\cdot\!\!>_k<\!\!\rho'\!\!>_{\mathit{env}}<\!\!m'\!\!>_{\mathit{mem}}
$$
where $<\!\!cc\!\!>_k<\!\!\rho_1\!\!>_{\mathit{env}}<\!\!m_1\!\!>_{\mathit{mem}}\xrightarrow{P}^*<\!\!\cdot\!\!>_k<\!\!\rho_2\!\!>_{\mathit{env}}<\!\!m_2\!\!>_{\mathit{mem}}$.
Set $\rho''=\rho_2$ and $m''=m_2$.\\
\textbf{Case8:} $c_1\equiv c_3;c_4 $. Since $c_3$ is substructure of $c_1$,  by the induction of
$$
<\!\!c_3;(c_4;c_2)\!\!>_k<\!\!\rho\!\!>_{\mathit{env}}<\!\!m\!\!>_{\mathit{mem}}\xrightarrow{L}^*<\!\!\cdot\!\!>_k<\!\!\rho'\!\!>_{\mathit{env}}<\!\!m'\!\!>_{\mathit{mem}}
$$
there exits some  final configuration $<\!\!\cdot\!\!>_k<\!\!\rho_2\!\!>_{\mathit{env}}<\!\!m_2\!\!>_{\mathit{mem}}$ such that
$$
<\!\!c_3\!\!>_k<\!\!\rho\!\!>_{\mathit{env}}<\!\!m\!\!>_{\mathit{mem}}\xrightarrow{L}^*<\!\!\cdot\!\!>_k<\!\!\rho_2\!\!>_{\mathit{env}}<\!\!m_2\!\!>_{\mathit{mem}}$$
$$<\!\!c_4;c_2\!\!>_k<\!\!\rho_2\!\!>_{\mathit{env}}<\!\!m_2\!\!>_{\mathit{mem}}\xrightarrow{L}^*<\!\!\cdot\!\!>_k<\!\!\rho'\!\!>_{\mathit{env}}<\!\!m'\!\!>_{\mathit{mem}}
$$
Since $c_4$ is substructure of $c_1$, by the induction of
$$
<\!\!c_4;c_2\!\!>_k<\!\!\rho_2\!\!>_{\mathit{env}}<\!\!m_2\!\!>_{\mathit{mem}}\xrightarrow{L}^*<\!\!\cdot\!\!>_k<\!\!\rho'\!\!>_{\mathit{env}}<\!\!m'\!\!>_{\mathit{mem}}
$$
there exits some  final configuration $<\!\!\cdot\!\!>_k<\!\!\rho_3\!\!>_{\mathit{env}}<\!\!m_3\!\!>_{\mathit{mem}}$ such that
$$
<\!\!c_4\!\!>_k<\!\!\rho_2\!\!>_{\mathit{env}}<\!\!m_2\!\!>_{\mathit{mem}}\xrightarrow{L}^*<\!\!\cdot\!\!>_k<\!\!\rho_3\!\!>_{\mathit{env}}<\!\!m_3\!\!>_{\mathit{mem}}$$
$$<\!\!c_2\!\!>_k<\!\!\rho_3\!\!>_{\mathit{env}}<\!\!m_3\!\!>_{\mathit{mem}}\xrightarrow{L}^*<\!\!\cdot\!\!>_k<\!\!\rho'\!\!>_{\mathit{env}}<\!\!m'\!\!>_{\mathit{mem}}
$$
Set $\rho''=\rho_3$ and $m''=m_3$.\\
\textbf{Case9:} $c_1\equiv c_3\parallel c_4 $. According to the rule used in
$<\!\!(c_3\parallel c_4);c_2\!\!>_k<\!\!\rho_1\!\!>_{\mathit{env}}<\!\!m_1\!\!>_{\mathit{mem}}\xrightarrow{P}\gamma$, there are two cases:\\
\textbf{Case9.1:} By SEQ and PAR1 rules,
$$
<\!\!(c_3\parallel c_4);c_2\!\!>_k<\!\!\rho_1\!\!>_{\mathit{env}}<\!\!m_1\!\!>_{\mathit{mem}}\xrightarrow{P}<\!\!(c_3'\parallel c_4);c_2\!\!>_k<\!\!\rho_2\!\!>_{\mathit{env}}<\!\!m_2\!\!>_{\mathit{mem}}\xrightarrow{L}^*$$
$$<\!\!\cdot\!\!>_k<\!\!\rho'\!\!>_{\mathit{env}}<\!\!m'\!\!>_{\mathit{mem}}
$$
If $c_3\neq ((\mathit{while}\;\mathit{b}\;\mathit{do}\;c_5);c_6)\parallel c_7$, then $c_3'\parallel c_4$ is substructure of $c_1$, by the induction of
$$
<\!\!(c_3'\parallel c_4);c_2\!\!>_k<\!\!\rho_2\!\!>_{\mathit{env}}<\!\!m_2\!\!>_{\mathit{mem}}\xrightarrow{L}^*<\!\!\cdot\!\!>_k<\!\!\rho'\!\!>_{\mathit{env}}<\!\!m'\!\!>_{\mathit{mem}}
$$
there exits some final configuration $<\!\!\cdot\!\!>_k<\!\!\rho_3\!\!>_{\mathit{env}}<\!\!m_3\!\!>_{\mathit{mem}}$ such that
$$
<\!\!(c_3'\parallel c_4)\!\!>_k<\!\!\rho_2\!\!>_{\mathit{env}}<\!\!m_2\!\!>_{\mathit{mem}}\xrightarrow{L}^*<\!\!\cdot\!\!>_k<\!\!\rho_3\!\!>_{\mathit{env}}<\!\!m_3\!\!>_{\mathit{mem}}$$
$$<\!\!c_2\!\!>_k<\!\!\rho_3\!\!>_{\mathit{env}}<\!\!m_3\!\!>_{\mathit{mem}}\xrightarrow{L}^*<\!\!\cdot\!\!>_k<\!\!\rho'\!\!>_{\mathit{env}}<\!\!m'\!\!>_{\mathit{mem}}
$$
Set $\rho''=\rho_3$ and $m''=m_3$.\\
If $c_3=((\mathit{while}\;\mathit{b}\;\mathit{do}\;c_5);c_6)\parallel c_7$, then $c_3'=((c_5;\mathit{while}\;\mathit{b}\;\mathit{do}\;c_5);c_6)\parallel c_7$ and $\rho_2=\rho_1$ and $m_2=m_1$. Since $<\!\!\cdot\!\!>_k<\!\!\rho'\!\!>_{\mathit{env}}<\!\!m'\!\!>_{\mathit{mem}}$ is final configuration, there exits $<\!\!(c_3\parallel c_4');c_2 \!\!>_k<\!\!\rho_3\!\!>_{\mathit{env}}<\!\!m_3\!\!>_{\mathit{mem}}$ such that $c_4'$ is substructure of $c_4$ and
$$<\!\!(c_3'\parallel c_4);c_2\!\!>_k<\!\!\rho_2\!\!>_{\mathit{env}}<\!\!m_2\!\!>_{\mathit{mem}}\xrightarrow{L}^*<\!\!(c_3\parallel c_4');c_2\!\!>_k<\!\!\rho_3\!\!>_{\mathit{env}}<\!\!m_3\!\!>_{\mathit{mem}}\xrightarrow{L}^*$$
$$<\!\!\cdot\!\!>_k<\!\!\rho'\!\!>_{\mathit{env}}<\!\!m'\!\!>_{\mathit{mem}}
$$
By the induction of
$$
<\!\!(c_3\parallel c_4');c_2\!\!>_k<\!\!\rho_3\!\!>_{\mathit{env}}<\!\!m_3\!\!>_{\mathit{mem}}\xrightarrow{L}^*<\!\!\cdot\!\!>_k<\!\!\rho'\!\!>_{\mathit{env}}<\!\!m'\!\!>_{\mathit{mem}}
$$
there exits some final configuration $<\!\!\cdot\!\!>_k<\!\!\rho_4\!\!>_{\mathit{env}}<\!\!m_4\!\!>_{\mathit{mem}}$ such that
$$
<\!\!(c_3\parallel c_4')\!\!>_k<\!\!\rho_3\!\!>_{\mathit{env}}<\!\!m_3\!\!>_{\mathit{mem}}\xrightarrow{L}^*<\!\!\cdot\!\!>_k<\!\!\rho_4\!\!>_{\mathit{env}}<\!\!m_4\!\!>_{\mathit{mem}}$$
$$<\!\!c_2\!\!>_k<\!\!\rho_4\!\!>_{\mathit{env}}<\!\!m_4\!\!>_{\mathit{mem}}\xrightarrow{L}^*<\!\!\cdot\!\!>_k<\!\!\rho'\!\!>_{\mathit{env}}<\!\!m'\!\!>_{\mathit{mem}}
$$
Set $\rho''=\rho_4$ and $m''=m_4$.\\
\textbf{Case9.2:} By SEQ and PAR2 rules,
$$
<\!\!(c_3\parallel c_4);c_2\!\!>_k<\!\!\rho_1\!\!>_{\mathit{env}}<\!\!m_1\!\!>_{\mathit{mem}}\xrightarrow{P}<\!\!(c_3\parallel c_4');c_2\!\!>_k<\!\!\rho_2\!\!>_{\mathit{env}}<\!\!m_2\!\!>_{\mathit{mem}}\xrightarrow{L}^*$$
$$<\!\!\cdot\!\!>_k<\!\!\rho'\!\!>_{\mathit{env}}<\!\!m'\!\!>_{\mathit{mem}}
$$
If $c_4\neq ((\mathit{while}\;\mathit{b}\;\mathit{do}\;c_5);c_6)\parallel c_7$, then $c_3\parallel c_4'$ is substructure of $c_1$, by the induction of
$$
<\!\!(c_3\parallel c_4');c_2\!\!>_k<\!\!\rho_2\!\!>_{\mathit{env}}<\!\!m_2\!\!>_{\mathit{mem}}\xrightarrow{L}^*<\!\!\cdot\!\!>_k<\!\!\rho'\!\!>_{\mathit{env}}<\!\!m'\!\!>_{\mathit{mem}}
$$
there exits some final configuration $<\!\!\cdot\!\!>_k<\!\!\rho_3\!\!>_{\mathit{env}}<\!\!m_3\!\!>_{\mathit{mem}}$ such that
$$
<\!\!(c_3\parallel c_4')\!\!>_k<\!\!\rho_2\!\!>_{\mathit{env}}<\!\!m_2\!\!>_{\mathit{mem}}\xrightarrow{L}^*<\!\!\cdot\!\!>_k<\!\!\rho_3\!\!>_{\mathit{env}}<\!\!m_3\!\!>_{\mathit{mem}}$$
$$<\!\!c_2\!\!>_k<\!\!\rho_3\!\!>_{\mathit{env}}<\!\!m_3\!\!>_{\mathit{mem}}\xrightarrow{L}^*<\!\!\cdot\!\!>_k<\!\!\rho'\!\!>_{\mathit{env}}<\!\!m'\!\!>_{\mathit{mem}}
$$
Set $\rho''=\rho_3$ and $m''=m_3$.\\
If $c_4=((\mathit{while}\;\mathit{b}\;\mathit{do}\;c_5);c_6)\parallel c_7$, then $c_4'=((c_5;\mathit{while}\;\mathit{b}\;\mathit{do}\;c_5);c_6)\parallel c_7$ and $\rho_2=\rho_1$ and $m_2=m_1$. Since $<\!\!\cdot\!\!>_k<\!\!\rho'\!\!>_{\mathit{env}}<\!\!m'\!\!>_{\mathit{mem}}$ is final configuration, there exits $<\!\!(c_3'\parallel c_4);c_2 \!\!>_k<\!\!\rho_3\!\!>_{\mathit{env}}<\!\!m_3\!\!>_{\mathit{mem}}$ such that $c_3'$ is substructure of $c_3$ and
$$<\!\!(c_3\parallel c_4');c_2\!\!>_k<\!\!\rho_2\!\!>_{\mathit{env}}<\!\!m_2\!\!>_{\mathit{mem}}\xrightarrow{L}^*<\!\!(c_3'\parallel c_4);c_2\!\!>_k<\!\!\rho_3\!\!>_{\mathit{env}}<\!\!m_3\!\!>_{\mathit{mem}}\xrightarrow{L}^*$$
$$<\!\!\cdot\!\!>_k<\!\!\rho'\!\!>_{\mathit{env}}<\!\!m'\!\!>_{\mathit{mem}}
$$
By the induction of
$$
<\!\!(c_3'\parallel c_4);c_2\!\!>_k<\!\!\rho_3\!\!>_{\mathit{env}}<\!\!m_3\!\!>_{\mathit{mem}}\xrightarrow{L}^*<\!\!\cdot\!\!>_k<\!\!\rho'\!\!>_{\mathit{env}}<\!\!m'\!\!>_{\mathit{mem}}
$$
there exits some final configuration $<\!\!\cdot\!\!>_k<\!\!\rho_4\!\!>_{\mathit{env}}<\!\!m_4\!\!>_{\mathit{mem}}$ such that
$$
<\!\!(c_3\parallel c_4')\!\!>_k<\!\!\rho_3\!\!>_{\mathit{env}}<\!\!m_3\!\!>_{\mathit{mem}}\xrightarrow{L}^*<\!\!\cdot\!\!>_k<\!\!\rho_4\!\!>_{\mathit{env}}<\!\!m_4\!\!>_{\mathit{mem}}$$
$$<\!\!c_2\!\!>_k<\!\!\rho_4\!\!>_{\mathit{env}}<\!\!m_4\!\!>_{\mathit{mem}}\xrightarrow{L}^*<\!\!\cdot\!\!>_k<\!\!\rho'\!\!>_{\mathit{env}}<\!\!m'\!\!>_{\mathit{mem}}
$$
Set $\rho''=\rho_4$ and $m''=m_4$.\\
\textbf{Case10:} $c_1\equiv skip $, by SEQ and SKIP rules
$$<\!\!skip;c_2\!\!>_k<\!\!\rho\!\!>_{\mathit{env}}<\!\!m\!\!>_{\mathit{mem}}\xrightarrow{P}<\!\!c_2\!\!>_k<\!\!\rho\!\!>_{\mathit{env}}<\!\!m\!\!>_{\mathit{mem}}\xrightarrow{L}^*<\!\!\cdot\!\!>_k<\!\!\rho'\!\!>_{\mathit{env}}<\!\!m'\!\!>_{\mathit{mem}}$$
Set $\rho''=\rho$ and $m''=m$.
\end{proof}
\section{ Matching logic for PIMP}
In matching logic, one can't quantify program variables because they are syntactic constants rather than logical variables.
Suppose $\mathit{SVar}$ is an infinite set of logical or semantic variables, and it also contain a special variable named "$\mathit{o}$" of sort $\mathit{Cfg}$ which serves as a place holder in the matching logic pattern.
\begin{definition}
\cite{rosu2009rewriting}Matching logic patterns, are $\mathrm{FOL}_=$ forumlae $\exists X((o=c)\land \varphi)$, where: $X \subset \mathit{SVar}$ is the set of bound variables; $c$ is the pattern structure and is a term of sort $\mathit{Cfg}$; $\varphi$ is the constraint, an arbitrary $\mathit{FOL}_=$ formula.
\end{definition}
Let $\mathcal{T}$ is the initial model of $\mathrm{PIMP}$ and $\mathit{SVar}^o=\mathit{SVar}\cup \{o\}$. Valuation $(\gamma,\tau):\mathit{SVar}^o \to \mathcal{T}$ includes a concrete configuration $\gamma$ and a map $\tau:\mathit{SVar}\to \mathcal{T}$. $(\gamma,\tau)\models \exists X((o=c)\land \varphi)$ iff there exists $\theta_\tau:\mathit{SVar}\to \mathcal{T}$ with $\theta_\tau\!\!\upharpoonright_{\mathit{SVar}/X}=\tau\!\!\upharpoonright_{\mathit{SVar}/X}$ such that $\gamma=\theta_\tau(c)$ and $\theta_\tau \models \varphi$. Let $\Gamma,\Gamma'$ are matching logic patterns, $\Gamma\Downarrow\Gamma'$ is called matching logic correctness pair.
\begin{figure}
  \scriptsize
  \begin{tabular}{l}
    \\[-2mm]
    \hline
    \hline\\[-2mm]
    {\bf \small Matching logic proof system of PIMP}\\
    \hline\\[-2mm]\\
    \textrm{M-SKIP:}\\
    $\dfrac{\cdot}{\exists X(o=<\!\!skip\!\!>_k<\!\!\rho\!\!>_{\mathit{env}}<\!\!m\!\!>_{\mathit{mem}}\land \varphi)\Downarrow\exists X(o=<\!\!\cdot\!\!>_k<\!\!\rho\!\!>_{\mathit{env}}<\!\!m\!\!>_{\mathit{mem}}\land \varphi)}$\\\\\\
    \textrm{M-ASGN1:}\\
    $\dfrac{\cdot}{\exists X(o=<\!\!\mathrm{x}:=e\!\!>_k<\!\!\rho\!\!>_{\mathit{env}}<\!\!m\!\!>_{\mathit{mem}}\land \varphi)\Downarrow\exists  X(o=\!<\!\!\cdot\!\!>_k<\!\!\rho[\rho(e)/\mathrm{x}]\!\!>_{\mathit{env}}<\!\!m\!\!>_{\mathit{mem}}\land \varphi)}$\\\\\\
    \textrm{M-ASGN2:}\\
    $\dfrac{\cdot}{\exists X(o=<\!\!\mathrm{x}:=A[e]\!\!>_k<\!\!\rho\!\!>_{\mathit{env}}<\!\!m\!\!>_{\mathit{mem}}\land \varphi)\Downarrow\exists  X(o=\!<\!\!\cdot\!\!>_k<\!\!\rho[m(\rho(A)\!+_{Int}\!\rho(e))/\mathrm{x}]\!\!>_{\mathit{env}}<\!\!m\!\!>_{\mathit{mem}}\land \varphi)}$\\
    \\\\
    \textrm{M-ASGN3:}\\
    $\dfrac{\cdot}{\exists X(o=<\!\!A[e_1]:=e_2\!\!>_k<\!\!\rho\!\!>_{\mathit{e\!n\!v}}<\!\!m\!\!>_{\mathit{m\!e\!m}}\!\land \varphi)\!\Downarrow\!\exists  X(o=\!<\!\!\cdot\!\!>_k<\!\!\rho\!\!>_{\mathit{e\!n\!v}}<\!\!m[\rho(e_2)/(\rho(A)\!+_{I\!n\!t}\!\rho(e_1))]\!\!>_{\mathit{m\!e\!m}}\!\land \varphi)}$\\\\\\
    \textrm{M-ARRAY:}\\
    $\dfrac{\cdot}{\exists X(o\!=\!<\!\!A\!:=\!\!Array(\overline{e})\!\!>_k<\!\!\rho\!\!>_{\mathit{e\!n\!v}}\!<\!\!m\!\!>_{\mathit{m\!e\!m}}\!\land\! \varphi)\!\!\Downarrow\!\exists  (X\!\cup\! \{p\})(o\!=\!<\!\!\cdot\!\!>_k\!<\!\!\rho[p/A]\!\!>_{\mathit{e\!n\!v}}\!<\!\!p\!\mapsto\![\rho(\overline{e})],m\!\!>_{\mathit{m\!e\!m}}\!\land \!\varphi)}$\\\\\\
    \textrm{M-SEQ:}\\\\
    $\;\;\;\exists X_1(o=<\!\!c_1\!\!>_k<\!\!\rho_1\!\!>_{\mathit{env}}<\!\!m_1\!\!>_{\mathit{mem}}\land \varphi_1)\Downarrow\exists X_2(o=<\!\!\cdot\!\!>_k<\!\!\rho_2\!\!>_{\mathit{env}}<\!\!m_2\!\!>_{\mathit{mem}}\land \varphi_2)$\\
$\dfrac{\exists X_2(o=<\!\!c_2\!\!>_k<\!\!\rho_2\!\!>_{\mathit{env}}<\!\!m_2\!\!>_{\mathit{mem}}\land \varphi_2)\Downarrow\exists  X_3(o=<\!\!\cdot\!\!>_k<\!\!\rho_3\!\!>_{\mathit{env}}<\!\!m_3\!\!>_{\mathit{mem}}\land \varphi_3)}{\exists X_1(o=<\!\!c_1;c_2\!\!>_k<\!\!\rho_1\!\!>_{\mathit{env}}<\!\!m_1\!\!>_{\mathit{mem}}\land \varphi_1)\Downarrow\exists X_3(o=<\!\!\cdot\!\!>_k<\!\!\rho_3\!\!>_{\mathit{env}}<\!\!m_3\!\!>_{\mathit{mem}}\land \varphi_3)}$\\\\\\
\textrm{M-IF:}\\\\
$\;\exists X_1(o=<\!\!c_1\!\!>_k<\!\!\rho\!\!>_{\mathit{env}}<\!\!m\!\!>_{\mathit{mem}}\land \varphi \land (\rho(b)\;\;is\;\; true))\Downarrow\exists X_2(o=<\!\!\cdot\!\!>_k<\!\!\rho'\!\!>_{\mathit{env}}<\!\!m'\!\!>_{\mathit{mem}}\land \varphi')$\\
$\dfrac{\exists X_1(o=<\!\!c_2\!\!>_k<\!\!\rho\!\!>_{\mathit{env}}<\!\!m\!\!>_{\mathit{mem}}\land \varphi \land(\rho(b)\;\;is\;\; false))\Downarrow\exists X_2(o=<\!\!\cdot\!\!>_k<\!\!\rho'\!\!>_{\mathit{env}}<\!\!m'\!\!>_{\mathit{mem}}\land \varphi')}{\exists X_1(o=<\!\!\textrm{if }(b)c_1\textrm{ else }c_2 \!\!>_k<\!\!\rho\!\!>_{\mathit{env}}<\!\!m\!\!>_{\mathit{mem}}\land \varphi)\Downarrow\exists X_2(o=<\!\!\cdot\!\!>_k<\!\!\rho'\!\!>_{\mathit{env}}<\!\!m'\!\!>_{\mathit{mem}}\land \varphi')}$\\\\\\
\textrm{M-CONS:}\\\\
$\dfrac{\models \Gamma \Rightarrow \Gamma_1, \Gamma_1\Downarrow \Gamma_1',\models \Gamma_1' \Rightarrow \Gamma'}{\Gamma\Downarrow \Gamma'}$\\\\\\
\textrm{M-CASE:}\\\\
$\dfrac{\models \Gamma \Rightarrow \Gamma_1\vee \Gamma_2, \Gamma_1\Downarrow \Gamma', \Gamma_2\Downarrow \Gamma'}{\Gamma\Downarrow \Gamma'}$\\\\\\
\textrm{M-AWAIT:}\\\\
$\dfrac{\exists  X_1(o=<\!\!cc\!\!>_k<\!\!\rho\!\!>_{\mathit{env}}<\!\!m\!\!>_{\mathit{mem}}\land \varphi  \land(\rho(b)\;\;is\;\; true))\Downarrow\exists X_2(o=<\!\!\cdot\!\!>_k<\!\!\rho'\!\!>_{\mathit{env}}<\!\!m'\!\!>_{\mathit{mem}}\land  \varphi')}{\exists X_1(o=<\!\!\textrm{await }b\textrm{ then }\mathit{cc}\!\! >_k<\!\!\rho\!\!>_{\mathit{env}}<\!\!m\!\!>_{\mathit{mem}}\land \varphi)\Downarrow\exists X_2(o=<\!\!\cdot\!\!>_k<\!\!\rho'\!\!>_{\mathit{env}}<\!\!m'\!\!>_{\mathit{mem}}\land \varphi')}$\\\\\\
\textrm{M-WHILE:}\\\\
$\dfrac{\exists X(o=<\!\!c\!\!>_k<\!\!\rho\!\!>_{\mathit{env}}<\!\!m\!\!>_{\mathit{mem}}\land \varphi \land(\rho(b)\;\;is\;\; true))\Downarrow\exists X(o=<\!\!\cdot\!\!>_k<\!\!\rho\!\!>_{\mathit{env}}<\!\!m\!\!>_{\mathit{mem}}\land \varphi)}{\exists X(o=<\!\!\textrm{while }b\textrm{ do } c\!\!>_k<\!\!\rho\!\!>_{\mathit{env}}<\!\!m\!\!>_{\mathit{mem}}\land \varphi)\Downarrow\exists X(o=<\!\!\cdot\!\!>_k<\!\!\rho\!\!>_{\mathit{env}}<\!\!m\!\!>_{\mathit{mem}}\land \varphi\land(\rho(b)\;\;is\;\; false))}$  \\\\\\
\textrm{M-PAR:}\\\\
$\exists X_1(o=<\!\!c_1\!\!>_k<\!\!\rho\!\!>_{\mathit{env}}<\!\!m\!\!>_{\mathit{mem}}\land \varphi_1)\Downarrow\exists X_2(o=<\!\!\cdot\!\!>_k<\!\!\rho'\!\!>_{\mathit{env}}<\!\!m'\!\!>_{\mathit{mem}}\land \varphi_1')$\\
$\exists X_1(o=<\!\!c_2\!\!>_k<\!\!\rho\!\!>_{\mathit{env}}<\!\!m\!\!>_{\mathit{mem}}\land \varphi_2)\Downarrow\exists X_2(o=<\!\!\cdot\!\!>_k<\!\!\rho'\!\!>_{\mathit{env}}<\!\!m'\!\!>_{\mathit{mem}}\land\varphi_2')$\\
$c_1,\;c_2\; \textrm{are interference-free}$\\
$\overline{\exists X_1(o=<\!\!c_1\parallel c_2\!\!>_k<\!\!\rho\!\!>_{env}<\!\!m\!\!>_{\mathit{mem}}\land \varphi_1 \land \varphi_2)\Downarrow\exists X_2(o=<\!\!\cdot\!\!>_k<\!\!\rho'\!\!>_{env}<\!\!m'\!\!>_{\mathit{mem}}\land\varphi_1'\land \varphi_2')}$\\\\\\
\hline
\hline
\end{tabular}
  \caption{ Matching logic proof system of PIMP}\label{fig2}
\end{figure}
Informally matching logic correctness pair $\exists X(o=<\!\!c\!\!>_k<\!\!\rho\!\!>_{\mathit{env}}<\!\!m\!\!>_{mem}\land \varphi)\Downarrow \exists X(o=<\!\!\cdot\!\!>_k<\!\!\rho'\!\!>_{\mathit{env}}<\!\!m'\!\!>_{mem}\land\varphi')$ means: for any valuation $(\gamma,\tau):\mathit{SVar}^o \to \mathcal{T}$, if $(\gamma,\tau) \models \exists X(o=<\!\!c\!\!>_k<\!\!\rho\!\!>_{\mathit{env}}<\!\!m\!\!>_{mem}\land \varphi)$ before execution of $\gamma$, then $(\gamma',\tau) \models \exists X(o=<\!\!\cdot\!\!>_k<\!\!\rho'\!\!>_{\mathit{env}}<\!\!m'\!\!>_{mem}\land\varphi')$ after execution of $\gamma$, where  $\mathit{PIMP} \models \gamma\to^*\gamma'$ and $\gamma'$ is a final configuration.
We introduce syntax shorthand notations for configuration pattern and correctness pair:
$$<\!\!c\!\!>_k<\!\!\rho\!\!>_{env}<\!\!m\!\!>_{mem}<\!\!X\!\!>_{bnd}<\!\!\varphi\!\!>_{form}$$
instead of
$$\exists X((o=<\!\!c\!\!>_k<\!\!\rho\!\!>_{env}<\!\!m\!\!>_{mem})\land \varphi)$$
and
$$<\!\!\rho\!\!>_{env}<\!\!m\!\!>_{mem}<\!\!X\!\!>_{bnd}<\!\!\varphi\!\!>_{form}\;c\;<\!\!\rho'\!\!>_{env}<\!\!m'\!\!>_{mem}<\!\!X'\!\!>_{bnd}<\!\!\varphi'\!\!>_{form}$$ instead of
$$\exists X(o=<\!\!c\!\!>_k<\!\!\rho\!\!>_{\mathit{env}}<\!\!m\!\!>_{mem}\land \varphi)\Downarrow \exists X'(o=<\!\!\cdot\!\!>_k<\!\!\rho'\!\!>_{\mathit{env}}<\!\!m'\!\!>_{mem}\land\varphi')$$
we call $<\!\!\rho\!\!>_{env}<\!\!m\!\!>_{mem}<\!\!X\!\!>_{bnd}<\!\!\varphi\!\!>_{form}$ and $<\!\!\rho'\!\!>_{env}<\!\!m'\!\!>_{mem}<\!\!X'\!\!>_{bnd}<\!\!\varphi'\!\!>_{form}$ assertions.
Figure 3 gives the matching logic proof system for PIMP.
Notice that M-PAR rule in Figure 3 says that as long as $c_1$, $c_2$ don't interfere with each other, the effect of executing $c_1$ and $c_2$ in parallel is the same as executing $c_1$ and $c_2$ separately. The easiest way to get "interference-free" is not to allow shared variables, but this is too restrictive to handle the synchronization of producer and consumer which is a standard problem in parallel programming literature.\\
Suppose $<\!\!\rho_1\!\!>_{env}<\!\!m_1\!\!>_{mem}<\!\!X_1\!\!>_{bnd}<\!\!\varphi_1\!\!>_{form}\;c_1\;<\!\!\rho_1'\!\!>_{env}<\!\!m_1'\!\!>_{mem}<\!\!X_1'\!\!>_{bnd}<\!\!\varphi_1'\!\!>_{form}$ , set $\mathit{pre}(c_1)=<\!\!\rho_1\!\!>_{env}<\!\!m_1\!\!>_{mem}<\!\!X_1\!\!>_{bnd}<\!\!\varphi_1\!\!>_{form}$ and $\mathit{post}(c_1)=<\!\!\rho_1'\!\!>_{env}<\!\!m_1'\!\!>_{mem}<\!\!X_1'\!\!>_{bnd}<\!\!\varphi_1'\!\!>_{form}$, we now define "interference-free".
\begin{definition}
Given a proof outline of $\mathit{pre}(c_1)\;c_1\;\mathit{post}(c_1)$ and a computation $k$ with $\mathit{pre}(k)\;k\;\mathit{post}(k)$, we say that $k$ don't interfere with $c_1$ if the following two
conditions hold:
\begin{enumerate}
  \item For any concrete configuration $\gamma$, if $(\gamma,\tau)\models <\!\!k\!\!>_k \mathit{pre}(k)$ and $(\gamma,\tau)\models <\!\!k\!\!>_k \mathit{post}(c_1)$ and $\gamma\rightarrow^*\gamma'$ and $\gamma'$ is a final configuration, then $(\gamma',\tau)\models <\!\!\cdot\!\!>_k \mathit{post}(c_1)$;
  \item Let $c_1'$ is any sub computation within $c_1$ but not within any $\mathit{await}$, for any concrete configuration $\gamma$, if $(\gamma,\tau)\models <\!\!k\!\!>_k \mathit{pre}(k)$ and $(\gamma,\tau)\models <\!\!k\!\!>_k \mathit{pre}(c_1')$ and $\gamma\rightarrow^*\gamma'$ and $\gamma'$ is a final configuration, then $(\gamma',\tau)\models <\!\!\cdot\!\!>_k \mathit{pre}(c_1')$.
\end{enumerate}
\end{definition}
\begin{definition}
Given proof outlines of $\mathit{pre}(c_1)\;c_1\;\mathit{post}(c_1)$ and $\mathit{pre}(c_2)\;c_2\;\mathit{post}(c_2)$, we say that $c_1$, $c_2$ are "interference-free" if the following two conditions hold:
\begin{enumerate}
  \item Let $c_1'$ be an $\mathit{await}$ or $:=$ sub computation (which don't
appear in an $\mathit{await}$) of $c_1$, then $c_1'$ don't interfere with $c_2$;
  \item Let $c_2'$ be an $\mathit{await}$ or $:=$ sub computation (which don't
appear in an $\mathit{await}$) of $c_2$, then $c_2'$ don't interfere with $c_1$.
\end{enumerate}
\end{definition}
\begin{definition}
Given a proof outline of $pre(c)\; c\; post(c)$ and any concrete configuration $\gamma_0$  with $(\gamma_0,\tau) \models <\!\!c\!\!>_k pre(c)$, there are infinite number of terminable executions $\sigma=\gamma_0\xrightarrow{L}\gamma_1\xrightarrow{L}\cdots\xrightarrow{L}\gamma_i\xrightarrow{L}\gamma_{i+1}\xrightarrow{L}\cdots\xrightarrow{L}\gamma_n$ because of ENV rule. For any $\gamma_i\xrightarrow{L}\gamma_{i+1},0\leq i < n$, if $L=E$, it means other computation $c'$ which is in parallel with $c$ updates environment and memory. Let $S$ represents a set of computations executed in parallel with $c$. A terminable execution $\sigma$ is called \textbf{\emph{actual}} execution if $c', c$ are "interference-free" for any $c'\in S$.\\
\end{definition}
We now formally give the proof that matching logic is soundness to the
operational semantics of PIMP. Let's make the assumption that the original PIMP program don't contain variables in $\mathit{SVar}$.
\begin{theorem}
For any actual execution $\sigma=\gamma_0\xrightarrow{L}\gamma_1\xrightarrow{L}\cdots\xrightarrow{L}\gamma_i\xrightarrow{L}\gamma_{i+1}\xrightarrow{L}\cdots\xrightarrow{L}\gamma_n$, if $(\gamma_0,\tau) \models \Gamma$ and $\Gamma\Downarrow \Gamma'$ is derivable, then $(\gamma_n,\tau) \models \Gamma'$.
\end{theorem}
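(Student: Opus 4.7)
The plan is to proceed by induction on the derivation of $\Gamma \Downarrow \Gamma'$, case-splitting on the last matching logic proof rule applied. I would group the eleven rules of Figure 3 into four clusters: the atomic axioms (M-SKIP, M-ASGN1, M-ASGN2, M-ASGN3, M-ARRAY), the sequential structural rules (M-SEQ, M-IF, M-WHILE, M-AWAIT), the logical rules (M-CONS, M-CASE), and the parallel rule (M-PAR).

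For each atomic axiom, the actual execution $\sigma$ starting from a configuration whose $k$-cell contains a single leaf construct decomposes as a (possibly empty) prefix of E-steps, exactly one P-step firing the corresponding rule of Figure 2 (after which the $k$-cell becomes $\cdot$, so no further P-step can fire), and a (possibly empty) suffix of E-steps. Each E-step arises from a P-step of some $c' \in S$ executing in parallel, and the hypothesis that $\sigma$ is actual together with Definition 4 certifies that these E-steps preserve the constraint $\varphi$ along the prefix and the updated constraint along the suffix; a direct syntactic check then shows that the central P-step transforms the configuration pattern exactly as the axiom prescribes. The structural rules reduce to the operational propositions proved in Section 2: M-SEQ applies Proposition 2 to split $\sigma$ into an actual execution of $c_1$ reaching an intermediate final configuration and an actual execution of $c_2$ from there, on which the two premises act via the induction hypothesis; M-IF case-splits on the truth value of $\rho(b)$ at $\gamma_0$ and uses IF1 or IF2 to pick the appropriate branch; M-WHILE is handled by a secondary well-founded induction on the length of $\sigma$, with each unfolding dispatched by the premise and termination dispatched by WHILE2; M-AWAIT exploits the atomicity of the AWAIT semantic rule, whose premise is a multi-step P-only execution of $cc$ that is trivially actual. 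The logical rules M-CONS and M-CASE follow immediately from the induction hypothesis combined with the validity of the premise implications or disjunction.

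The main obstacle is the M-PAR case. Using Proposition 1 I would decompose the actual execution $\sigma$ of $<\!\!c_1 \parallel c_2\!\!>_k<\!\!\rho\!\!>_{env}<\!\!m\!\!>_{mem}$ into two terminable executions $\sigma_1$ of $<\!\!c_1\!\!>_k<\!\!\rho\!\!>_{env}<\!\!m\!\!>_{mem}$ and $\sigma_2$ of $<\!\!c_2\!\!>_k<\!\!\rho\!\!>_{env}<\!\!m\!\!>_{mem}$, where each E-step of $\sigma_i$ corresponds either to an ambient E-step of $\sigma$ or to a P-step performed by $c_j$ in $\sigma$ with $j \ne i$. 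The delicate point is to certify that $\sigma_1$ and $\sigma_2$ are themselves actual executions: for $\sigma_1$ this requires that every atomic sub-computation of $c_j$ (each $:=$ or $\mathit{await}$ not nested inside another $\mathit{await}$) preserves every assertion appearing along the proof outline of $c_i$, which is exactly what Definition 5 guarantees through the interference-freedom premise of M-PAR, and Definition 4 then transports those preservation facts across each complete sub-execution that contributes an E-step. Once actuality of $\sigma_1$ and $\sigma_2$ is established, the induction hypothesis applied to the two premises of M-PAR yields satisfaction of $\exists X_2(o = <\!\!\cdot\!\!>_k<\!\!\rho'\!\!>_{env}<\!\!m'\!\!>_{mem} \land \varphi_1')$ at the endpoint of $\sigma_1$ and the analogous assertion for $\sigma_2$; since both endpoints share the same $\rho',m'$ inherited from $\gamma_n$ via the decomposition, combining them under a common witnessing substitution on $\mathit{SVar}/X_2$ produces $(\gamma_n,\tau) \models \exists X_2(o = <\!\!\cdot\!\!>_k<\!\!\rho'\!\!>_{env}<\!\!m'\!\!>_{mem} \land \varphi_1' \land \varphi_2')$, which is the M-PAR conclusion.
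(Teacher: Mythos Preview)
Your proposal is correct and follows essentially the same architecture as the paper's proof: induction on the derivation of $\Gamma\Downarrow\Gamma'$, with the atomic axioms handled via the E-prefix/single P-step/E-suffix decomposition, M-SEQ and M-WHILE via the sequential decomposition proposition (the paper's Proposition~5) plus an inner well-founded induction for the loop, M-CONS and M-CASE directly, and M-PAR via Proposition~3 together with a claim (argued exactly as you do, from interference-freedom) that the two projected executions $\sigma_1,\sigma_2$ are themselves actual. One small point of precision: for M-IF the branch is determined by the value of $b$ in the environment at the moment the P-step fires (i.e.\ after the E-prefix), not at $\gamma_0$; the paper makes this explicit by first distinguishing whether the first step is $\xrightarrow{P}$ or $\xrightarrow{E}$, and you should do the same.
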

\begin{proof}
We prove by the induction on the depth of inference of $\Gamma\Downarrow \Gamma'$. We consider the different ways in which the last step of the inference is done:\\
\textbf{Case1:} By M-ASGN1 rule
$$\dfrac{\cdot}{\exists X(o=<\!\!\mathrm{x}:=e\!\!>_k<\!\!\rho\!\!>_{\mathit{env}}<\!\!m\!\!>_{\mathit{mem}}\land \varphi)\Downarrow\exists  X(o=\!<\!\!\cdot\!\!>_k<\!\!\rho[\rho(e)/\mathrm{x}]\!\!>_{\mathit{env}}<\!\!m\!\!>_{\mathit{mem}}\land \varphi)}
$$
$\Gamma\equiv\exists X(o=<\!\!\mathrm{x}:=e\!\!>_k<\!\!\rho\!\!>_{\mathit{env}}<\!\!m\!\!>_{\mathit{mem}}\land \varphi)$ and $\Gamma'\equiv\exists  X(o=\!<\!\!\cdot\!\!>_k<\!\!\rho[\rho(e)/\mathrm{x}]\!\!>_{\mathit{env}}<\!\!m\!\!>_{\mathit{mem}}\land \varphi)$. Suppose there exists $\theta_\tau:\mathit{SVar}\to \mathcal{T}$ with $\theta_\tau\!\!\upharpoonright_{\mathit{SVar}/X}=\tau\!\!\upharpoonright_{\mathit{SVar}/X}$ such that $\gamma_0=<\!\!\mathrm{x}:=e\!\!>_k<\!\!\theta_\tau(\rho)\!\!>_{\mathit{env}}<\!\!\theta_\tau(m)\!\!>_{\mathit{mem}}$  and $\theta_\tau \models \varphi$ and
$$<\!\!\mathrm{x}:=e\!\!>_k<\!\!\theta_\tau(\rho)\!\!>_{\mathit{env}}<\!\!\theta_\tau(m)\!\!>_{\mathit{mem}}\xrightarrow{L}\gamma_1\xrightarrow{L}\cdots\xrightarrow{L}\gamma_i\xrightarrow{L}\gamma_{i+1}\xrightarrow{L}\cdots\xrightarrow{L}\gamma_n
$$
If $<\!\!\mathrm{x}:=e\!\!>_k<\!\!\theta_\tau(\rho)\!\!>_{\mathit{env}}<\!\!\theta_\tau(m)\!\!>_{\mathit{mem}}\xrightarrow{P}\gamma_1$, by ASGN1 rule, $$\gamma_1=<\!\!\cdot\!\!>_k<\!\!\theta_\tau(\rho)[\theta_\tau(\rho)(e)/\mathrm{x}]\!\!>_{env}<\!\!\theta_\tau(m)\!\!>_{mem}$$ and
$$
<\!\!\cdot\!\!>_k<\!\!\theta_\tau(\rho)[\theta_\tau(\rho)(e)/\mathrm{x}]\!\!>_{\mathit{env}}<\!\!\theta_\tau(m)\!\!>_{\mathit{mem}}\xrightarrow{E}\cdots\xrightarrow{E}\gamma_i\xrightarrow{E}\gamma_{i+1}\xrightarrow{E}\cdots\xrightarrow{E}\gamma_n
$$
Due to $\theta_\tau \models \varphi$, $(\gamma_1,\tau)\models\Gamma'$. Since $\sigma$ is an actual execution, we conclude $(\gamma_i,\tau)\models\Gamma'$, $2\leq i\leq n$.
If $<\!\!\mathrm{x}:=e\!\!>_k<\!\!\theta_\tau(\rho)\!\!>_{\mathit{env}}<\!\!\theta_\tau(m)\!\!>_{\mathit{mem}}\xrightarrow{E}\gamma_1$, since $\sigma$ is an actual execution, $(\gamma_1,\tau)\models\Gamma$ and there exits $j$ ($1< j\leq n-1$) such that
$\gamma_1\xrightarrow{E}\cdots\xrightarrow{E}\gamma_j\xrightarrow{P}\gamma_{j+1}\xrightarrow{E}\cdots\xrightarrow{E}\gamma_n$
with $(\gamma_i,\tau)\models\Gamma$, $1\leq i\leq j$.
Suppose there exists $\theta_\tau^1:\mathit{SVar}\to \mathcal{T}$ with $\theta_\tau^1\!\!\upharpoonright_{\mathit{SVar}/X}=\tau\!\!\upharpoonright_{\mathit{SVar}/X}$ such that $\gamma_j=<\!\!\mathrm{x}:=e\!\!>_k<\!\!\theta_\tau^1(\rho_j)\!\!>_{\mathit{env}}<\!\!\theta_\tau^1(m_j)\!\!>_{\mathit{mem}}$  and $\theta_\tau^1 \models \varphi$. By ASGN1 rule, we conclude $\gamma_{j+1}=<\!\!\cdot\!\!>_k<\!\!\theta_\tau^1(\rho_j)[\theta_\tau^1(\rho_j)(e)/\mathrm{x}]\!\!>_{\mathit{env}}<\!\!\theta_\tau^1(m_j)\!\!>_{\mathit{mem}}$. Due to $\theta_\tau^1 \models \varphi$, $(\gamma_{j+1},\tau)\models\Gamma'$. Since $\sigma$ is an actual execution, we conclude $(\gamma_i,\tau)\models\Gamma'$, $j+1\leq i\leq n$.\\
\textbf{Case2:} By M-ASGN2 rule
$$
\dfrac{\cdot}{\exists X\!(o\!=\!<\!\!\mathrm{x}\!:=\!\!A[e]\!\!>_k<\!\!\rho\!\!>_{\mathit{e\!n\!v}}<\!\!m\!\!>_{\mathit{m\!e\!m}}\!\land \!\varphi)\!\Downarrow\!\exists  X(o\!=\!<\!\!\cdot\!\!>_k<\!\!\rho[m(\rho(A)\!+_{I\!n\!t}\!\rho(e))/\mathrm{x}]\!\!>_{\mathit{e\!n\!v}}<\!\!m\!\!>_{\mathit{m\!e\!m}}\!\!\land \!\varphi)}$$
$\Gamma\equiv \exists X(o=<\!\!\mathrm{x}:=A[e]\!\!>_k<\!\!\rho\!\!>_{\mathit{env}}<\!\!m\!\!>_{\mathit{mem}}\land \varphi)$ and $\Gamma'\equiv \exists  X(o=\!<\!\!\cdot\!\!>_k<\!\!\rho[m(\rho(A)+_{Int}\rho(e))/\mathrm{x}]\!\!>_{\mathit{env}}<\!\!m\!\!>_{\mathit{mem}}\land \varphi)$. Suppose there exists $\theta_\tau:\mathit{SVar}\to \mathcal{T}$ with $\theta_\tau\!\!\upharpoonright_{\mathit{SVar}/X}=\tau\!\!\upharpoonright_{\mathit{SVar}/X}$ such that $\gamma_0=<\!\!\mathrm{x}:=A[e]\!\!>_k<\!\!\theta_\tau(\rho)\!\!>_{\mathit{env}}<\!\!\theta_\tau(m)\!\!>_{\mathit{mem}}$  and $\theta_\tau \models \varphi$ and
$$<\!\!\mathrm{x}:=A[e]\!\!>_k<\!\!\theta_\tau(\rho)\!\!>_{\mathit{env}}<\!\!\theta_\tau(m)\!\!>_{\mathit{mem}}\xrightarrow{L}\gamma_1\xrightarrow{L}\cdots\xrightarrow{L}\gamma_i\xrightarrow{L}\gamma_{i+1}\xrightarrow{L}\cdots\xrightarrow{L}\gamma_n
$$
If $<\!\!\mathrm{x}:=A[e]\!\!>_k<\!\!\theta_\tau(\rho)\!\!>_{\mathit{env}}<\!\!\theta_\tau(m)\!\!>_{\mathit{mem}}\xrightarrow{P}\gamma_1$, by ASGN2 rule, $$\gamma_1=<\!\!\cdot\!\!>_k<\!\!\theta_\tau(\rho)[\theta_\tau(m)(\theta_\tau(\rho)
(A)+_{Int}\theta_\tau(\rho)(e))/\mathrm{x}]\!\!>_{\mathit{env}}<\!\!\theta_\tau(m)\!\!>_{\mathit{mem}}$$
and
$$ <\!\!\cdot\!\!>_k<\!\!\theta_\tau(\rho)[\theta_\tau(m)(\theta_\tau(\rho)(A)+_{Int}\theta_\tau(\rho)(e))/\mathrm{x}]\!\!>_{\mathit{env}}<\!\!\theta_\tau(m)\!\!>_{\mathit{mem}}\xrightarrow{E}\cdots\xrightarrow{E}\gamma_n$$
Due to $\theta_\tau \models \varphi$, $(\gamma_1,\tau)\models\Gamma'$. Since $\sigma$ is actual execution, we conclude $(\gamma_i,\tau)\models\Gamma'$, $2\leq i\leq n$.\\
If $<\!\!\mathrm{x}:=A[e]\!\!>_k<\!\!\theta_\tau(\rho)\!\!>_{\mathit{env}}<\!\!\theta_\tau(m)\!\!>_{\mathit{mem}}\xrightarrow{E}\gamma_1$, since $\sigma$ is actual execution, $(\gamma_1,\tau)\models\Gamma$ and there exits $j$ ($1< j\leq n-1$) such that
$$\gamma_1\xrightarrow{E}\cdots\xrightarrow{E}\gamma_j\xrightarrow{P}\gamma_{j+1}\xrightarrow{E}\cdots\xrightarrow{E}\gamma_n$$
with $(\gamma_i,\tau)\models\Gamma$, $1\leq i\leq j$.
Suppose there exists $\theta_\tau^1:\mathit{SVar}\to \mathcal{T}$ with $\theta_\tau^1\!\!\upharpoonright_{\mathit{SVar}/X}=\tau\!\!\upharpoonright_{\mathit{SVar}/X}$ such that $\gamma_j=<\!\!\mathrm{x}:=A[e]\!\!>_k<\!\!\theta_\tau^1(\rho_j)\!\!>_{\mathit{env}}<\!\!\theta_\tau^1(m_j)\!\!>_{\mathit{mem}}$  and $\theta_\tau^1 \models \varphi$. By ASGN2 rule, we conclude $$\gamma_{j+1}=<\!\!\cdot\!\!>_k<\!\!\theta_\tau^1(\rho_j)[\theta_\tau^1(m_j)(\theta_\tau^1(\rho_j)(A)+_{Int}\theta_\tau^1(\rho_j)(e))/\mathrm{x}]\!\!>_{\mathit{env}}<\!\!\theta_\tau^1(m_j)\!\!>_{\mathit{mem}}$$ Due to $\theta_\tau^1 \models \varphi$, $(\gamma_{j+1},\tau)\models\Gamma'$. Since $\sigma$ is actual execution, we conclude $(\gamma_i,\tau)\models\Gamma'$, $j+1\leq i\leq n$.\\
\textbf{Case3:} By M-ASGN3 rule
$$\dfrac{\cdot}{\exists X\!(o\!\!=\!<\!\!A[e_1\!]\!:=\!e_2\!\!>_k<\!\!\rho\!\!>_{\mathit{e\!n\!v}}<\!\!m\!\!>_{\mathit{m\!e\!m}}\!\!\land \!\varphi)\!\Downarrow\!\exists  X\!(o\!\!=\!<\!\!\cdot\!\!>_k<\!\!\rho\!\!>_{\mathit{e\!n\!v}}<\!\!m[\rho(\!e_2\!)/(\rho(\!A\!)\!+_{I\!n\!t}\!\rho(\!e_1\!)\!)]\!\!>_{\mathit{m\!e\!m}}\!\!\land\! \varphi)}$$
$\Gamma\equiv \exists X(o=<\!\!A[e_1]:=e_2\!\!>_k<\!\!\rho\!\!>_{\mathit{env}}<\!\!m\!\!>_{\mathit{mem}}\land \varphi)$ and $\Gamma'\equiv \exists  X(o=\!<\!\!\cdot\!\!>_k<\!\!\rho\!\!>_{\mathit{env}}<\!\!m[\rho(e_2)/(\rho(A)+_{Int}\rho(e_1))]\!\!>_{\mathit{mem}}\land \varphi)$.
Suppose there exists $\theta_\tau:\mathit{SVar}\to \mathcal{T}$ with $\theta_\tau\!\!\upharpoonright_{\mathit{SVar}/X}=\tau\!\!\upharpoonright_{\mathit{SVar}/X}$ such that $\gamma_0=<\!\!A[e_1]:=e_2\!\!>_k<\!\!\theta_\tau(\rho)\!\!>_{\mathit{env}}<\!\!\theta_\tau(m)\!\!>_{\mathit{mem}}$  and $\theta_\tau \models \varphi$ and
$$
<\!\!A[e_1]:=e_2\!\!>_k<\!\!\theta_\tau(\rho)\!\!>_{\mathit{env}}<\!\!\theta_\tau(m)\!\!>_{\mathit{mem}}\xrightarrow{L}\gamma_1\xrightarrow{L}\cdots\xrightarrow{L}\gamma_i\xrightarrow{L}\gamma_{i+1}\xrightarrow{L}\cdots\xrightarrow{L}\gamma_n
$$
If $<\!\!A[e_1]:=e_2\!\!>_k<\!\!\theta_\tau(\rho)\!\!>_{\mathit{env}}<\!\!\theta_\tau(m)\!\!>_{\mathit{mem}}\xrightarrow{P}\gamma_1$, by ASGN3 rule,
$$\gamma_1=<\!\!\cdot\!\!>_k<\!\!\theta_\tau(\rho)\!\!>_{\mathit{env}}<\!\!\theta_\tau(m)[\theta_\tau(\rho)(e_2)/(\theta_\tau(\rho)(A)+_{Int}\theta_\tau(\rho)(e_1))]\!\!>_{\mathit{mem}}$$ and
$$
<\!\!\cdot\!\!>_k<\!\!\theta_\tau(\rho)\!\!>_{\mathit{env}}<\!\!\theta_\tau(m)[\theta_\tau(\rho)(e_2)/(\theta_\tau(\rho)(A)+_{Int}\theta_\tau(\rho)(e_1))]\!\!>_{\mathit{mem}}\xrightarrow{E}\cdots\xrightarrow{E}\gamma_n$$
Due to $\theta_\tau \models \varphi$, $(\gamma_1,\tau)\models\Gamma'$. Since $\sigma$ is an actual execution, we conclude $(\gamma_i,\tau)\models\Gamma'$, $2\leq i\leq n$.
If $<\!\!A[e_1]:=e_2\!\!>_k<\!\!\theta_\tau(\rho)\!\!>_{\mathit{env}}<\!\!\theta_\tau(m)\!\!>_{\mathit{mem}}\xrightarrow{E}\gamma_1$, since $\sigma$ is an actual execution, $(\gamma_1,\tau)\models\Gamma$ and there exits $j$ ($1< j\leq n-1$) such that
$$
\gamma_1\xrightarrow{E}\cdots\xrightarrow{E}\gamma_j\xrightarrow{P}\gamma_{j+1}\xrightarrow{E}\cdots\xrightarrow{E}\gamma_n
$$
with $(\gamma_i,\tau)\models\Gamma$, $1\leq i\leq j$.
Suppose there exists $\theta_\tau^1:\mathit{SVar}\to \mathcal{T}$ with $\theta_\tau^1\!\!\upharpoonright_{\mathit{SVar}/X}=\tau\!\!\upharpoonright_{\mathit{SVar}/X}$ such that $\gamma_j=<\!\!A[e_1]:=e_2\!\!>_k<\!\!\theta_\tau^1(\rho_j)\!\!>_{\mathit{env}}<\!\!\theta_\tau^1(m_j)\!\!>_{\mathit{mem}}$  and $\theta_\tau^1 \models \varphi$. By ASGN3 rule, we conclude
$$
\gamma_{j+1}=<\!\!\cdot\!\!>_k<\!\!\theta_\tau^1(\rho_j)\!\!>_{\mathit{env}}<\!\!\theta_\tau^1(m_j)[\theta_\tau^1(\rho_j)(e_2)/(\theta_\tau^1(\rho_j)(A)+_{Int}\theta_\tau^1(\rho_j)(e_1))]\!\!>_{\mathit{mem}}
$$
Due to $\theta_\tau^1 \models \varphi$, $(\gamma_{j+1},\tau)\models\Gamma'$. Since $\sigma$ is an actual execution, we conclude $(\gamma_i,\tau)\models\Gamma'$, $j+1\leq i\leq n$.\\
\textbf{Case4:} By M-ARRAY rule
$$\dfrac{\cdot}{\exists X\!(o\!\!=\!<\!\!A\!\!:=\!\!A\!r\!r\!a\!y(\overline{e})\!\!>_k<\!\!\rho\!\!>_{\mathit{e\!n\!v}}<\!\!m\!\!>_{\mathit{m\!e\!m}}\!\land \!\varphi)\!\Downarrow\!\exists (\!X\!\cup \!\{p\})(o\!\!=\!<\!\!\cdot\!\!>_k<\!\!\rho[p/A]\!\!>_{\mathit{e\!n\!v}}<\!\!p\!\mapsto\![\rho(\overline{e})],m\!\!>_{\mathit{m\!e\!m}}\!\!\land \!\varphi)}$$
$\Gamma\equiv \exists X(o=<\!\!A:=Array(\overline{e})\!\!>_k<\!\!\rho\!\!>_{\mathit{env}}<\!\!m\!\!>_{\mathit{mem}}\land \varphi)$ and $\Gamma'\equiv \exists  (X\cup \{p\})(o=\!<\!\!\cdot\!\!>_k<\!\!\rho[p/A]\!\!>_{\mathit{env}}<\!\!p\mapsto[\rho(\overline{e})],m\!\!>_{\mathit{mem}}\land \varphi)$.
Suppose there exists $\theta_\tau:\mathit{SVar}\to \mathcal{T}$ with $\theta_\tau\!\!\upharpoonright_{\mathit{SVar}/X}=\tau\!\!\upharpoonright_{\mathit{SVar}/X}$ such that $\gamma_0=<\!\!A:=Array(\overline{e})\!\!>_k<\!\!\theta_\tau(\rho)\!\!>_{\mathit{env}}<\!\!\theta_\tau(m)\!\!>_{\mathit{mem}}$  and $\theta_\tau \models \varphi$ and
$$
<\!\!A:=Array(\overline{e})\!\!>_k<\!\!\theta_\tau(\rho)\!\!>_{\mathit{env}}<\!\!\theta_\tau(m)\!\!>_{\mathit{mem}}\xrightarrow{L}\gamma_1\xrightarrow{L}\cdots\xrightarrow{L}\gamma_i\xrightarrow{L}\gamma_{i+1}\xrightarrow{L}\cdots\xrightarrow{L}\gamma_n
$$
If $<\!\!A:=Array(\overline{e})\!\!>_k<\!\!\theta_\tau(\rho)\!\!>_{\mathit{env}}<\!\!\theta_\tau(m)\!\!>_{\mathit{mem}}\xrightarrow{P}\gamma_1$, by ARRAY rule,
$$\gamma_1=<\!\!\cdot\!\!>_k<\!\!\theta_\tau(\rho)[\theta_\tau(p)/A]\!\!>_{\mathit{env}}<\!\!\theta_\tau(p)\mapsto[\theta_\tau(\rho)(\overline{e})],\theta_\tau(m)\!\!>_{\mathit{mem}}$$ and
$$
<\!\!\cdot\!\!>_k<\!\!\theta_\tau(\rho)[\theta_\tau(p)/A]\!\!>_{\mathit{env}}<\!\!\theta_\tau(p)\mapsto[\theta_\tau(\rho)(\overline{e})],\theta_\tau(m)\!\!>_{\mathit{mem}}\xrightarrow{E}\cdots\xrightarrow{E}\gamma_i\xrightarrow{E}\gamma_{i+1}\xrightarrow{E}\cdots
\xrightarrow{E}\gamma_n$$
Set $\theta_\tau^1\!\!\upharpoonright_{\mathit{SVar}/(X\cup\{p\})}=\tau\!\!\upharpoonright_{\mathit{SVar}/(X\cup\{p\})}$ and $\theta_\tau^1\!\!\upharpoonright_{X}=\theta_\tau\!\!\upharpoonright_{X}$ and $\theta_\tau^1\!\!\upharpoonright_{p}=\tau\!\!\upharpoonright_{p}$. Due to $\theta_\tau \models \varphi$, $\theta_\tau^1 \models \varphi$ and $(\gamma_1,\tau)\models\Gamma'$. Since $\sigma$ is an actual execution, we conclude $(\gamma_i,\tau)\models\Gamma'$, $2\leq i\leq n$.\\
If $<\!\!A:=Array(\overline{e})\!\!>_k<\!\!\theta_\tau(\rho)\!\!>_{\mathit{env}}<\!\!\theta_\tau(m)\!\!>_{\mathit{mem}}\xrightarrow{E}\gamma_1$, since $\sigma$ is an actual execution, $(\gamma_1,\tau)\models\Gamma$ and there exits $j$ ($1< j\leq n-1$) such that
$$
\gamma_1\xrightarrow{E}\cdots\xrightarrow{E}\gamma_j\xrightarrow{P}\gamma_{j+1}\xrightarrow{E}\cdots\xrightarrow{E}\gamma_n
$$
with $(\gamma_i,\tau)\models\Gamma$, $1\leq i\leq j$.
Suppose there exists $\theta_\tau^2:\mathit{SVar}\to \mathcal{T}$ with $\theta_\tau^2\!\!\upharpoonright_{\mathit{SVar}/X}=\tau\!\!\upharpoonright_{\mathit{SVar}/X}$ such that $\gamma_j=<\!\!A:=Array(\overline{e})\!\!>_k<\!\!\theta_\tau^2(\rho_j)\!\!>_{\mathit{env}}<\!\!\theta_\tau^2(m_j)\!\!>_{\mathit{mem}}$  and $\theta_\tau^2 \models \varphi$. By  ARRAY rule, we conclude
$$
\gamma_{j+1}=<\!\!\cdot\!\!>_k<\!\!\theta_\tau^2(\rho_j)[\theta_\tau^2(p)/A]\!\!>_{\mathit{env}}<\!\!\theta_\tau^2(p)\mapsto[\theta_\tau^2(\rho_j)(\overline{e})],\theta_\tau^2(m_j)\!\!>_{\mathit{mem}}
$$
Set $\theta_\tau^3\!\!\upharpoonright_{\mathit{SVar}/(X\cup\{p\})}=\tau\!\!\upharpoonright_{\mathit{SVar}/(X\cup\{p\})}$ and $\theta_\tau^3\!\!\upharpoonright_{X}=\theta_\tau^2\!\!\upharpoonright_{X}$ and $\theta_\tau^3\!\!\upharpoonright_{p}=\tau\!\!\upharpoonright_{p}$.
Due to $\theta_\tau^2 \models \varphi$, $\theta_\tau^3 \models \varphi$ and $(\gamma_{j+1},\tau)\models\Gamma'$. Since $\sigma$ is an actual execution, we conclude $(\gamma_i,\tau)\models\Gamma'$, $j+1\leq i\leq n$.\\
\textbf{Case5:} By M-SEQ rule
\begin{align*}
  &\exists X_1(o=<\!\!c_1\!\!>_k<\!\!\rho_1\!\!>_{\mathit{env}}<\!\!m_1\!\!>_{\mathit{mem}}\land \varphi_1)\Downarrow \exists X_2(o=<\!\!\cdot\!\!>_k<\!\!\rho_2\!\!>_{\mathit{env}}<\!\!m_2\!\!>_{\mathit{mem}}\land \varphi_2)\\
  &\exists X_2(o=<\!\!c_2\!\!>_k<\!\!\rho_2\!\!>_{\mathit{env}}<\!\!m_2\!\!>_{\mathit{mem}}\land \varphi_2)\Downarrow\exists  X_3(o=<\!\!\cdot\!\!>_k<\!\!\rho_3\!\!>_{\mathit{env}}<\!\!m_3\!\!>_{\mathit{mem}}\land \varphi_3)\\
  &\overline{\exists X_1(o=<\!\!c_1;c_2\!\!>_k<\!\!\rho_1\!\!>_{\mathit{env}}<\!\!m_1\!\!>_{\mathit{mem}}\land \varphi_1)\Downarrow\exists X_3(o=<\!\!\cdot\!\!>_k<\!\!\rho_3\!\!>_{\mathit{env}}<\!\!m_3\!\!>_{\mathit{mem}}\land \varphi_3)}
\end{align*}
$\Gamma\equiv \exists X_1(o=<\!\!c_1;c_2\!\!>_k<\!\!\rho_1\!\!>_{\mathit{env}}<\!\!m_1\!\!>_{\mathit{mem}}\land \varphi_1)$ and
$\Gamma'\equiv \exists X_3(o=<\!\!\cdot\!\!>_k<\!\!\rho_3\!\!>_{\mathit{env}}<\!\!m_3\!\!>_{\mathit{mem}}\land \varphi_3)$.
Suppose there exists $\theta_\tau:\mathit{SVar}\to \mathcal{T}$ with $\theta_\tau\!\!\upharpoonright_{\mathit{SVar}/X_1}=\tau\!\!\upharpoonright_{\mathit{SVar}/X_1}$ such that $\gamma_0=<\!\!c_1;c_2\!\!>_k<\!\!\theta_\tau(\rho_1)\!\!>_{\mathit{env}}<\!\!\theta_\tau(m_1)\!\!>_{\mathit{mem}}$  and $\theta_\tau \models \varphi_1$ and
$$
<\!\!c_1;c_2\!\!>_k<\!\!\theta_\tau(\rho)\!\!>_{\mathit{env}}<\!\!\theta_\tau(m)\!\!>_{\mathit{mem}}\xrightarrow{L}\gamma_1\xrightarrow{L}\cdots\xrightarrow{L}\gamma_i\xrightarrow{L}\gamma_{i+1}\xrightarrow{L}\cdots\xrightarrow{L}\gamma_n
$$
Proposition 5 implies there exits some final configuration $<\!\!\cdot\!\!>_k<\!\!\rho'\!\!>_{\mathit{env}}<\!\!m'\!\!>_{\mathit{mem}}$ such that
$$<\!\!c_1\!\!>_k<\!\!\theta_\tau(\rho_1)\!\!>_{\mathit{env}}<\!\!\theta_\tau(m_1)\!\!>_{\mathit{mem}}\xrightarrow{L}^*<\!\!\cdot\!\!>_k<\!\!\rho'\!\!>_{\mathit{env}}<\!\!m'\!\!>_{\mathit{mem}}$$ and
$$<\!\!c_2\!\!>_k<\!\!\rho'\!\!>_{\mathit{env}}<\!\!m'\!\!>_{\mathit{mem}}\xrightarrow{L}^*\gamma_n$$
By the induction hypothesis of
$$
\exists X_1(o=<\!\!c_1\!\!>_k<\!\!\rho_1\!\!>_{\mathit{env}}<\!\!m_1\!\!>_{\mathit{mem}}\land \varphi_1)\Downarrow \exists X_2(o=<\!\!\cdot\!\!>_k<\!\!\rho_2\!\!>_{\mathit{env}}<\!\!m_2\!\!>_{\mathit{mem}}\land \varphi_2)
$$
we conclude
$(<\!\!\cdot\!\!>_k<\!\!\rho'\!\!>_{\mathit{env}}<\!\!m'\!\!>_{\mathit{mem}},\tau)\models \exists X_2(o=<\!\!\cdot\!\!>_k<\!\!\rho_2\!\!>_{\mathit{env}}<\!\!m_2\!\!>_{\mathit{mem}}\land \varphi_2)$.
Since $c_2$  is ground,
$(<\!\!c_2\!\!>_k<\!\!\rho'\!\!>_{\mathit{env}}<\!\!m'\!\!>_{\mathit{mem}},\tau)\models \exists X_2(o=<\!\!c_2\!\!>_k<\!\!\rho_2\!\!>_{\mathit{env}}<\!\!m_2\!\!>_{\mathit{mem}}\land \varphi_2)$. By the induction hypothesis of
$$
\exists X_2(o=<\!\!c_2\!\!>_k<\!\!\rho_2\!\!>_{\mathit{env}}<\!\!m_2\!\!>_{\mathit{mem}}\land \varphi_2)\Downarrow\exists  X_3(o=<\!\!\cdot\!\!>_k<\!\!\rho_3\!\!>_{\mathit{env}}<\!\!m_3\!\!>_{\mathit{mem}}\land \varphi_3)
$$
we conclude $(\gamma_n,\tau)\models \exists  X_3(o=<\!\!\cdot\!\!>_k<\!\!\rho_3\!\!>_{\mathit{env}}<\!\!m_3\!\!>_{\mathit{mem}}\land \varphi_3)$.\\
\textbf{Case6:} By M-IF rule
\begin{align*}
&\exists X_1(o\!=\!<\!\!c_1\!\!>_k<\!\!\rho\!\!>_{\mathit{e\!n\!v}}<\!\!m\!\!>_{\mathit{m\!e\!m}}\!\land \!\varphi \!\land \! (\rho(b)\;\;is\;\; true))\!\Downarrow\!\exists X_2(o\!=\!<\!\!\cdot\!\!>_k<\!\!\rho'\!\!>_{\mathit{e\!n\!v}}<\!\!m'\!\!>_{\mathit{m\!e\!m}}\!\land \!\varphi')\\
&\exists X_1(o\!=\!<\!\!c_2\!\!>_k<\!\!\rho\!\!>_{\mathit{e\!n\!v}}<\!\!m\!\!>_{\mathit{m\!e\!m}}\!\land \!\varphi \! \land\!(\rho(b)\;\;is\;\; false))\!\Downarrow\!\exists X_2(o\!=\!<\!\!\cdot\!\!>_k<\!\!\rho'\!\!>_{\mathit{e\!n\!v}}<\!\!m'\!\!>_{\mathit{m\!e\!m}}\!\land\! \varphi')\\
&\overline{\exists X_1(o=<\!\!\textrm{if }(b)c_1\textrm{ else }c_2 \!\!>_k<\!\!\rho\!\!>_{\mathit{env}}<\!\!m\!\!>_{\mathit{mem}}\land \varphi)\Downarrow\exists X_2(o=<\!\!\cdot\!\!>_k<\!\!\rho'\!\!>_{\mathit{env}}<\!\!m'\!\!>_{\mathit{mem}}\land \varphi')}
\end{align*}
$\Gamma\equiv\exists X_1(o=<\!\!\textrm{if }(b)c_1\textrm{ else }c_2 \!\!>_k<\!\!\rho\!\!>_{\mathit{env}}<\!\!m\!\!>_{\mathit{mem}}\land \varphi)$ and $\Gamma'\equiv\exists X_2(o=<\!\!\cdot\!\!>_k<\!\!\rho'\!\!>_{\mathit{env}}<\!\!m'\!\!>_{\mathit{mem}}\land \varphi')$.
Suppose there exists $\theta_\tau:\mathit{SVar}\to \mathcal{T}$ with $\theta_\tau\!\!\upharpoonright_{\mathit{SVar}/X_1}=\tau\!\!\upharpoonright_{\mathit{SVar}/X_1}$ such that $\gamma_0=<\!\!\textrm{if }(b)c_1\textrm{ else }c_2\!\!>_k<\!\!\theta_\tau(\rho)\!\!>_{\mathit{env}}<\!\!\theta_\tau(m)\!\!>_{\mathit{mem}}$  and $\theta_\tau \models \varphi$ and
$$
<\!\!\textrm{if }(b)c_1\textrm{ else }c_2\!\!>_k<\!\!\theta_\tau(\rho)\!\!>_{\mathit{env}}<\!\!\theta_\tau(m)\!\!>_{\mathit{mem}}\xrightarrow{L}\gamma_1\xrightarrow{L}\cdots\xrightarrow{L}\gamma_i\xrightarrow{L}\gamma_{i+1}\xrightarrow{L}\cdots\xrightarrow{L}\gamma_n
$$
If $<\!\!\textrm{if }(b)c_1\textrm{ else }c_2\!\!>_k<\!\!\theta_\tau(\rho)\!\!>_{\mathit{env}}<\!\!\theta_\tau(m)\!\!>_{\mathit{mem}}\xrightarrow{P}\gamma_1$, we distinguish two cases according to whether $\theta_\tau(\rho)(b)$ is true or not.\\
\textbf{Case6.1:} $\theta_\tau(\rho)(b)$ is true; by IF1 rule,
$$
<\!\!\textrm{if }(b)c_1\textrm{ else }c_2\!\!>_k<\!\!\theta_\tau(\rho)\!\!>_{\mathit{env}}<\!\!\theta_\tau(m)\!\!>_{\mathit{mem}}\xrightarrow{P}<\!\!c_1\!\!>_k<\!\!\theta_\tau(\rho)\!\!>_{\mathit{env}}<\!\!\theta_\tau(m)\!\!>_{\mathit{mem}}\xrightarrow{L}^*\gamma_n
$$
By the induction hypothesis of
$$
\exists X_1(o\!=\!<\!\!c_1\!\!>_k<\!\!\rho\!\!>_{\mathit{e\!n\!v}}<\!\!m\!\!>_{\mathit{m\!e\!m}}\!\land \!\varphi\! \land \! (\rho(b)\;\;is\;\; true))\!\Downarrow\!\exists X_2(o\!=\!<\!\!\cdot\!\!>_k<\!\!\rho'\!\!>_{\mathit{e\!n\!v}}<\!\!m'\!\!>_{\mathit{m\!e\!m}}\!\land\! \varphi')
$$
we conclude
$(\gamma_n,\tau)\models\exists X_2(o=<\!\!\cdot\!\!>_k<\!\!\rho'\!\!>_{\mathit{env}}<\!\!m'\!\!>_{\mathit{mem}}\land \varphi')$.\\
\textbf{Case6.2:} $\theta_\tau(\rho)(b)$ is false; by IF2 rule,
$$
<\!\!\textrm{if }(b)c_1\textrm{ else }c_2\!\!>_k<\!\!\theta_\tau(\rho)\!\!>_{\mathit{env}}<\!\!\theta_\tau(m)\!\!>_{\mathit{mem}}\xrightarrow{P}<\!\!c_2\!\!>_k<\!\!\theta_\tau(\rho)\!\!>_{\mathit{env}}<\!\!\theta_\tau(m)\!\!>_{\mathit{mem}}\xrightarrow{L}^*\gamma_n
$$
By the induction hypothesis of
$$
\exists X_1(o\!=\!<\!\!c_2\!\!>_k<\!\!\rho\!\!>_{\mathit{e\!n\!v}}<\!\!m\!\!>_{\mathit{m\!e\!m}}\!\land \!\varphi\! \land \! (\rho(b)\;\;is\;\; false))\!\Downarrow\!\exists X_2(o=<\!\!\cdot\!\!>_k<\!\!\rho'\!\!>_{\mathit{e\!n\!v}}<\!\!m'\!\!>_{\mathit{m\!e\!m}}\!\land\! \varphi')
$$
we conclude
$(\gamma_n,\tau)\models\exists X_2(o=<\!\!\cdot\!\!>_k<\!\!\rho'\!\!>_{\mathit{env}}<\!\!m'\!\!>_{\mathit{mem}}\land \varphi')$.\\
If $<\!\!\textrm{if }(b)c_1\textrm{ else }c_2\!\!>_k<\!\!\theta_\tau(\rho)\!\!>_{\mathit{env}}<\!\!\theta_\tau(m)\!\!>_{\mathit{mem}}\xrightarrow{E}\gamma_1$,  since $\sigma$ is an actual execution, $(\gamma_1,\tau)\models\Gamma$ and there exits $j$ ($1< j\leq n-1$) such that
$$
\gamma_1\xrightarrow{E}\cdots\xrightarrow{E}\gamma_j\xrightarrow{P}\gamma_{j+1}\xrightarrow{E}\cdots\xrightarrow{E}\gamma_n
$$
with $(\gamma_i,\tau)\models\Gamma$, $1\leq i\leq j$. Suppose there exists $\theta_\tau^1:\mathit{SVar}\to \mathcal{T}$ with $\theta_\tau^1\!\!\upharpoonright_{\mathit{SVar}/X_1}=\tau\!\!\upharpoonright_{\mathit{SVar}/X_1}$ such that $\gamma_j=<\!\!\textrm{if }(b)c_1\textrm{ else }c_2\!\!>_k<\!\!\theta_\tau^1(\rho_j)\!\!>_{\mathit{env}}<\!\!\theta_\tau^1(m_j)\!\!>_{\mathit{mem}}$  and $\theta_\tau^1 \models \varphi$. We also distinguish two cases according to whether $\theta_\tau^1(\rho_j)(b)$ is true or not.\\
\textbf{Case6.3:} $\theta_\tau^1(\rho_j)(b)$ is true; by IF1 rule,
$$
<\!\!\textrm{if }(b)c_1\textrm{ else }c_2\!\!>_k<\!\!\theta_\tau^1(\rho_j)\!\!>_{\mathit{env}}<\!\!\theta_\tau^1(m_j)\!\!>_{\mathit{mem}}\xrightarrow{P}<\!\!c_1\!\!>_k<\!\!\theta_\tau^1(\rho_j)\!\!>_{\mathit{env}}<\!\!\theta_\tau^1(m_j)\!\!>_{\mathit{mem}}\xrightarrow{L}^*\gamma_n
$$
Since $\textrm{if }(b)c_1\textrm{ else }c_2$ is ground,
$$
(<\!\!c_1\!\!>_k<\!\!\theta_\tau^1(\rho_j)\!\!>_{\mathit{env}}<\!\!\theta_\tau^1(m_j)\!\!>_{\mathit{mem}},\tau)\models \exists X_1(o\!=\!<\!\!c_1\!\!>_k<\!\!\rho\!\!>_{\mathit{e\!n\!v}}<\!\!m\!\!>_{\mathit{m\!e\!m}}\!\land \!\varphi\! \land \! (\rho(b)\;\;is\;\; true))
$$
By the induction hypothesis of
$$
\exists X_1(o\!=\!<\!\!c_1\!\!>_k<\!\!\rho\!\!>_{\mathit{e\!n\!v}}<\!\!m\!\!>_{\mathit{m\!e\!m}}\!\land\! \varphi\! \land\! (\rho(b)\;\;is\;\; true))\!\Downarrow\!\exists X_2(o\!=\!<\!\!\cdot\!\!>_k<\!\!\rho'\!\!>_{\mathit{e\!n\!v}}<\!\!m'\!\!>_{\mathit{m\!e\!m}}\!\land \!\varphi')
$$
we conclude
$(\gamma_n,\tau)\models\exists X_2(o=<\!\!\cdot\!\!>_k<\!\!\rho'\!\!>_{\mathit{env}}<\!\!m'\!\!>_{\mathit{mem}}\land \varphi')$.\\
\textbf{Case6.4:} $\theta_\tau^1(\rho_j)(b)$ is false; by IF2 rule,
$$
<\!\!\textrm{if }(b)c_1\textrm{ else }c_2\!\!>_k<\!\!\theta_\tau^1(\rho_j)\!\!>_{\mathit{env}}<\!\!\theta_\tau^1(m_j)\!\!>_{\mathit{mem}}\xrightarrow{P}<\!\!c_2\!\!>_k<\!\!\theta_\tau^1(\rho_j)\!\!>_{\mathit{env}}<\!\!\theta_\tau^1(m_j)\!\!>_{\mathit{mem}}\xrightarrow{L}^*\gamma_n
$$
Since $\textrm{if }(b)c_1\textrm{ else }c_2$ is ground,
$$
(<\!\!c_2\!\!>_k<\!\!\theta_\tau^1(\rho_j)\!\!>_{\mathit{e\!n\!v}}<\!\!\theta_\tau^1(m_j)\!\!>_{\mathit{m\!e\!m}},\tau)\!\models \! \exists X_1(o\!=\!<\!\!c_2\!\!>_k<\!\!\rho\!\!>_{\mathit{e\!n\!v}}<\!\!m\!\!>_{\mathit{m\!e\!m}}\!\land\! \varphi\! \land \! (\rho(b)\;is\; false))
$$
By the induction hypothesis of
$$
\exists X_1(o\!=\!<\!\!c_2\!\!>_k<\!\!\rho\!\!>_{\mathit{e\!n\!v}}<\!\!m\!\!>_{\mathit{m\!e\!m}}\!\land\! \varphi\! \land \! (\rho(b)\;\;is\;\; false))\!\Downarrow\!\exists X_2(o\!=\!<\!\!\cdot\!\!>_k<\!\!\rho'\!\!>_{\mathit{e\!n\!v}}<\!\!m'\!\!>_{\mathit{m\!e\!m}}\!\land \!\varphi')
$$
we conclude
$(\gamma_n,\tau)\models\exists X_2(o=<\!\!\cdot\!\!>_k<\!\!\rho'\!\!>_{\mathit{env}}<\!\!m'\!\!>_{\mathit{mem}}\land \varphi')$.\\
\textbf{Case7:} By M-CONS rule
$$\dfrac{\models \Gamma \Rightarrow \Gamma_1, \Gamma_1\Downarrow \Gamma_1',\models \Gamma_1' \Rightarrow \Gamma'}{\Gamma\Downarrow \Gamma'}$$
Suppose $(\gamma_0,\tau)\models \Gamma$ and
$\gamma_0\xrightarrow{L}\gamma_1\xrightarrow{L}\cdots\xrightarrow{L}\gamma_i\xrightarrow{L}\gamma_{i+1}\xrightarrow{L}\cdots\xrightarrow{L}\gamma_n$.
Due to $\models \Gamma \Rightarrow \Gamma_1$, $(\gamma_0,\tau)\models \Gamma_1$. By the induction hypothesis of $\Gamma_1\Downarrow \Gamma_1'$, we conclude $(\gamma_n,\tau)\models \Gamma_1'$. Due to $\models \Gamma_1' \Rightarrow \Gamma'$,  $(\gamma_n,\tau)\models \Gamma'$.\\
\textbf{Case8:} By M-CASE rule
$$\dfrac{\models \Gamma \Rightarrow \Gamma_1\vee \Gamma_2, \Gamma_1\Downarrow \Gamma', \Gamma_2\Downarrow \Gamma'}{\Gamma\Downarrow \Gamma'}$$
Suppose $(\gamma_0,\tau)\models \Gamma$ and
$\gamma_0\xrightarrow{L}\gamma_1\xrightarrow{L}\cdots\xrightarrow{L}\gamma_i\xrightarrow{L}\gamma_{i+1}\xrightarrow{L}\cdots\xrightarrow{L}\gamma_n$.
Due to $\models \Gamma \Rightarrow \Gamma_1\vee \Gamma_2$, $(\gamma_0,\tau)\models \Gamma_1$ or $(\gamma_0,\tau)\models \Gamma_2$.
If $(\gamma_0,\tau)\models \Gamma_1$,
by the induction hypothesis of $\Gamma_1\Downarrow \Gamma'$, we conclude $(\gamma_n,\tau)\models \Gamma'$.
If $(\gamma_0,\tau)\models \Gamma_2$,
by the induction hypothesis of $\Gamma_2\Downarrow \Gamma'$, we conclude $(\gamma_n,\tau)\models \Gamma'$. Thus, $(\gamma_n,\tau)\models \Gamma'$.\\
\textbf{Case9:} By M-AWAIT rule
$$
\dfrac{\exists  X_1(o\!=\!<\!\!cc\!\!>_k<\!\!\rho\!\!>_{\mathit{e\!n\!v}}<\!\!m\!\!>_{\mathit{m\!e\!m}}\!\land \!\varphi \! \land\!(\rho(b)\;\;is\;\; true))\!\Downarrow\!\exists X_2(o\!=\!<\!\!\cdot\!\!>_k<\!\!\rho'\!\!>_{\mathit{e\!n\!v}}<\!\!m'\!\!>_{\mathit{m\!e\!m}}\!\land \! \varphi')}{\exists X_1(o\!=\!<\!\!\textrm{await }b\textrm{ then }\mathit{cc}\!\! >_k<\!\!\rho\!\!>_{\mathit{e\!n\!v}}<\!\!m\!\!>_{\mathit{m\!e\!m}}\!\land \!\varphi)\!\Downarrow\!\exists X_2(o\!=\!<\!\!\cdot\!\!>_k<\!\!\rho'\!\!>_{\mathit{e\!n\!v}}<\!\!m'\!\!>_{\mathit{m\!e\!m}}\!\land\! \varphi')}
$$
$\Gamma\equiv \exists X_1(o=<\!\!\textrm{await }b\textrm{ then }\mathit{cc}\!\! >_k<\!\!\rho\!\!>_{\mathit{env}}<\!\!m\!\!>_{\mathit{mem}}\land \varphi)$ and $\Gamma'\equiv \exists X_2(o=<\!\!\cdot\!\!>_k<\!\!\rho'\!\!>_{\mathit{env}}<\!\!m'\!\!>_{\mathit{mem}}\land \varphi')$.
Suppose there exists $\theta_\tau:\mathit{SVar}\to \mathcal{T}$ with $\theta_\tau\!\!\upharpoonright_{\mathit{SVar}/X_1}=\tau\!\!\upharpoonright_{\mathit{SVar}/X_1}$ such that $\gamma_0=<\!\!\textrm{await }b\textrm{ then }\mathit{cc}\!\!>_k<\!\!\theta_\tau(\rho)\!\!>_{\mathit{env}}<\!\!\theta_\tau(m)\!\!>_{\mathit{mem}}$  and $\theta_\tau \models \varphi$ and
$$
<\!\!\textrm{await }b\textrm{ then }\mathit{cc}\!\!>_k<\!\!\theta_\tau(\rho)\!\!>_{\mathit{env}}<\!\!\theta_\tau(m)\!\!>_{\mathit{mem}}\xrightarrow{L}\gamma_1\xrightarrow{L}\cdots\xrightarrow{L}\gamma_i\xrightarrow{L}\gamma_{i+1}\xrightarrow{L}\cdots\xrightarrow{L}\gamma_n
$$
If $<\!\!\textrm{await }b\textrm{ then }\mathit{cc}\!\!>_k<\!\!\theta_\tau(\rho)\!\!>_{\mathit{env}}<\!\!\theta_\tau(m)\!\!>_{\mathit{mem}}\xrightarrow{P}\gamma_1$, by AWAIT rule, we conclude $\gamma_1$ is a final configuration and $\theta_\tau(\rho)(b)$ is true and
$<\!\!\mathit{cc}\!\!>_k<\!\!\theta_\tau(\rho)\!\!>_{\mathit{env}}<\!\!\theta_\tau(m)\!\!>_{\mathit{mem}}\xrightarrow{L}^*\gamma_1$. By the induction hypothesis of
$$
\exists  X_1(o\!=\!<\!\!cc\!\!>_k<\!\!\rho\!\!>_{\mathit{e\!n\!v}}<\!\!m\!\!>_{\mathit{m\!e\!m}}\!\land\! \varphi \! \land\!(\rho(b)\;\;is\;\; true))\!\Downarrow\!\exists X_2(o\!=\!<\!\!\cdot\!\!>_k<\!\!\rho'\!\!>_{\mathit{e\!n\!v}}<\!\!m'\!\!>_{\mathit{m\!e\!m}}\!\land\!  \varphi')
$$
we conclude $(\gamma_1,\tau)\models\exists X_2(o=<\!\!\cdot\!\!>_k<\!\!\rho'\!\!>_{\mathit{env}}<\!\!m'\!\!>_{\mathit{mem}}\land  \varphi')$. Since $\sigma$ is an actual execution, $(\gamma_i,\tau)\models\Gamma'$, $2\leq i\leq n$.\\
If $<\!\!\textrm{await }b\textrm{ then }\mathit{cc}\!\!>_k<\!\!\theta_\tau(\rho)\!\!>_{\mathit{env}}<\!\!\theta_\tau(m)\!\!>_{\mathit{mem}}\xrightarrow{E}\gamma_1$,
since $\sigma$ is an actual execution, there exits $j$ ($1< j\leq n-1$) such that
$$\gamma_1\xrightarrow{E}\cdots\xrightarrow{E}\gamma_j\xrightarrow{P}\gamma_{j+1}\xrightarrow{E}\cdots\xrightarrow{E}\gamma_n$$
with $(\gamma_i,\tau)\models\Gamma$, $1\leq i\leq j$. Suppose there exists $\theta_\tau^1:\mathit{SVar}\to \mathcal{T}$ with $\theta_\tau^1\!\!\upharpoonright_{\mathit{SVar}/X_1}=\tau\!\!\upharpoonright_{\mathit{SVar}/X_1}$ such that $\gamma_j=<\!\!\textrm{await }b\textrm{ then }\mathit{cc}\!\!>_k<\!\!\theta_\tau^1(\rho_j)\!\!>_{\mathit{env}}<\!\!\theta_\tau^1(m_j)\!\!>_{\mathit{mem}}$  and $\theta_\tau^1 \models \varphi$. By AWAIT rule, we conclude $\gamma_{j+1}$ is a final configuration and $\theta_\tau^1(\rho_j)(b)$ is true and
$<\!\!\mathit{cc}\!\!>_k<\!\!\theta_\tau^1(\rho_j)\!\!>_{\mathit{env}}<\!\!\theta_\tau^1(m_j)\!\!>_{\mathit{mem}}\xrightarrow{L}^*\gamma_{j+1}$.
By the induction hypothesis of
$$\exists  X_1(o=<\!\!cc\!\!>_k<\!\!\rho\!\!>_{\mathit{env}}<\!\!m\!\!>_{\mathit{mem}}\land \varphi  \land(\rho(b)\;\;is\;\; true))\Downarrow\exists X_2(o=<\!\!\cdot\!\!>_k<\!\!\rho'\!\!>_{\mathit{env}}<\!\!m'\!\!>_{\mathit{mem}}\land  \varphi')$$
we conclude
$(\gamma_{j+1},\tau)\models\exists X_2(o=<\!\!\cdot\!\!>_k<\!\!\rho'\!\!>_{\mathit{env}}<\!\!m'\!\!>_{\mathit{mem}}\land  \varphi')$. Since $\sigma$ is an actual execution, $(\gamma_i,\tau)\models\Gamma'$, $j+1\leq i\leq n$.\\
\textbf{Case10:} By M-WHILE rule
$$\dfrac{\exists X(o\!=\!<\!\!c\!\!>_k<\!\!\rho\!\!>_{\mathit{e\!n\!v}}<\!\!m\!\!>_{\mathit{m\!e\!m}}\!\land\! \varphi \! \land\!(\rho(b)\;\;is\;\; true))\!\Downarrow\!\exists X(o\!=\!<\!\!\cdot\!\!>_k<\!\!\rho\!\!>_{\mathit{e\!n\!v}}<\!\!m\!\!>_{\mathit{m\!e\!m}}\!\land \!\varphi)}{\exists X(o\!=\!<\!\!\textrm{while }b\textrm{ do } c\!\!>_k<\!\!\rho\!\!>_{\mathit{e\!n\!v}}<\!\!m\!\!>_{\mathit{m\!e\!m}}\!\land \!\varphi)\!\Downarrow\!\exists X(o\!=\!<\!\!\cdot\!\!>_k<\!\!\rho\!\!>_{\mathit{e\!n\!v}}<\!\!m\!\!>_{\mathit{m\!e\!m}}\!\land \varphi\!\land\!(\rho(b)\;\;is\;\; false))}$$
$\Gamma\equiv \exists X(o=<\!\!\textrm{while }b\textrm{ do } c\!\!>_k<\!\!\rho\!\!>_{\mathit{env}}<\!\!m\!\!>_{\mathit{mem}}\land \varphi)$ and $\Gamma'\equiv \exists X(o=<\!\!\cdot\!\!>_k<\!\!\rho\!\!>_{\mathit{env}}<\!\!m\!\!>_{\mathit{mem}}\land \varphi\land(\rho(b)\;\;is\;\; false))$. Suppose there exists $\theta_\tau:\mathit{SVar}\to \mathcal{T}$ with $\theta_\tau\!\!\upharpoonright_{\mathit{SVar}/X}=\tau\!\!\upharpoonright_{\mathit{SVar}/X}$ such that $\gamma_0=<\!\!\textrm{while }b\textrm{ do } c\!\!>_k<\!\!\theta_\tau(\rho)\!\!>_{\mathit{env}}<\!\!\theta_\tau(m)\!\!>_{\mathit{mem}}$  and $\theta_\tau \models \varphi$ and
$$<\!\!\textrm{while }b\textrm{ do } c\!\!>_k<\!\!\theta_\tau(\rho)\!\!>_{\mathit{env}}<\!\!\theta_\tau(m)\!\!>_{\mathit{mem}}\xrightarrow{L}^n\gamma_n$$
We prove by well-founded induction on $n$ that
$$(\gamma_n,\tau)\models \exists X(o=<\!\!\cdot\!\!>_k<\!\!\rho\!\!>_{\mathit{env}}<\!\!m\!\!>_{\mathit{mem}}\land \varphi\land(\rho(b)\;\;is\;\; false))$$
There exit $n_1,n_2$ such that
$$<\!\!\textrm{while }b\textrm{ do } c\!\!>_k<\!\!\theta_\tau(\rho)\!\!>_{\mathit{e\!n\!v}}<\!\!\theta_\tau(m)\!\!>_{\mathit{m\!e\!m}}\xrightarrow{E}^{n_1}<\!\!\textrm{while }b\textrm{ do } c\!\!>_k<\!\!\rho'\!\!>_{\mathit{e\!n\!v}}<\!\!m'\!\!>_{\mathit{m\!e\!m}}\xrightarrow{P}\gamma'\xrightarrow{L}^{n_2}\gamma_n$$
with $n_1+n_2+1=n$.\\
If $\rho'(b)$ is true, by WHILE1 rule,
$\gamma'=<\!\!c;\;\textrm{while }b\textrm{ do } c \!\!>_k<\!\!\rho'\!\!>_{\mathit{env}}<\!\!m'\!\!>_{\mathit{mem}}$
and
$$<\!\!c;\;\textrm{while }b\textrm{ do } c \!\!>_k<\!\!\rho'\!\!>_{\mathit{env}}<\!\!m'\!\!>_{\mathit{mem}}\xrightarrow{L}^{n_2}\gamma_n$$
Proposition 5 implies there exits some final configuration $<\!\!\cdot\!\!>_k<\!\!\rho''\!\!>_{\mathit{env}}<\!\!m''\!\!>_{\mathit{mem}}$ such that
$$<\!\!c \!\!>_k<\!\!\rho'\!\!>_{\mathit{env}}<\!\!m'\!\!>_{\mathit{mem}}\xrightarrow{L}^*<\!\!\cdot\!\!>_k<\!\!\rho''\!\!>_{\mathit{env}}<\!\!m''\!\!>_{\mathit{mem}}$$
and
$$<\!\!\textrm{while }b\textrm{ do } c\!\!>_k<\!\!\rho''\!\!>_{\mathit{env}}<\!\!m''\!\!>_{\mathit{mem}}\xrightarrow{L}^{n_3}\gamma_n$$
with $n_3<n_2$. Since $\sigma$ is an actual execution, $(<\!\!\textrm{while }b\textrm{ do } c \!\!>_k<\!\!\rho'\!\!>_{\mathit{env}}<\!\!m'\!\!>_{\mathit{mem}},\tau)\models \Gamma$. Since $\textrm{while }b\textrm{ do } c$ is ground and $\rho'(b)$ is true,
$$(<\!\!c \!\!>_k<\!\!\rho'\!\!>_{\mathit{env}}<\!\!m'\!\!>_{\mathit{mem}},\tau)\models \exists X(o=<\!\!c\!\!>_k<\!\!\rho\!\!>_{\mathit{env}}<\!\!m\!\!>_{\mathit{mem}}\land \varphi \land(\rho(b)\;\;is\;\; true))$$
By the induction hypothesis of
$$\exists X(o=<\!\!c\!\!>_k<\!\!\rho\!\!>_{\mathit{env}}<\!\!m\!\!>_{\mathit{mem}}\land \varphi \land(\rho(b)\;\;is\;\; true))\Downarrow\exists X(o=<\!\!\cdot\!\!>_k<\!\!\rho\!\!>_{\mathit{env}}<\!\!m\!\!>_{\mathit{mem}}\land \varphi)$$
we conclude
$(<\!\!\cdot\!\!>_k<\!\!\rho''\!\!>_{\mathit{env}}<\!\!m''\!\!>_{\mathit{mem}},\tau)\models \exists X(o=<\!\!\cdot\!\!>_k<\!\!\rho\!\!>_{\mathit{env}}<\!\!m\!\!>_{\mathit{mem}}\land \varphi)$.
Since $\textrm{while }b\textrm{ do } c$ is ground,
$$(<\!\!\textrm{while }b\textrm{ do } c\!\!>_k<\!\!\rho''\!\!>_{\mathit{env}}<\!\!m''\!\!>_{\mathit{mem}},\tau)\models \exists X(o=<\!\!\textrm{while }b\textrm{ do } c\!\!>_k<\!\!\rho\!\!>_{\mathit{env}}<\!\!m\!\!>_{\mathit{mem}}\land \varphi)$$
By the inner induction hypothesis ($n_3<n$), we conclude
$$(\gamma_n,\tau)\models \exists X(o=<\!\!\cdot\!\!>_k<\!\!\rho\!\!>_{\mathit{env}}<\!\!m\!\!>_{\mathit{mem}}\land \varphi\land(\rho(b)\;\;is\;\; false))$$
If $\rho'(b)$ is false, by WHILE2 rule,
$\gamma'=<\!\!\cdot \!\!>_k<\!\!\rho'\!\!>_{\mathit{env}}<\!\!m'\!\!>_{\mathit{mem}}$.
Since $\sigma$ is an actual execution, $(<\!\!\textrm{while }b\textrm{ do } c \!\!>_k<\!\!\rho'\!\!>_{\mathit{env}}<\!\!m'\!\!>_{\mathit{mem}},\tau)\models \Gamma$. Since $\textrm{while }b\textrm{ do } c$ is ground and $\rho'(b)$ is false,
$$(<\!\!\cdot \!\!>_k<\!\!\rho'\!\!>_{\mathit{env}}<\!\!m'\!\!>_{\mathit{mem}},\tau)\models \exists X(o=<\!\!\cdot\!\!>_k<\!\!\rho\!\!>_{\mathit{env}}<\!\!m\!\!>_{\mathit{mem}}\land \varphi\land(\rho(b)\;\;is\;\; false))$$
Since $\sigma$ is an actual execution, $(\gamma_n,\tau)\models\Gamma'$.\\
\textbf{Case11:} By M-PAR rule
\begin{align*}
&\exists X_1(o=<\!\!c_1\!\!>_k<\!\!\rho\!\!>_{\mathit{env}}<\!\!m\!\!>_{\mathit{mem}}\land \varphi_1)\Downarrow\exists X_2(o=<\!\!\cdot\!\!>_k<\!\!\rho'\!\!>_{\mathit{env}}<\!\!m'\!\!>_{\mathit{mem}}\land \varphi_1')\\
&\exists X_1(o=<\!\!c_2\!\!>_k<\!\!\rho\!\!>_{\mathit{env}}<\!\!m\!\!>_{\mathit{mem}}\land \varphi_2)\Downarrow\exists X_2(o=<\!\!\cdot\!\!>_k<\!\!\rho'\!\!>_{\mathit{env}}<\!\!m'\!\!>_{\mathit{mem}}\land\varphi_2')\\
&\overline{\exists X_1(o\!=\!<\!\!c_1\parallel c_2\!\!>_k<\!\!\rho\!\!>_{env}<\!\!m\!\!>_{\mathit{mem}}\!\land\! \varphi_1 \!\land\! \varphi_2)\!\Downarrow\!\exists X_2(o\!=\!<\!\!\cdot\!\!>_k<\!\!\rho'\!\!>_{env}<\!\!m'\!\!>_{\mathit{mem}}\!\land\!\varphi_1'\!\land \!\varphi_2')}
\end{align*}
where $c_1,\;c_2\; \textrm{are interference-free}$. $\Gamma\equiv \exists X_1(o=<\!\!c_1\parallel c_2\!\!>_k<\!\!\rho\!\!>_{env}<\!\!m\!\!>_{\mathit{mem}}\land \varphi_1 \land \varphi_2)$ and $\Gamma'\equiv \exists X_2(o=<\!\!\cdot\!\!>_k<\!\!\rho'\!\!>_{env}<\!\!m'\!\!>_{\mathit{mem}}\land\varphi_1'\land \varphi_2')$. Suppose there exists $\theta_\tau:\mathit{SVar}\to \mathcal{T}$ with $\theta_\tau\!\!\upharpoonright_{\mathit{SVar}/X_1}=\tau\!\!\upharpoonright_{\mathit{SVar}/X_1}$ such that $\gamma_0=<\!\!c_1\parallel c_2\!\!>_k<\!\!\theta_\tau(\rho)\!\!>_{\mathit{env}}<\!\!\theta_\tau(m)\!\!>_{\mathit{mem}}$  and $\theta_\tau \models \varphi_1 \wedge \varphi_2$ and
$$<\!\!c_1\parallel c_2\!\!>_k<\!\!\theta_\tau(\rho)\!\!>_{\mathit{env}}<\!\!\theta_\tau(m)\!\!>_{\mathit{mem}}\xrightarrow{L}\gamma_1\xrightarrow{L}\cdots\xrightarrow{L}\gamma_i\xrightarrow{L}\gamma_{i+1}\xrightarrow{L}\cdots\xrightarrow{L}\gamma_n$$
Set $\gamma_i=<\!\!c_{1i}\parallel c_{2i}\!\!>_k<\!\!\rho_i\!\!>_{\mathit{env}}<\!\!m_i\!\!>_{\mathit{mem}}$, $0\leq i\leq n$ and $c_{10}=c_1$ and $c_{20}=c_2$ and $\rho_0=\theta_\tau(\rho)$ and $m_0=\theta_\tau(m)$. Proposition 3 implies
$$\gamma_0\xrightarrow{L}\gamma_1\xrightarrow{L}\cdots\xrightarrow{L}\gamma_i\xrightarrow{L}\gamma_{i+1}\xrightarrow{L}\cdots\xrightarrow{L}\gamma_n$$
can be broken down into two terminable
executions
$$\gamma_{10}\xrightarrow{L}\gamma_{11}\xrightarrow{L}\cdots\xrightarrow{L}\gamma_{1i}\xrightarrow{L}\gamma_{1(i+1)}\xrightarrow{L}\cdots\xrightarrow{L}\gamma_{1n}$$
and
$$\gamma_{20}\xrightarrow{L}\gamma_{21}\xrightarrow{L}\cdots\xrightarrow{L}\gamma_{2i}\xrightarrow{L}\gamma_{2(i+1)}\xrightarrow{L}\cdots\xrightarrow{L}\gamma_{2n}$$
where $\gamma_{1i}=<\!\!c_{1i}\!\!>_k<\!\!\rho_i\!\!>_{\mathit{env}}<\!\!m_i\!\!>_{\mathit{mem}}$ and $\gamma_{2i}=<\!\!c_{2i}\!\!>_k<\!\!\rho_i\!\!>_{\mathit{env}}<\!\!m_i\!\!>_{\mathit{mem}}$, $0\leq i<n$.
\begin{claim}
$\gamma_{10}\xrightarrow{L}\gamma_{11}\xrightarrow{L}\cdots\xrightarrow{L}\gamma_{1i}\xrightarrow{L}\gamma_{1(i+1)}\xrightarrow{L}\cdots\xrightarrow{L}\gamma_{1n}$
and
$\gamma_{20}\xrightarrow{L}\gamma_{21}\xrightarrow{L}\cdots\xrightarrow{L}\gamma_{2i}\xrightarrow{L}\gamma_{2(i+1)}\xrightarrow{L}\cdots\xrightarrow{L}\gamma_{2n}$
are actual executions.
\end{claim}
\begin{claimproof}
First, since $\sigma$ is an actual execution and $c_1$ is substructure of $c_1\parallel c_2$,
$c', c_1$ are "interference-free" where $c'\in S$ and $S$ represents a set of computations executed in parallel with $c_1\parallel c_2$. Second, $c_1, c_2$ are also "interference-free". Thus,
$\gamma_{10}\xrightarrow{L}\gamma_{11}\xrightarrow{L}\cdots\xrightarrow{L}\gamma_{1i}\xrightarrow{L}\gamma_{1i+1}\xrightarrow{L}\cdots\xrightarrow{L}\gamma_{1n}$
is an actual execution. Similarly,
$\gamma_{20}\xrightarrow{L}\gamma_{21}\xrightarrow{L}\cdots\xrightarrow{L}\gamma_{2i}\xrightarrow{L}\gamma_{2i+1}\xrightarrow{L}\cdots\xrightarrow{L}\gamma_{2n}$
is also an actual execution.
\end{claimproof}
By the induction  hypothesis of
$$\exists X_1(o=<\!\!c_1\!\!>_k<\!\!\rho\!\!>_{\mathit{env}}<\!\!m\!\!>_{\mathit{mem}}\land \varphi_1)\Downarrow\exists X_2(o=<\!\!\cdot\!\!>_k<\!\!\rho'\!\!>_{\mathit{env}}<\!\!m'\!\!>_{\mathit{mem}}\land \varphi_1')$$
we conclude $(\gamma_{1n},\tau)\models \exists X_2(o=<\!\!\cdot\!\!>_k<\!\!\rho'\!\!>_{\mathit{env}}<\!\!m'\!\!>_{\mathit{mem}}\land \varphi_1')$.\\
By the induction  hypothesis of
$$\exists X_1(o=<\!\!c_2\!\!>_k<\!\!\rho\!\!>_{\mathit{env}}<\!\!m\!\!>_{\mathit{mem}}\land \varphi_2)\Downarrow\exists X_2(o=<\!\!\cdot\!\!>_k<\!\!\rho'\!\!>_{\mathit{env}}<\!\!m'\!\!>_{\mathit{mem}}\land\varphi_2')$$
we conclude $(\gamma_{2n},\tau)\models \exists X_2(o=<\!\!\cdot\!\!>_k<\!\!\rho'\!\!>_{\mathit{env}}<\!\!m'\!\!>_{\mathit{mem}}\land\varphi_2')$.\\
Since $\gamma_{1n}=\gamma_{2n}=\gamma_{n}$, there exits $\theta_\tau^1:\mathit{SVar}\to \mathcal{T}$ with $\theta_\tau^1\!\!\upharpoonright_{\mathit{SVar}/X_2}=\tau\!\!\upharpoonright_{\mathit{SVar}/X_2}$ such that $\gamma_{1n}=\gamma_{2n}=<\!\!\cdot\!\!>_k<\!\!\theta_\tau^1(\rho')\!\!>_{\mathit{env}}<\!\!\theta_\tau^1(m')\!\!>_{\mathit{mem}}$  and $\theta_\tau^1 \models \varphi_1'$ and $\theta_\tau^1 \models \varphi_2'$. Therefor, $(\gamma_{n},\tau)\models \exists X_2(o=<\!\!\cdot\!\!>_k<\!\!\rho'\!\!>_{\mathit{env}}<\!\!m'\!\!>_{\mathit{mem}}\varphi_1'\land\varphi_2')$.\\
\textbf{Case12:} By M-SKIP rule
$$
\dfrac{\cdot}{\exists X(o=<\!\!skip\!\!>_k<\!\!\rho\!\!>_{\mathit{env}}<\!\!m\!\!>_{\mathit{mem}}\land \varphi)\Downarrow\exists X(o=<\!\!\cdot\!\!>_k<\!\!\rho\!\!>_{\mathit{env}}<\!\!m\!\!>_{\mathit{mem}}\land \varphi)}
$$
$\Gamma\equiv\exists X(o=<\!\!skip\!\!>_k<\!\!\rho\!\!>_{\mathit{env}}<\!\!m\!\!>_{\mathit{mem}}\land \varphi)$ and $\Gamma'\equiv\exists  X(o=\!<\!\!\cdot\!\!>_k<\!\!\rho\!\!>_{\mathit{env}}<\!\!m\!\!>_{\mathit{mem}}\land \varphi)$. Suppose there exists $\theta_\tau:\mathit{SVar}\to \mathcal{T}$ with $\theta_\tau\!\!\upharpoonright_{\mathit{SVar}/X}=\tau\!\!\upharpoonright_{\mathit{SVar}/X}$ such that $\gamma_0=<\!\!skip\!\!>_k<\!\!\theta_\tau(\rho)\!\!>_{\mathit{env}}<\!\!\theta_\tau(m)\!\!>_{\mathit{mem}}$  and $\theta_\tau \models \varphi$ and
$$
<\!\!skip\!\!>_k<\!\!\theta_\tau(\rho)\!\!>_{\mathit{env}}<\!\!\theta_\tau(m)\!\!>_{\mathit{mem}}\xrightarrow{L}\gamma_1\xrightarrow{L}\cdots\xrightarrow{L}\gamma_i\xrightarrow{L}\gamma_{i+1}\xrightarrow{L}\cdots\xrightarrow{L}\gamma_n
$$
If $<\!\!skip\!\!>_k<\!\!\theta_\tau(\rho)\!\!>_{\mathit{env}}<\!\!\theta_\tau(m)\!\!>_{\mathit{mem}}\xrightarrow{P}\gamma_1$, by SKIP rule, $\gamma_1=<\!\!\cdot\!\!>_k<\!\!\theta_\tau(\rho)\!\!>_{\mathit{env}}<\!\!\theta_\tau(m)\!\!>_{\mathit{mem}}$ and
$$<\!\!\cdot\!\!>_k<\!\!\theta_\tau(\rho)\!\!>_{\mathit{env}}<\!\!\theta_\tau(m)\!\!>_{\mathit{mem}}\xrightarrow{E}\cdots\xrightarrow{E}\gamma_i\xrightarrow{E}\gamma_{i+1}\xrightarrow{E}\cdots\xrightarrow{E}\gamma_n$$
Since $\theta_\tau \models \varphi$, $(\gamma_1,\tau)\models\Gamma'$. Since $\sigma$ is an actual execution,  $(\gamma_i,\tau)\models\Gamma'$, $2\leq i\leq n$.\\
If $<\!\!skip\!\!>_k<\!\!\theta_\tau(\rho)\!\!>_{\mathit{env}}<\!\!\theta_\tau(m)\!\!>_{\mathit{mem}}\xrightarrow{E}\gamma_1$, since $\sigma$ is an actual execution, $(\gamma_1,\tau)\models\Gamma$ and there exits $j$ ($1< j\leq n-1$) such that
$\gamma_1\xrightarrow{E}\cdots\xrightarrow{E}\gamma_j\xrightarrow{P}\gamma_{j+1}\xrightarrow{E}\cdots\xrightarrow{E}\gamma_n$
with $(\gamma_i,\tau)\models\Gamma$, $1\leq i\leq j$.
Suppose there exists $\theta_\tau^1:\mathit{SVar}\to \mathcal{T}$ with $\theta_\tau^1\!\!\upharpoonright_{\mathit{SVar}/X}=\tau\!\!\upharpoonright_{\mathit{SVar}/X}$ such that $\gamma_j=<\!\!skip\!\!>_k<\!\!\theta_\tau^1(\rho_j)\!\!>_{\mathit{env}}<\!\!\theta_\tau^1(m_j)\!\!>_{\mathit{mem}}$  and $\theta_\tau^1 \models \varphi$. By SKIP rule, we conclude $\gamma_{j+1}=<\!\!\cdot\!\!>_k<\!\!\theta_\tau^1(\rho_j)\!\!>_{\mathit{env}}<\!\!\theta_\tau^1(m_j)\!\!>_{\mathit{mem}}$. Due to $\theta_\tau^1 \models \varphi$, $(\gamma_{j+1},\tau)\models\Gamma'$. Since $\sigma$ is an actual execution,  $(\gamma_i,\tau)\models\Gamma'$, $j+1\leq i\leq n$.
\end{proof}
\section{Matching logic verifier for PIMP}
\begin{figure}
  \scriptsize
  \begin{tabular}{l}
    \\[-2mm]
    \hline
    \hline\\[-2mm]
    {\bf \small Matching logic verifier for PIMP}\\
    \hline\\[-2mm]
    \vspace{1mm}
   \textbf{Abstract Syntax:} \\
     \vspace{1mm}
  the same as Rewrite Theory Of PIMP, but adding:
  $\mathit{C}::=\cdots\mid \textrm{assert ass}$\\
 \textbf{Configurtation:}\\
 \vspace{1mm}
 the same as Rewrite Theory Of PIMP, but adding: \\
 $\mathit{PatternItem}::=<\!\!\mathit{C}\!\!>_k\mid<\!\!\mathit{Env}\!\!>_{env}\mid<\!\!\mathit{Mem}\!\!>_{mem}\mid<\!\!Form\!\!>_{form}\mid<\!\!Set_{\cdot}^{-,-}[SVar]\!\!>_{bnd}$\\
 \vspace{1mm}
 $\mathit{Pattern}::=<\mathit{Bag}_.^{-,-}[\mathit{PatternItem}]>$\\
 \vspace{1mm}
$\mathit{Ass}::=<\mathit{Env}>_{env}<\!\!\mathit{Mem}\!\!>_{mem}<Set_{\cdot}^{-,-}[SVar]>_{bnd}<Form>_{form}$\\
 \vspace{1mm}
$\mathit{Top}::=<\mathit{Set}_.^{-,-}[\mathit{pattern}]>$\\
 \textbf{Semantic Rules:}\\\\
 \textrm{V-NULL:}\\
   $\dfrac{\cdot}{<\!\!<\!\!\cdot\!\!>_k<\!\!\rho\!\!>_{env}<\!\!m\!\!>_{mem}<\!\!X\!\!>_{bnd}<\!\!\varphi\!\!>_{form}\!\!>\rightarrow<\!\!\cdot\!\!>}$\\\\
   \textrm{V-FALSE:}\\
    $\dfrac{\cdot}{<\!\!<\!\!c_1;c_2\!\!>_k<\!\!\rho\!\!>_{env}<\!\!m\!\!>_{mem}<\!\!X\!\!>_{bnd}<\!\!false\!\!>_{form}\!\!>\rightarrow<\!\!\cdot\!\!>}$\\\\
    \textrm{V-SKIP:}\\
    $\dfrac{\cdot}{<\!\!<\!\!skip;c\!\!>_k<\!\!\rho\!\!>_{env}<\!\!m\!\!>_{mem}<\!\!X\!\!>_{bnd}<\!\!\varphi\!\!>_{form}\!\!>\rightarrow<\!\!<\!\!c\!\!>_k<\!\!\rho\!\!>_{env}<\!\!m\!\!>_{mem}<\!\!X\!\!>_{bnd}<\!\!\varphi\!\!>_{form}\!\!>}$\\\\
    \textrm{V-ASSERT:}\\\\
    $\dfrac{\models<\!\!c\!\!>_k ass_2  \Rightarrow<\!\!c\!\!>_k ass_1}{<\!\!<\!\!\textrm{assert } ass_1;\;c\!\!>_k \;ass_2\!\!>\rightarrow<\!\!<\!\!c\!\!>_k ass_1\!\!>}$\\\\
    \textrm{V-ASGN1:}\\
    $\dfrac{\cdot}{<\!\!<\!\!\mathrm{x}\!:=\!e;c\!\!>_k<\!\!\rho\!\!>_{e\!n\!v}<\!\!m\!\!>_{m\!e\!m}<\!\!X\!\!>_{b\!n\!d}<\!\!\varphi\!\!>_{f\!o\!r\!m}\!\!>\!\rightarrow<\!\!<\!\!c\!\!>_k<\!\!\rho[\rho(e)/\mathrm{x}]\!\!>_{e\!n\!v}<\!\!m\!\!>_{m\!e\!m}<\!\!X\!\!>_{b\!n\!d}<\!\!\varphi\!\!>_{f\!o\!r\!m}\!\!>}$\\\\
   \textrm{V-ASGN2:}\\
   $\dfrac{\cdot}{<\!\!<\!\!\mathrm{x}:=A[e];c\!\!>_k<\!\!\rho\!\!>_{env}<\!\!m\!\!>_{mem}<\!\!X\!\!>_{bnd}<\!\!\varphi\!\!>_{form}\!\!>\rightarrow\qquad\qquad\qquad}$\\
   $<\!\!<\!\!c\!\!>_k<\!\!\rho[m(\rho(A)+_\mathit{Int}\rho(e))/\mathrm{x}]\!\!>_{env}<\!\!m\!\!>_{mem}<\!\!X\!\!>_{bnd}<\!\!\varphi\!\!>_{form}\!\!>$\\\\
   \textrm{V-ASGN3:}\\
   $\dfrac{\cdot}{<\!\!<\!\!A[e_1]:=e_2;c\!\!>_k<\!\!\rho\!\!>_{env}<\!\!m\!\!>_{mem}<\!\!X\!\!>_{bnd}<\!\!\varphi\!\!>_{form}\!\!>\rightarrow\qquad\qquad\qquad}$\\
   $<\!\!<\!\!c\!\!>_k<\!\!\rho\!\!>_{env}<\!\!m[\rho(e_2)/(\rho(A)+_\mathit{Int}\rho(e_1))]\!\!>_{mem}<\!\!X\!\!>_{bnd}<\!\!\varphi\!\!>_{form}\!\!>$\\\\
    \textrm{V-ARRAY:}\\
    $\dfrac{\cdot}{<\!\!<\!\!A:=\mathit{Array}(\overline{e});c\!\!>_k<\!\!\rho\!\!>_{env}<\!\!m\!\!>_{mem}<\!\!X\!\!>_{bnd}<\!\!\varphi\!\!>_{form}\!\!>\rightarrow}$\\
    $<\!\!<\!\!c\!\!>_k<\!\!\rho[p/A]\!\!>_{env}<\!\!p\mapsto[\rho(\overline{e})],m\!\!>_{mem}<\!\!X\cup \{p\}\!\!>_{bnd}<\!\!\varphi\!\!>_{form}\!\!>$\\\\
\textrm{V-IF:}\\
$\dfrac{\cdot}{<\!\!<\!\!(\textrm{if}(b)c_1\textrm{else}\,c_2);c_3\!\!>_k<\!\!\rho\!\!>_{env}<\!\!m\!\!>_{mem}<\!\!X\!\!>_{bnd}\!<\!\!\varphi\!\!>_{form}\!\!>\rightarrow\qquad\qquad}$\\
$<<\!\!c_1;c_3\!\!>_k<\!\!\rho\!\!>_{env}<\!\!m\!\!>_{mem}<\!\!X\!\!>_{bnd}\!<\!\!\varphi\wedge (\rho(b)\;\;is\;\;true)\!\!>_{form},$\\
$<\!\!c_2;c_3\!\!>_k<\!\!\rho\!\!>_{env}<\!\!m\!\!>_{mem}<\!\!X\!\!>_{bnd}\!<\!\!\varphi\wedge (\rho(b)\;\;is\;\;false)\!\!>_{form}>$\\\\
\textrm{V-WHILE:}\\
$\dfrac{\cdot}{<\!\!<\!\!(\textrm{while }b\textrm{ do }c_1);c_2\!\!>_k<\!\!\rho\!\!>_{env}<\!\!m\!\!>_{mem}<\!\!X\!\!>_{bnd}\!<\!\!\varphi\!\!>_{form}\!\!>\rightarrow\qquad\qquad\qquad\qquad\qquad\qquad\qquad\qquad}$\\
$<\!\!<\!\!c_1;\textrm{assert } (<\!\!\rho\!\!>_{e\!n\!v}<\!\!m\!\!>_{m\!e\!m}<\!\!X\!\!>_{b\!n\!d}\!<\!\!\varphi\!\!>_{f\!o\!r\!m})\!\!>_k<\!\!\rho\!\!>_{e\!n\!v}<\!\!m\!\!>_{m\!e\!m}<\!\!X\!\!>_{b\!n\!d}\!<\!\!\varphi\!\wedge \! (\rho(b)\;\;is\;\;true)\!\!>_{f\!o\!r\!m},$\\
$<\!\!c_2\!\!>_k<\!\!\rho\!\!>_{env}<\!\!m\!\!>_{mem}<\!\!X\!\!>_{bnd}\!<\!\!\varphi\wedge (\rho(b)\;\;is\;\;false)\!\!>_{form}>$\\\\
\textrm{V-AWAIT:}\\
$\dfrac{\cdot}{<\!\!<\!\!\textrm{await } b\textrm{ then } cc;c_2\!\!>_k<\!\!\rho\!\!>_{env}<\!\!m\!\!>_{mem}<\!\!X\!\!>_{bnd}\!<\!\!\varphi\!\!>_{form}\!\!>\rightarrow\qquad}$\\
$<\!\!<\!\!cc;c_2\!\!>_k<\!\!\rho\!\!>_{env}<\!\!m\!\!>_{mem}<\!\!X\!\!>_{bnd}\!<\!\!\varphi\wedge(\rho(b)\;\;is\;\;true)\!\!>_{form}\!\!>$\\\\
\textrm{V-PAR:}\\
$\dfrac{c_1,\;c_2\; \textrm{are interference-free}}{<\!\!<\!\!(c_1\parallel c_2);c_3\!\!>_k<\!\!\rho\!\!>_{env}<\!\!m\!\!>_{mem}<\!\!X\!\!>_{bnd}\!<\!\!\varphi_1\wedge\varphi_2\!\!>_{form}\!\!>\rightarrow\qquad\qquad\qquad\qquad\qquad\qquad}$\\
$<<\!\!c_1; assert(result(c_1,\rho,m,X,\varphi_1))\!\!>_k<\!\!\rho\!\!>_{env}<\!\!m\!\!>_{mem}<\!\!X\!\!>_{bnd}\!<\!\!\varphi_1\!\!>_{form},$\\
$<\!\!c_2;assert(result(c_2,\rho,m,X,\varphi_2))\!\!>_k<\!\!\rho\!\!>_{env}<\!\!m\!\!>_{mem}<\!\!X\!\!>_{bnd}\!<\!\!\varphi_2\!\!>_{form},$\\
$<<\!\!c_3\!\!>_k result(c_1,\rho,m,X,\varphi_1)\cap result(c_2,\rho,m,X,\varphi_2)>$\\\\
 \\
    \hline
    \hline
  \end{tabular}
  \caption{Matching logic verifier for PIMP}\label{Fig1}
\end{figure}
Intuitively, the matching logic verifier for PIMP is to execute rewriting logic
semantics symbolically on configuration patterns that we define in section 3. Figure 7 shows matching logic verifier for PIMP. First, we add $\mathit{patternItem}$ sort as algebraic infrastructure for pattern. Compared to $\mathit{Cfg}$ sort, two new sub cells are added, one to hold the bound variables set and the other to hold the constraints. Secondly, we add program annotations. $assert \; ass$ acts as program annotation and can be inserted at any place in the computation. Sort $Ass$ don't include the $<\!\!\cdots\!\!>_k$ because the purpose of program annotation is to describe the state of the program. We also add $\mathit{Top}$ sort to wrap a set of patterns such as $<\!\!\Gamma_1,\Gamma_2,\cdots,\Gamma_n\!\!>$ and each $\Gamma_i$ is a pattern with program annotation. Final, we introduce semantic rules. V-NULL and V-FALSE rules are used to dissolve pattern. A pattern $\Gamma_i$ is considered verified either when it translate into $<\!\!\cdot\!\!>_k<\!\!\rho\!\!>_{env}<\!\!m\!\!>_{mem}<\!\!X\!\!>_{bnd}<\!\!\varphi\!\!>_{form}$, meaning that all the program annotations have been validated,
or when it is found that constraint yield false. $<\!\!\Gamma_1,\Gamma_2,\cdots,\Gamma_n\!\!>$ is emptied means that the program is fully verified. V-ASSERT rule replaces pattern's $ass_2$ with $ass_1$, which is particularly important when $ass_1$ is a loop invariant. V-IF rule divides the current pattern into two patterns, corresponding to the two cases that the assumption about $b$ is true or false. V-WHILE rule assumes that ass1 of the current pattern is invariant and generates two branches.
In one branch, $\rho(b)$ is true indicating the invariant holds, and in the other branch, $\rho(b)$ is false ignoring the computation of current pattern. In V-AWAIT rule, $cc$ is also executed as an indivisible action. In V-PARALLEL rule,  $\mathit{result}(\_):\mathit{PatternItem}\rightarrow \mathit{Ass}$ is a function. In later chapters, we abbreviate $\mathit{result}(<\!\!c\!\!>_k<\!\!\rho\!\!>_{env}<\!\!m\!\!>_{mem}<\!\!X\!\!>_{bnd}<\!\!\varphi\!\!>_{form})$ as $\mathit{result}(c,\rho,m,X,\varphi)$. If
$$<\!\!<\!\!c_1\!\!>_k<\!\!\rho\!\!>_{env}<\!\!m\!\!>_{mem}<\!\!X\!\!>_{bnd}<\!\!\varphi_1\!\!>_{form}\!\!>\rightarrow^*<\!\!<\!\!\cdot\!\!>_k<\!\!\rho'\!\!>_{env}<\!\!m'\!\!>_{mem}<\!\!X'\!\!>_{bnd}<\!\!\varphi_1'\!\!>_{form}\!\!>$$
$$<\!\!<\!\!c_2\!\!>_k<\!\!\rho\!\!>_{env}<\!\!m\!\!>_{mem}<\!\!X\!\!>_{bnd}<\!\!\varphi_1\!\!>_{form}\!\!>\rightarrow^*<\!\!<\!\!\cdot\!\!>_k<\!\!\rho'\!\!>_{env}<\!\!m'\!\!>_{mem}<\!\!X'\!\!>_{bnd}<\!\!\varphi_2'\!\!>_{form}\!\!>$$
then
$$\mathit{result}(c_1,\rho,m,X,\varphi_1)=<\!\!\rho'\!\!>_{env}<\!\!m'\!\!>_{mem}<\!\!X'\!\!>_{bnd}<\!\!\varphi_1'\!\!>_{form}$$
$$\mathit{result}(c_2,\rho,m,X,\varphi_2)=<\!\!\rho'\!\!>_{env}<\!\!m'\!\!>_{mem}<\!\!X'\!\!>_{bnd}<\!\!\varphi_2'\!\!>_{form}$$
$$\mathit{result}(c_1,\rho,m,X,\varphi_1)\cap\mathit{result}(c_2,\rho,m,X,\varphi_2)=<\!\!\rho'\!\!>_{env}<\!\!m'\!\!>_{mem}<\!\!X'\!\!>_{bnd}<\!\!\varphi_1'\wedge\varphi_2'\!\!>_{form}$$
\begin{theorem}
Given an annotated computation $k\in \mathrm{C}$, $\overline{k} \in \mathrm{C}$ is the computation obtained by removing all $assert \; ass$ from $k$, the following holds:
\begin{enumerate}
  \item If $<\!\!<\!\!k; assert\; ass_{post}\!\!>_k ass_{pre}\!\!>\to^*<\!\!\cdot\!\!>$, then $<\!\!\overline{k}\!\!>_k ass_{pre}\Downarrow <\!\!\cdot\!\!>_k ass_{post}$ is derivable from matching logic;
  \item If $<\!\!k_1\!\!>_k ass_{pre}\Downarrow <\!\!\cdot\!\!>_k ass_{post}$ is derivable from matching logic, then there is an annotated computation $k_2$ such that $\overline{k_2}=k_1$ and $<\!\!<\!\!k_2; assert\;ass_{post}>_k ass_{pre}\!\!>\to^*<\!\!\cdot\!\!>$.
\end{enumerate}
\end{theorem}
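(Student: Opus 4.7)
The plan is to prove the two directions separately, each by induction, exploiting the close syntactic correspondence between verifier rules (V-*) and matching logic rules (M-*). For part (1) I would induct on the length of the rewriting derivation $<\!\!<\!\!k;\textrm{assert } ass_{post}\!\!>_k ass_{pre}\!\!>\to^n <\!\!\cdot\!\!>$, with a case analysis on which verifier rule fires in the first step. For part (2) I would induct on the depth of the matching logic derivation of $<\!\!k_1\!\!>_k ass_{pre}\Downarrow<\!\!\cdot\!\!>_k ass_{post}$, case-splitting on the last rule applied; in each case I must construct an annotated $k_2$ (i.e.\ insert the appropriate asserts) that mimics the shape of the proof tree.

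For the soundness direction, the base case is V-NULL (giving $\overline{k}=\textrm{skip}$ or empty, handled by M-SKIP) and V-FALSE (handled by M-CONS from a false precondition). For the inductive step, V-ASGN1/2/3, V-ARRAY, V-SKIP translate directly to the corresponding M-ASGN1/2/3, M-ARRAY, M-SKIP rules, composed with the inductive hypothesis on the tail via M-SEQ. V-IF splits one pattern into two, and applying the IH to each branch yields the two premises of M-IF. V-WHILE produces two branches whose joint verification yields, by IH, exactly the premise needed by M-WHILE, while the inserted $\textrm{assert } ass$ reflects the loop invariant. V-ASSERT is handled by one application of M-CONS: the side condition $\models <\!\!c\!\!>_k ass_2 \Rightarrow <\!\!c\!\!>_k ass_1$ is precisely the implication hypothesis of M-CONS. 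V-AWAIT corresponds to M-AWAIT, with the observation that $cc$ is ground (no $\parallel$, $\textrm{while}$ or $\textrm{await}$) so the IH applies to its sub-derivation. Finally, V-PAR yields two branches that verify $c_1,c_2$ separately under their respective $\varphi_i$; the IH gives two matching logic derivations whose conclusions, by the definition of $\mathit{result}(\_)\cap \mathit{result}(\_)$, match the conclusion of M-PAR, and the side condition on interference-freedom is transferred verbatim.

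For the completeness direction, I would walk the matching logic proof tree from the root and, rule by rule, construct $k_2$ by inlining asserts that make the verifier's symbolic execution track the intermediate assertions. For M-ASGN/M-ARRAY/M-SKIP no asserts are needed: one application of the corresponding V-rule suffices. For M-SEQ, the IH gives annotated $k_{21},k_{22}$ for $c_1,c_2$, and $k_2 := k_{21};\textrm{assert}(ass_{mid});k_{22}$ works. For M-IF the two branches are annotated independently and spliced; for M-AWAIT we annotate $cc$ inside the await. For M-WHILE, $k_2 := \textrm{while } b \textrm{ do } (k_{2c};\textrm{assert}(ass_{inv}))$ forces V-WHILE to regenerate the invariant on each unrolling. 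For M-CONS, I insert $\textrm{assert}$ on both ends so that V-ASSERT can discharge the implications. For M-CASE, the two sub-derivations are annotated and merged. The M-PAR case uses V-PAR on $(k_{21}\parallel k_{22})$ with the inner assertions taken from the conclusions of each branch's IH.

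The hardest step I expect is the M-PAR / V-PAR case in the completeness direction. The matching logic M-PAR rule produces a single combined postcondition $\varphi_1'\wedge\varphi_2'$, while V-PAR operationally runs the two components separately and then intersects their results; getting the annotations right so that the $\mathit{result}(\_)$ computed by the verifier on each branch agrees, up to the meet, with the postcondition dictated by the matching logic derivation requires care, especially because the bound-variable sets $X'$ produced by V-ARRAY are existentially chosen inside each branch and must be reconciled. The analogous obstacle in the soundness direction is the M-CONS / V-ASSERT alignment when asserts appear in the middle of a sequence: I would need a small auxiliary lemma stating that every intermediate $\mathit{result}(\_)$ computed by the verifier is semantically entailed by the assertion installed by the user, and that this entailment corresponds to an M-CONS step in the reconstructed proof tree. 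Once these two points are carefully formulated, the remaining cases are routine structural bookkeeping.
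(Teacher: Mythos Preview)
Your plan matches the paper's proof almost exactly: induction on the verifier derivation for part~(1) with a case split on the first V-rule, and an explicit annotation map (the paper calls it $\hbar$) driven by the last M-rule for part~(2). Two points where your sketch is too quick, though.

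First, for part~(1) the induction is not on the \emph{length} of a linear rewrite sequence but on the \emph{depth of the rewrite tree}: rules V-IF, V-WHILE and V-PAR send one pattern to a \emph{set} of patterns that are then reduced in parallel, so the induction hypothesis must be applied to each branch of that tree.

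Second, and more substantively, your V-IF case does \emph{not} directly yield the two premises of M-IF. After V-IF the two branches carry the continuations $c_1;k'$ and $c_2;k'$, so the induction hypothesis gives you derivations of $\langle\overline{c_1};\overline{k'}\rangle_k\,\ldots\Downarrow\langle\cdot\rangle_k\,ass_{post}$ and $\langle\overline{c_2};\overline{k'}\rangle_k\,\ldots\Downarrow\langle\cdot\rangle_k\,ass_{post}$. Decomposing each via M-SEQ produces intermediate assertions $ass_1$ and $ass_2$ that are in general \emph{different}, whereas M-IF requires both branches to share the same postcondition. The paper resolves this by introducing a merge operator $ass_1\nabla ass_2$ (a single pattern logically equivalent to $ass_1\vee ass_2$), weakening each branch to $\langle\cdot\rangle_k\,ass_1\nabla ass_2$ via M-CONS, and using M-CASE to show $\langle\overline{k'}\rangle_k\,ass_1\nabla ass_2\Downarrow\langle\cdot\rangle_k\,ass_{post}$. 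Only then does M-IF followed by M-SEQ close the case. You flagged V-PAR as the delicate case, but this $\nabla$/M-CASE reconciliation in V-IF is the actual non-routine step in the soundness direction; the other cases, including V-PAR, are straightforward once the $result(\_)$ bookkeeping is unfolded.
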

\begin{proof}
Suppose $\Gamma\equiv<\!\!<\!\!k; assert\; ass_{post}\!\!>_k ass_{pre}\!\!>$ and $\Gamma\to^*<\!\!\cdot\!\!>$. The original pattern $\Gamma$ iterative
rewrite it to $<\!\!\Gamma_1,\Gamma_2,\cdots,\Gamma_n\!\!>$. Since rewrite logic allows parallel rewriting, we require $\Gamma_1,\Gamma_2,\cdots,\Gamma_n$ parallel rewriting. Therefor, we get a rewrite tree with $\Gamma$ as root node and $\cdot$ as leaf nodes. Notice that $\Gamma_i(0<i<n+1)$ is also a subtree which  implies $\Gamma_i\to^* <\!\!\cdot\!\!>$. We prove conclusion 1 by induction on the depth of the rewrite tree.
According to the rewriting rule adopted in the first step of $\Gamma\to^*<\!\!\cdot\!\!>$, we distinguish different cases:\\
\textbf{Case1:} By V-FALSE rule
$$<\!\!<\!\!c; assert\; ass_{post}\!\!>_k<\!\!\rho\!\!>_{env}<\!\!m\!\!>_{mem}<\!\!X\!\!>_{bnd}<\!\!false\!\!>_{form}\!\!>\rightarrow<\!\!\cdot\!\!>$$
$\models <\!\! \overline{c}\!\!>_k<\!\!\rho\!\!>_{env}<\!\!m\!\!>_{mem}<\!\!X\!\!>_{bnd}<\!\!false\!\!>_{form} \Rightarrow <\!\!\cdot\!\!>_k ass_{post}$ is tautology. Since $<\!\!\cdot\!\!>_k ass_{post} \Downarrow <\!\!\cdot\!\!>_k ass_{post}$ ,
$<\!\!\overline{c}\!\!>_k<\!\!\rho\!\!>_{env}<\!\!m\!\!>_{mem}<\!\!X\!\!>_{bnd}<\!\!false\!\!>_{form} \Downarrow <\!\!\cdot\!\!>_k ass_{post}$
is derivable from matching logic by M-CONS rule.\\
\textbf{Case2:} By V-ASGN1 rule
$$
<\!\!<\!\!\mathrm{x}:=e;k'; assert\; ass_{post}\!\!>_k<\!\!\rho\!\!>_{env}<\!\!m\!\!>_{mem}<\!\!X\!\!>_{bnd}<\!\!\varphi\!\!>_{form}\!\!>\rightarrow$$
$$<\!\!<\!\!k'; assert\; ass_{post}\!\!>_k<\!\!\rho[\rho(e)/\mathrm{x}]\!\!>_{env}<\!\!m\!\!>_{mem}<\!\!X\!\!>_{bnd}<\!\!\varphi\!\!>_{form}\!\!>\to^*<\!\!\cdot\!\!>$$
By the induction hypothesis of
$$<\!\!<\!\!k'; assert\; ass_{post}\!\!>_k<\!\!\rho[\rho(e)/\mathrm{x}]\!\!>_{env}<\!\!m\!\!>_{mem}<\!\!X\!\!>_{bnd}<\!\!\varphi\!\!>_{form}\!\!>\rightarrow^*<\!\!\cdot\!\!>$$
we conclude $<\!\!\overline{k'}\!\!>_k<\!\!\rho[\rho(e)/\mathrm{x}]\!\!>_{env}<\!\!m\!\!>_{mem}<\!\!X\!\!>_{bnd}<\!\!\varphi\!\!>_{form} \Downarrow <\!\!\cdot\!\!>_k ass_{post}$.
M-ASGN1 rule implies $$<\!\!\mathrm{x}:=e\!\!>_k<\!\!\rho\!\!>_{env}<\!\!m\!\!>_{mem}<\!\!X\!\!>_{bnd}<\!\!\varphi\!\!>_{form} \Downarrow <\!\!\cdot\!\!>_k<\!\!\rho[\rho(e)/\mathrm{x}]\!\!>_{env}<\!\!m\!\!>_{mem}<\!\!X\!\!>_{bnd}<\!\!\varphi\!\!>_{form}$$
Since $\overline{\mathrm{x}:=e;k'}=\mathrm{x}:=e;\overline{k'}$, by M-SEQ rule, we conclude
$$<\!\!\overline{x:=e;k'}\!\!>_k<\!\!\rho\!\!>_{env}<\!\!m\!\!>_{mem}<\!\!X\!\!>_{bnd}<\!\!\varphi\!\!>_{form} \Downarrow <\!\!\cdot\!\!>_k ass_{post}$$
is derivable from matching logic.\\
\textbf{Case3:} By V-ASGN2 rule
$$<\!\!<\!\!\mathrm{x}:=A[e];k'; assert\; ass_{post}\!\!>_k<\!\!\rho\!\!>_{env}<\!\!m\!\!>_{mem}<\!\!X\!\!>_{bnd}<\!\!\varphi\!\!>_{form}\!\!>\rightarrow$$
$$<\!\!<\!\!k'; assert\; ass_{post}\!\!>_k<\!\!\rho[m(\rho(A)+_\mathit{Int}\rho(e))/\mathrm{x}]\!\!>_{env}<\!\!m\!\!>_{mem}<\!\!X\!\!>_{bnd}<\!\!\varphi\!\!>_{form}\!\!>\to^*<\!\!\cdot\!\!>$$
By the induction hypothesis of
$$<\!\!<\!\!k'; assert\; ass_{post}\!\!>_k<\!\!\rho[m(\rho(A)+_\mathit{Int}\rho(e))/\mathrm{x}]\!\!>_{env}<\!\!m\!\!>_{mem}<\!\!X\!\!>_{bnd}<\!\!\varphi\!\!>_{form}\!\!>\rightarrow^*<\!\!\cdot\!\!>$$
we conclude
$$<\!\!\overline{k'}\!\!>_k<\!\!\rho[m(\rho(A)+_\mathit{Int}\rho(e))/\mathrm{x}]\!\!>_{env}<\!\!m\!\!>_{mem}<\!\!X\!\!>_{bnd}<\!\!\varphi\!\!>_{form}\Downarrow <\!\!\cdot\!\!>_k ass_{post}$$
M-ASGN2 rule implies
$$<\!\!\mathrm{x}:=A[e]\!\!>_k<\!\!\rho\!\!>_{env}<\!\!m\!\!>_{mem}<\!\!X\!\!>_{bnd}<\!\!\varphi\!\!>_{form}\Downarrow$$
$$<\!\!\cdot\!\!>_k<\!\!\rho[m(\rho(A)+_\mathit{Int}\rho(e))/\mathrm{x}]\!\!>_{env}<\!\!m\!\!>_{mem}<\!\!X\!\!>_{bnd}<\!\!\varphi\!\!>_{form}$$
Since $\overline{\mathrm{x}:=A[e];k'}=\mathrm{x}:=A[e];\overline{k'}$, by M-SEQ rule, we conclude
$$<\!\!\overline{\mathrm{x}:=A[e];k'}\!\!>_k<\!\!\rho\!\!>_{env}<\!\!m\!\!>_{mem}<\!\!X\!\!>_{bnd}<\!\!\varphi\!\!>_{form}\Downarrow <\!\!\cdot\!\!>_k ass_{post}$$
is derivable from matching logic.\\
\textbf{Case4:} By V-ASGN3 rule
$$<\!\!<\!\!A[e_1]:=e_2;k'; assert\; ass_{post}\!\!>_k<\!\!\rho\!\!>_{env}<\!\!m\!\!>_{mem}<\!\!X\!\!>_{bnd}<\!\!\varphi\!\!>_{form}\!\!>\rightarrow$$
$$<\!\!<\!\!k'; assert\; ass_{post}\!\!>_k<\!\!\rho\!\!>_{env}<\!\!m[\rho(e_2)/(\rho(A)+_\mathit{Int}\rho(e_1))]\!\!>_{mem}<\!\!X\!\!>_{bnd}<\!\!\varphi\!\!>_{form}\!\!>\to^*<\!\!\cdot\!\!>$$
By the induction hypothesis of
$$<\!\!<\!\!k'; assert\; ass_{post}\!\!>_k<\!\!\rho\!\!>_{env}<\!\!m[\rho(e_2)/(\rho(A)+_\mathit{Int}\rho(e_1))]\!\!>_{mem}<\!\!X\!\!>_{bnd}<\!\!\varphi\!\!>_{form}\!\!>\rightarrow^*<\!\!\cdot\!\!>$$
we conclude
$$<\!\!\overline{k'}\!\!>_k<\!\!\rho\!\!>_{env}<\!\!m[\rho(e_2)/(\rho(A)+_\mathit{Int}\rho(e_1))]\!\!>_{mem}<\!\!X\!\!>_{bnd}<\!\!\varphi\!\!>_{form}\Downarrow <\!\!\cdot\!\!>_k ass_{post}$$
M-ASGN3 rule implies
$$<\!\!A[e_1]:=e_2\!\!>_k<\!\!\rho\!\!>_{env}<\!\!m\!\!>_{mem}<\!\!X\!\!>_{bnd}<\!\!\varphi\!\!>_{form}\Downarrow$$
$$<\!\!\cdot\!\!>_k<\!\!\rho\!\!>_{env}<\!\!m[\rho(e_2)/(\rho(A)+_\mathit{Int}\rho(e_1))]\!\!>_{mem}<\!\!X\!\!>_{bnd}<\!\!\varphi\!\!>_{form}
$$
Since $\overline{A[e_1]:=e_2;k'}=A[e_1]:=e_2;\overline{k'}$, by M-SEQ rule, we conclude
$$<\!\!<\!\!\overline{A[e_1]:=e_2;k'}\!\!>_k<\!\!\rho\!\!>_{env}<\!\!m\!\!>_{mem}<\!\!X\!\!>_{bnd}<\!\!\varphi\!\!>_{form}\Downarrow <\!\!\cdot\!\!>_k ass_{post}$$
is derivable from matching logic.\\
\textbf{Case5:} By V-ARRAY rule
$$<\!\!<\!\!A:=\mathit{Array}(\overline{e});k'; assert\; ass_{post}\!\!>_k<\!\!\rho\!\!>_{env}<\!\!m\!\!>_{mem}<\!\!X\!\!>_{bnd}<\!\!\varphi\!\!>_{form}\!\!>\rightarrow$$
$$<\!\!<\!\!k'; assert\; ass_{post}\!\!>_k<\!\!\rho[p/A]\!\!>_{env}<\!\!p\mapsto[\rho(\overline{e})],m\!\!>_{mem}<\!\!X\cup \{p\}\!\!>_{bnd}<\!\!\varphi\!\!>_{form}\!\!>\to^*<\!\!\cdot\!\!>$$
By the induction hypothesis of
$$<\!\!<\!\!k'; assert\; ass_{post}\!\!>_k<\!\!\rho[p/A]\!\!>_{env}<\!\!p\mapsto[\rho(\overline{e})],m\!\!>_{mem}<\!\!X\cup \{p\}\!\!>_{bnd}<\!\!\varphi\!\!>_{form}\!\!>\rightarrow^*<\!\!\cdot\!\!>$$
we conclude
$$<\!\!\overline{k'}\!\!>_k<\!\!\rho[p/A]\!\!>_{env}<\!\!p\mapsto[\rho(\overline{e})],m\!\!>_{mem}<\!\!X\cup \{p\}\!\!>_{bnd}<\!\!\varphi\!\!>_{form}\!\!>\Downarrow <\!\!\cdot\!\!>_k ass_{post}$$
M-ARRAY rule implies
$$<\!\!A:=\mathit{Array}(\overline{e})\!\!>_k<\!\!\rho\!\!>_{env}<\!\!m\!\!>_{mem}<\!\!X\!\!>_{bnd}<\!\!\varphi\!\!>_{form}\Downarrow$$
$$<\!\!\cdot\!\!>_k<\!\!\rho[p/A]\!\!>_{env}<\!\!p\mapsto[\rho(\overline{e})],m\!\!>_{mem}<\!\!X\cup \{p\}\!\!>_{bnd}<\!\!\varphi\!\!>_{form}
$$
Since $\overline{A:=\mathit{Array}(\overline{e});k'}=A:=\mathit{Array}(\overline{e});\overline{k'}$,  by M-SEQ rule, we conclude
$$<\!\!\overline{A:=\mathit{Array}(\overline{e});k'}\!\!>_k<\!\!\rho\!\!>_{env}<\!\!m\!\!>_{mem}<\!\!X\!\!>_{bnd}<\!\!\varphi\!\!>_{form}\Downarrow<\!\!\cdot\!\!>_k ass_{post}$$
is derivable from matching logic.\\
\textbf{Case6:} By V-IF rule
$$<\!\!<\!\!(\textrm{if}(b)c_1\textrm{else}\,c_2);k'; assert\; ass_{post}\!\!>_k<\!\!\rho\!\!>_{env}<\!\!m\!\!>_{mem}<\!\!X\!\!>_{bnd}\!<\!\!\varphi\!\!>_{form}\!\!>\rightarrow$$
$$<<\!\!c_1;k'; assert\; ass_{post}\!\!>_k<\!\!\rho\!\!>_{env}<\!\!m\!\!>_{mem}<\!\!X\!\!>_{bnd}<\!\!\varphi\wedge (\rho(b)\;\;is\;\;true)\!\!>_{form},$$
$$<\!\!c_2;k'; assert\; ass_{post}\!\!>_k<\!\!\rho\!\!>_{env}<\!\!m\!\!>_{mem}<\!\!X\!\!>_{bnd}<\!\!\varphi\wedge (\rho(b)\;\;is\;\;false)\!\!>_{form}>\to^*<\!\!\cdot\!\!>$$
Let $\Gamma_1 \equiv <\!\!k\!\!>_k<\!\!\rho_1\!\!>_{env}<\!\!m_1\!\!>_{mem}<\!\!X_1\!\!>_{bnd}<\!\!\varphi_1\!\!>_{form}$ and
$\Gamma_2 \equiv <\!\!k\!\!>_k<\!\!\rho_2\!\!>_{env}<\!\!m_2\!\!>_{mem}<\!\!X_2\!\!>_{bnd}<\!\!\varphi_2\!\!>_{form}$, $\Gamma_1\nabla\Gamma_2$ is a new pattern:
$$<\!\!k\!\!>_k<\!\!\rho\!\!>_{env}<\!\!m\!\!>_{mem}<\!\!X_1\cup X_2\cup \rho\cup m\!\!>_{bnd}<\!\!(\varphi_1\wedge\rho\!=\!\rho_1\wedge m\!=\!m_1)\vee(\varphi_2\wedge\rho\!=\!\rho_2\wedge m\!=\!m_2)\!\!>_{form}$$
Obviously, $\models \Gamma_1\nabla\Gamma_2 \Leftrightarrow \Gamma_1\vee \Gamma_2$. By the induction hypothesis of
$$<\!\!<\!\!c_1;k'; assert\; ass_{post}\!\!>_k<\!\!\rho\!\!>_{env}<\!\!m\!\!>_{mem}<\!\!X\!\!>_{bnd}<\!\!\varphi\wedge (\rho(b)\;\;is\;\;true)\!\!>_{form}\!\!>\rightarrow^*<\!\!\cdot\!\!>$$
we conclude
$$<\!\!\overline{c_1};\overline{k'}\!\!>_k<\!\!\rho\!\!>_{env}<\!\!m\!\!>_{mem}<\!\!X\!\!>_{bnd}<\!\!\varphi\wedge (\rho(b)\;\;is\;\;true)\!\!>_{form}\Downarrow<\!\!\cdot\!\!>_k ass_{post}$$
and
$$<\!\!\overline{c_2};\overline{k'}\!\!>_k<\!\!\rho\!\!>_{env}<\!\!m\!\!>_{mem}<\!\!X\!\!>_{bnd}<\!\!\varphi\wedge (\rho(b)\;\;is\;\;false)\!\!>_{form}\Downarrow<\!\!\cdot\!\!>_k ass_{post}$$
by the induction hypothesis of
$$<\!\!<\!\!c_2;k'; assert\; ass_{post}\!\!>_k<\!\!\rho\!\!>_{env}<\!\!m\!\!>_{mem}<\!\!X\!\!>_{bnd}<\!\!\varphi\wedge (\rho(b)\;\;is\;\;false)\!\!>_{form}\!\!>\rightarrow^*<\!\!\cdot\!\!>$$
there must have appropriate $ass_1$ and $ass_2$ such that
$$<\!\!\overline{c_1}\!\!>_k<\!\!\rho\!\!>_{env}<\!\!m\!\!>_{mem}<\!\!X\!\!>_{bnd}<\!\!\varphi\wedge (\rho(b)\;\;is\;\;true)\!\!>_{form}\Downarrow<\!\!\cdot\!\!>_k ass_1$$
and
$$<\!\!\overline{k'}\!\!>_k ass_1\Downarrow<\!\!\cdot\!\!>_k ass_{post}$$
and
$$<\!\!\overline{c_2}\!\!>_k<\!\!\rho\!\!>_{env}<\!\!m\!\!>_{mem}<\!\!X\!\!>_{bnd}<\!\!\varphi\wedge (\rho(b)\;\;is\;\;false)\!\!>_{form}\Downarrow<\!\!\cdot\!\!>_k ass_2$$
and
$$<\!\!\overline{k'}\!\!>_k ass_2\Downarrow<\!\!\cdot\!\!>_k ass_{post}$$
Suppose $ass_1 \equiv <\!\!\rho_1\!\!>_{env}<\!\!m_1\!\!>_{mem}<\!\!X_1\!\!>_{bnd}<\!\!\varphi_1\!\!>_{form}$ and
$ass_2 \equiv <\!\!\rho_2\!\!>_{env}<\!\!m_2\!\!>_{mem}<\!\!X_2\!\!>_{bnd}<\!\!\varphi_2\!\!>_{form}$,
$$ass_1 \nabla ass_2 \equiv <\!\!\rho\!\!>_{env}<\!\!X_1\cup X_2\cup \rho\cup m\!\!>_{bnd}<\!\!(\varphi_1\wedge\rho\!=\!\rho_1\wedge m\!=\!m_1)\vee(\varphi_2\wedge\rho\!=\!\rho_2\wedge m\!=\!m_2)\!\!>_{form}$$
Since $\models <\!\! \overline{k'}\!\!>_k ass_1 \nabla ass_2 \Leftrightarrow (<\!\! \overline{k'}\!\!>_k ass_1)\vee (<\!\! \overline{k'}\!\!>_k ass_2)$ and $<\!\!\overline{k'}\!\!>_k ass_1 \Downarrow <\!\!\cdot\!\!>_k ass_{post}$
and $<\!\!\overline{k'}\!\!>_k ass_2 \Downarrow <\!\!\cdot\!\!>_k ass_{post}$, M-CASE rule implies
$$<\!\! \overline{k'}\!\!>_k ass_1 \nabla ass_2 \Downarrow <\!\!\cdot\!\!>_k ass_{post}$$
Since
$\models <\!\!\cdot\!\!>_k ass_1 \nabla ass_2 \Leftrightarrow (<\!\! \cdot\!\!>_k ass_1)\vee (<\!\! \cdot\!\!>_k ass_2)$ and $\models <\!\!\cdot\!\!>_k ass_1 \Rightarrow (<\!\! \cdot\!\!>_k ass_1)\vee (<\!\! \cdot\!\!>_k ass_2)$ and
$$<\!\!\overline{c_1}\!\!>_k<\!\!\rho\!\!>_{env}<\!\!m\!\!>_{mem}<\!\!X\!\!>_{bnd}<\!\!\varphi\wedge (\rho(b)\;\;is\;\;true)\!\!>_{form}\Downarrow<\!\!\cdot\!\!>_k ass_1$$ applying M-CONS rule, we conclude
$$<\!\!\overline{c_1}\!\!>_k<\!\!\rho\!\!>_{env}<\!\!m\!\!>_{mem}<\!\!X\!\!>_{bnd}<\!\!\varphi\wedge (\rho(b)\;\;is\;\;true)\!\!>_{form}\Downarrow<\!\!\cdot\!\!>_k ass_1 \nabla ass_2$$
Since
$\models <\!\!\cdot\!\!>_k ass_1 \nabla ass_2 \Leftrightarrow (<\!\! \cdot\!\!>_k ass_1)\vee (<\!\! \cdot\!\!>_k ass_2)$ and $\models <\!\!\cdot\!\!>_k ass_2 \Rightarrow (<\!\! \cdot\!\!>_k ass_1)\vee (<\!\!\cdot\!\!>_k ass_2)$ and
$$<\!\!\overline{c_2}\!\!>_k<\!\!\rho\!\!>_{env}<\!\!m\!\!>_{mem}<\!\!X\!\!>_{bnd}<\!\!\varphi\wedge (\rho(b)\;\;is\;\;false)\!\!>_{form}\Downarrow<\!\!\cdot\!\!>_k ass_2$$
by M-CONS rule, we conclude
$$<\!\!\overline{c_2}\!\!>_k<\!\!\rho\!\!>_{env}<\!\!m\!\!>_{mem}<\!\!X\!\!>_{bnd}<\!\!\varphi\wedge (\rho(b)\;\;is\;\;false)\!\!>_{form}\Downarrow<\!\!\cdot\!\!>_k ass_1 \nabla ass_2$$
M-IF rule implies
$$<\!\!\overline{\textrm{if }(b)c_1 \textrm{ else }c_2}\!\!>_k<\!\!\rho\!\!>_{env}<\!\!m\!\!>_{mem}<\!\!X\!\!>_{bnd}<\!\!\varphi\!\!>_{form}\Downarrow<\!\!\cdot\!\!>_k ass_1 \nabla ass_2$$
Applying M-SEQ rule, we conclude
$$<\!\!\overline{(\textrm{if }(b)c_1 \textrm{ else }c_2);k'}\!\!>_k<\!\!\rho\!\!>_{env}<\!\!m\!\!>_{mem}<\!\!X\!\!>_{bnd}<\!\!\varphi\!\!>_{form}\Downarrow<\!\!\cdot\!\!>_k ass_{post}$$
is derivable from matching logic.\\
\textbf{Case7:} By V-WHILE rule
$$<\!\!<\!\!(\textrm{while }b\textrm{ do }c);k'; assert\; ass_{post}\!\!>_k<\!\!\rho\!\!>_{env}<\!\!m\!\!>_{mem}<\!\!X\!\!>_{bnd}\!<\!\!\varphi\!\!>_{form}\!\!>\rightarrow$$
$$<<\!\!c;\textrm{assert} (<\!\!\rho\!\!>_{e\!n\!v}<\!\!m\!\!>_{m\!e\!m}<\!\!X\!\!>_{b\!n\!d}\!<\!\!\varphi\!\!>_{f\!o\!r\!m})\!\!>_k<\!\!\rho\!\!>_{e\!n\!v}<\!\!m\!\!>_{m\!e\!m}<\!\!X\!\!>_{b\!n\!d}\!<\!\!\varphi\wedge (\rho(b)\;\;is\;\;true)\!\!>_{f\!o\!r\!m},$$
$$<\!\!k'; assert\; ass_{post}\!\!>_k<\!\!\rho\!\!>_{env}<\!\!m\!\!>_{mem}<\!\!X\!\!>_{bnd}\!<\!\!\varphi\wedge (\rho(b)\;\;is\;\;false)\!\!>_{form}>\to^*<\!\!\cdot\!\!>
$$
By the induction hypothesis of
$$<\!\!<\!\!c;\textrm{assert} (<\!\!\rho\!\!>_{e\!n\!v}<\!\!m\!\!>_{m\!e\!m}<\!\!X\!\!>_{b\!n\!d}\!<\!\!\varphi\!\!>_{f\!o\!r\!m})\!\!>_k<\!\!\rho\!\!>_{e\!n\!v}<\!\!m\!\!>_{m\!e\!m}<\!\!X\!\!>_{b\!n\!d}\!<\!\!\varphi\wedge (\rho(b)\;\;is\;\;true)
\!\!>_{f\!o\!r\!m}\!\!>$$$$\rightarrow^*<\!\!\cdot\!\!>
$$
we conclude
$$<\!\!\overline{c}\!\!>_k<\!\!\rho\!\!>_{env}<\!\!m\!\!>_{mem}<\!\!X\!\!>_{bnd}\!<\!\!\varphi\wedge (\rho(b)\;\;is\;\;true)\!\!>_{form}\Downarrow<\!\!\cdot\!\!>_k<\!\!\rho\!\!>_{env}<\!\!m\!\!>_{mem}<\!\!X\!\!>_{bnd}\!<\!\!\varphi\!\!>_{form}$$
M-WHILE rule implies
$$
<\!\!\overline{\textrm{while }b\textrm{ do }c}\!\!>_k<\!\!\rho\!\!>_{env}<\!\!m\!\!>_{mem}<\!\!X\!\!>_{bnd}\!<\!\!\varphi\!\!>_{form}\Downarrow$$
$$<\!\!\cdot\!\!>_k<\!\!\rho\!\!>_{env}<\!\!m\!\!>_{mem}<\!\!X\!\!>_{bnd}\!<\!\!\varphi\land(\rho(b)\;\;is\;\; false)\!\!>_{form}
$$
By the induction hypothesis of
$$<\!\!<\!\!k'; assert\; ass_{post}\!\!>_k<\!\!\rho\!\!>_{env}<\!\!m\!\!>_{mem}<\!\!X\!\!>_{bnd}\!<\!\!\varphi\wedge (\rho(b)\;\;is\;\;false)\!\!>_{form}\!\!>\rightarrow^*<\!\!\cdot\!\!>$$
we conclude
$$<\!\!\overline{k'}\!\!>_k<\!\!\rho\!\!>_{env}<\!\!m\!\!>_{mem}<\!\!X\!\!>_{bnd}\!<\!\!\varphi\wedge (\rho(b)\;\;is\;\;false)\!\!>_{form}\!\!>\Downarrow<\!\!\cdot\!\!>_k ass_{post}$$
By M-SEQ rule, we conclude
$$<\!\!\overline{\textrm{while }b\textrm{ do }c;k'}\!\!>_k<\!\!\rho\!\!>_{env}<\!\!m\!\!>_{mem}<\!\!X\!\!>_{bnd}\!<\!\!\varphi\!\!>_{form}\Downarrow<\!\!\cdot\!\!>_k ass_{post}$$
is derivable from matching logic.\\
\textbf{Case8:} By V-AWAIT rule
$$<\!\!<\!\!\textrm{await } b\textrm{ then } cc; k'; assert\; ass_{post}\!\!>_k<\!\!\rho\!\!>_{env}<\!\!m\!\!>_{mem}<\!\!X\!\!>_{bnd}\!<\!\!\varphi\!\!>_{form}\!\!>\rightarrow$$
$$<\!\!<\!\!cc;k'; assert\; ass_{post}\!\!>_k<\!\!\rho\!\!>_{env}<\!\!m\!\!>_{mem}<\!\!X\!\!>_{bnd}\!<\!\!\varphi\wedge(\rho(b)\;\;is\;\;true)\!\!>_{form}\!\!>\to^*<\!\!\cdot\!\!>$$
By the induction hypothesis of
$$<\!\!<\!\!cc;k'; assert\; ass_{post}\!\!>_k<\!\!\rho\!\!>_{env}<\!\!m\!\!>_{mem}<\!\!X\!\!>_{bnd}\!<\!\!\varphi\wedge(\rho(b)\;\;is\;\;true)\!\!>_{form}\!\!>\rightarrow^*<\!\!\cdot\!\!>$$
we conclude
$$<\!\!\overline{cc};\overline{k'}\!\!>_k<\!\!\rho\!\!>_{env}<\!\!m\!\!>_{mem}<\!\!X\!\!>_{bnd}\!<\!\!\varphi\wedge(\rho(b)\;\;is\;\;true)\!\!>_{form}\Downarrow<\!\!\cdot\!\!>_k ass_{post}$$
There exits some appropriate $ass_1$ such that
$$<\!\!\overline{cc}\!\!>_k<\!\!\rho\!\!>_{env}<\!\!m\!\!>_{mem}<\!\!X\!\!>_{bnd}\!<\!\!\varphi\wedge(\rho(b)\;\;is\;\;true)\!\!>_{form}\Downarrow<\!\!\cdot\!\!>_k ass_1$$
and
$$<\!\!\overline{k'}\!\!>_k ass_1\Downarrow<\!\!\cdot\!\!>_k ass_{post}$$
M-AWAIT rule implies
$$<\!\!\overline{\textrm{await }b\textrm{ then }\mathit{cc}}\!\!>_k<\!\!\rho\!\!>_{env}<\!\!m\!\!>_{mem}<\!\!X\!\!>_{bnd}\!<\!\!\varphi\!\!>_{form}\Downarrow<\!\!\cdot\!\!>_k ass_1$$
By M-SEQ rule, we conclude
$$<\!\!\overline{\textrm{await }b\textrm{ then }\mathit{cc};k'}\!\!>_k<\!\!\rho\!\!>_{env}<\!\!m\!\!>_{mem}<\!\!X\!\!>_{bnd}\!<\!\!\varphi\!\!>_{form}\Downarrow<\!\!\cdot\!\!>_k ass_{post}$$
is derivable from matching logic.\\
\textbf{Case9:} By V-PAR rule
$$
<\!\!<\!\!(c_1\parallel c_2);k'; assert\; ass_{post}\!\!>_k<\!\!\rho\!\!>_{env}<\!\!m\!\!>_{mem}<\!\!X\!\!>_{bnd}\!<\!\!\varphi_1\wedge\varphi_2\!\!>_{form}\!\!>\rightarrow$$
$$<<\!\!c_1; assert\;result(c_1,\rho,m,X,\varphi_1)\!\!>_k<\!\!\rho\!\!>_{env}<\!\!m\!\!>_{mem}<\!\!X\!\!>_{bnd}\!<\!\!\varphi_1\!\!>_{form},$$
$$<\!\!c_2;assert\;result(c_2,\rho,m,X,\varphi_2)\!\!>_k<\!\!\rho\!\!>_{env}<\!\!m\!\!>_{mem}<\!\!X\!\!>_{bnd}\!<\!\!\varphi_2\!\!>_{form},$$
$$<\!\!k'; assert\; ass_{post}\!\!>_k result(c_1,\rho,m,X,\varphi_1)\cap result(c_2,\rho,m,X,\varphi_2)>\rightarrow^*<\!\!\cdot\!\!>$$
By the induction hypothesis of
$$<\!\!<\!\!c_1\;assert\;result(c_1,\rho,m,X,\varphi_1)\!\!>_k<\!\!\rho\!\!>_{env}<\!\!m\!\!>_{mem}<\!\!X\!\!>_{bnd}\!<\!\!\varphi_1\!\!>_{form}\!\!>\rightarrow^*$$
$$<\!\!<\!\!\cdot\!\!>_k result(c_1,\rho,m,X,\varphi_1)\!\!>\rightarrow^*<\!\!\cdot\!\!>$$
we conclude
$$<\!\!\overline{c_1}\!\!>_k<\!\!\rho\!\!>_{env}<\!\!m\!\!>_{mem}<\!\!X\!\!>_{bnd}\!<\!\!\varphi_1\!\!>_{form}\!\!>\Downarrow<\!\!\cdot\!\!>_k result(c_1,\rho,m,X,\varphi_1)$$
where $result(c_1,\rho,m,X,\varphi_1)=<\!\!\rho'\!\!>_{env}<\!\!m'\!\!>_{mem}<\!\!X'\!\!>_{bnd}\!<\!\!\varphi_1'\!\!>_{form}$.\\
Similarly, we conclude
$$<\!\!\overline{c_2}\!\!>_k<\!\!\rho\!\!>_{env}<\!\!m\!\!>_{mem}<\!\!X\!\!>_{bnd}\!<\!\!\varphi_2\!\!>_{form}\!\!>\Downarrow<\!\!\cdot\!\!>_k result(c_2,\rho,m,X,\varphi_2)$$
where $result(c_2,\rho,m,X,\varphi_2)=<\!\!\rho'\!\!>_{env}<\!\!m'\!\!>_{mem}<\!\!X'\!\!>_{bnd}\!<\!\!\varphi_2'\!\!>_{form}$.\\
Since $\overline{c_1},\;\overline{c_2}$ are interference-free, M-PAR rule implies
$$<\!\!\overline{c_1\parallel c_2}\!\!>_k<\!\!\rho\!\!>_{env}<\!\!m\!\!>_{mem}<\!\!X\!\!>_{bnd}\!<\!\!\varphi_1\wedge\varphi_2\!\!>_{form}\!\!>\Downarrow$$
$$<\!\!\cdot\!\!>_k result(c_1,\rho,m,X,\varphi_1)\cap result(c_2,\rho,m,X,\varphi_2)$$
By the induction hypothesis of
$$<\!\!<\!\!k'; assert\; ass_{post}\!\!>_k result(c_1,\rho,m,X,\varphi_1)\cap result(c_2,\rho,m,X,\varphi_2)\!\!>\rightarrow^*<\!\!\cdot\!\!>$$
we conclude
$$<\!\!\overline{k'}\!\!>_k result(c_1,\rho,m,X,\varphi_1)\cap result(c_2,\rho,m,X,\varphi_2)\Downarrow <\!\!\cdot\!\!>_k ass_{post}$$
By M-SEQ rule, we conclude
$$<\!\!\overline{(c_1\parallel c_2);k'}\!\!>_k<\!\!\rho\!\!>_{env}<\!\!m\!\!>_{mem}<\!\!X\!\!>_{bnd}\!<\!\!\varphi_1\wedge\varphi_2\!\!>_{form}\!\!>\Downarrow<\!\!\cdot\!\!>_k ass_{post} $$
is derivable from matching logic.\\
\textbf{Case10:} By V-SKIP rule
$$<\!\!<\!\!skip;k'; assert\; ass_{post}\!\!>_k ass_1\!\!>\rightarrow<\!\!<\!\!k'; assert\; ass_{post}\!\!>_k ass_1\!\!>\to^*<\!\!\cdot\!\!>$$
By the induction hypothesis of
$$<\!\!<\!\!k'; assert\; ass_{post}\!\!>_k ass_1\!\!>\rightarrow^*<\!\!\cdot\!\!>$$
we conclude $<\!\!\overline{k'}\!\!>_k ass_1 \Downarrow <\!\!\cdot\!\!>_k ass_{post}$. Since $<\!\!skip\!\!>_k ass_1 \Downarrow <\!\!\cdot\!\!>_k ass_1$ and $\overline{skip;k'}=skip;\overline{k'}$, by M-SEQ rule, we conclude $<\!\!\overline{skip;k'}\!\!>_k ass_1 \Downarrow <\!\!\cdot\!\!>_k ass_{post}$ is derivable from matching logic.\\
We annotate the program with assertions which are the pre-condition and post-condition of the corresponding code segment in the matching logic, and then verify the annotated program. Let $\hbar$ be a mapping from matching logic rules to annotated programs.\\
If $\pi_{skip}\equiv <\!\!skip\!\!>_k ass\Downarrow <\!\!\cdot\!\!>_k ass$, then
$\hbar(\pi_{skip})= assert\;ass; skip; assert\;ass$;\\
If $\pi_{asgn1}\equiv <\!\!\mathrm{x}:=e\!\!>_k ass_{pre}\Downarrow <\!\!\cdot\!\!>_k ass_{post}$, then $\hbar(\pi_{asgn1})= assert\;ass_{pre}; \mathrm{x}:=e; assert\;ass_{post}$;\\
If $\pi_{asgn2}\equiv <\!\!\mathrm{x}:=A[e]\!\!>_k ass_{pre}\Downarrow <\!\!\cdot\!\!>_k ass_{post}$, then $\hbar(\pi_{asgn2})= assert\;ass_{pre}; \mathrm{x}:=A[e]; assert\;ass_{post}$;\\
If $\pi_{asgn3}\equiv <\!\!A[e_1]:=e_2\!\!>_k ass_{pre}\Downarrow <\!\!\cdot\!\!>_k ass_{post}$, then $\hbar(\pi_{asgn3})= assert\;ass_{pre}; A[e_1]:=e_2; assert\;ass_{post}$;\\
If $\pi_{array}\equiv <\!\!A:=Array(\overline{e})\!\!>_k ass_{pre}\Downarrow <\!\!\cdot\!\!>_k ass_{post}$, then $$\hbar(\pi_{array})= assert\;ass_{pre}; A:=Array(\overline{e}); assert\;ass_{post}$$
If $\pi_{seq}\equiv <\!\!c_1;c_2\!\!>_k ass_{pre}\Downarrow <\!\!\cdot\!\!>_k ass_{post}$, then $$\hbar(\pi_{seq})= assert\;ass_{pre};\hbar(\pi_{c_1});\hbar(\pi_{c_2}) ; assert\;ass_{post}$$
where $\pi_{c_1}\equiv <\!\!c_1\!\!>_k ass_{pre}\Downarrow <\!\!\cdot\!\!>_k ass_1$ and $\pi_{c_2}\equiv <\!\!c_2\!\!>_k ass_1\Downarrow <\!\!\cdot\!\!>_k ass_{post}$ for appropriate $ass_1$;\\
If $\pi_{if}\equiv <\!\!\textrm{if }(b)\;c_1\textrm{ else }c_2\!\!>_k<\!\!\rho\!\!>_{env}<\!\!m\!\!>_{mem}<\!\!X\!\!>_{bnd}<\!\!\varphi\!\!>_{form}\Downarrow <\!\!\cdot\!\!>_k ass_{post}$, then
$$\hbar(\pi_{if})= assert(<\!\!\rho\!\!>_{env}<\!\!m\!\!>_{mem}<\!\!X\!\!>_{bnd}<\!\!\varphi\!\!>_{form});(\textrm{if }(b)\;\hbar(\pi_{if_1}) \textrm{ else } \hbar(\pi_{if_2})) ; assert\;ass_{post}$$
where
$\pi_{if_1}\equiv <\!\!c_1\!\!>_k<\!\!\rho\!\!>_{env}<\!\!m\!\!>_{mem}<\!\!X\!\!>_{bnd}<\!\!\varphi\wedge (\rho(b)\;is\;true)\!\!>_{form}\Downarrow <\!\!\cdot\!\!>_k ass_{post}$ and
$\pi_{if_2}\equiv <\!\!c_2\!\!>_k<\!\!\rho\!\!>_{env}<\!\!m\!\!>_{mem}<\!\!X\!\!>_{bnd}<\!\!\varphi\wedge (\rho(b)\;is\;false)\!\!>_{form}\Downarrow <\!\!\cdot\!\!>_k ass_{post}$;\\
If $\pi_{while}\equiv <\!\!\textrm{while }b\textrm{ do  }c\!\!>_k<\!\!\rho\!\!>_{env}<\!\!m\!\!>_{mem}<\!\!X\!\!>_{bnd}<\!\!\varphi\!\!>_{form}\Downarrow <\!\!\cdot\!\!>_k ass_{post}$, then
$$\hbar(\pi_{while})= assert(<\!\!\rho\!\!>_{env}<\!\!m\!\!>_{mem}<\!\!X\!\!>_{bnd}<\!\!\varphi\!\!>_{form});(\textrm{while }b\textrm{ do }\hbar(\pi_{body})) ; assert\;ass_{post}$$
where $\pi_{body}\equiv <\!\!c\!\!>_k<\!\!\rho\!\!>_{env}<\!\!m\!\!>_{mem}<\!\!X\!\!>_{bnd}<\!\!\varphi\wedge (\rho(b)\;is\;true)\!\!>_{form}\Downarrow <\!\!\cdot\!\!>_k<\!\!\rho\!\!>_{env}<\!\!m\!\!>_{mem}<\!\!X\!\!>_{bnd}<\!\!\varphi\!\!>_{form}$;\\
If $\pi_{await}\equiv <\!\!\textrm{await }b \textrm{ then } cc\!\!>_k<\!\!\rho\!\!>_{env}<\!\!m\!\!>_{mem}<\!\!X\!\!>_{bnd}<\!\!\varphi\!\!>_{form}\Downarrow <\!\!\cdot\!\!>_k ass_{post}$, then
$$\hbar(\pi_{await})= assert(<\!\!\rho\!\!>_{env}<\!\!X\!\!>_{bnd}<\!\!\varphi\!\!>_{form});
(\textrm{await }b \textrm{ then }\;\hbar(\pi_{body})) ; assert\;ass_{post}$$
where $\pi_{body}\equiv <\!\!cc\!\!>_k<\!\!\rho\!\!>_{env}<\!\!m\!\!>_{mem}<\!\!X\!\!>_{bnd}<\!\!\varphi\wedge (\rho(b)\;is\;true)\!\!>_{form}\Downarrow <\!\!\cdot\!\!>_k ass_{post}$;\\
If $\pi_{par}\equiv <\!\!c_1\parallel c_2\!\!>_k<\!\!\rho\!\!>_{env}<\!\!m\!\!>_{mem}<\!\!X_1\!\!>_{bnd}<\!\!\varphi_1\wedge \varphi_2\!\!>_{form}\Downarrow <\!\!\cdot\!\!>_k ass_{post}$, then
$$\hbar(\pi_{par})= assert(<\!\!\rho\!\!>_{env}<\!\!m\!\!>_{mem}<\!\!X_1\!\!>_{bnd}<\!\!\varphi_1\wedge \varphi_2 \!\!>_{form});
(\hbar(\pi_{c_1})\parallel\hbar(\pi_{c_2})) ; assert\;ass_{post}$$
where $\pi_{c_1}\equiv <\!\!c_1\!\!>_k<\!\!\rho\!\!>_{env}<\!\!m\!\!>_{mem}<\!\!X_1\!\!>_{bnd}<\!\!\varphi_1\!\!>_{form}\Downarrow <\!\!\cdot\!\!>_k<\!\!\rho'\!\!>_{env}<\!\!m'\!\!>_{mem}<\!\!X_2\!\!>_{bnd}<\!\!\varphi_1'\!\!>_{form}$ and $\pi_{c_2}\equiv <\!\!c_2\!\!>_k<\!\!\rho\!\!>_{env}<\!\!m\!\!>_{mem}<\!\!X_1\!\!>_{bnd}<\!\!\varphi_2\!\!>_{form}\Downarrow <\!\!\cdot\!\!>_k<\!\!\rho'\!\!>_{env}<\!\!m'\!\!>_{mem}<\!\!X_2\!\!>_{bnd}<\!\!\varphi_2'\!\!>_{form}$.\\
For any $<\!\!c\!\!>_k ass_{pre}\Downarrow <\!\!\cdot\!\!>_k ass_{post}$, say $\pi_{c}$, $ \overline{\hbar(\pi_{c})}=c$. All we need to do is that
$$<\!\!<\!\!\hbar(\pi_{c})\!\!>_k ass_{pre}\!\!>\rightarrow^*<\!\!\cdot\!\!>$$
We prove by structural induction on $\pi_{c}$:\\
\textbf{Case1:} $\pi_{c}\equiv <\!\!\mathrm{x}:=e\!\!>_k ass_{pre}\Downarrow <\!\!\cdot\!\!>_k ass_{post}$ and $\hbar(\pi_{c})\equiv assert\;ass_{pre}; \mathrm{x}:=e; assert\;ass_{post}$ where $ass_{pre} \equiv <\!\!\rho\!\!>_{env}<\!\!m\!\!>_{mem}<\!\!X\!\!>_{bnd}<\!\!\varphi\!\!>_{form}$ and $ass_{post} \equiv <\!\!\rho[\rho(e)/x]\!\!>_{env}<\!\!m\!\!>_{mem}<\!\!X\!\!>_{bnd}<\!\!\varphi\!\!>_{form}$.\\
By V-ASSERT and V-ASGN1 rules,
$$<\!\!<\!\!assert\;ass_{pre}; \mathrm{x}:=e; assert\;ass_{post}\!\!>_k ass_{pre}\!\!>\rightarrow<\!\!<\!\!\mathrm{x}:=e; assert\;ass_{post}\!\!>_k ass_{pre}\!\!>$$
$$\rightarrow<\!\!<\!\!assert\;ass_{post}\!\!>_k ass_{post}\!\!>
$$
By V-ASSERT rule again, we conclude
$$<\!\!<\!\!assert\;ass_{post}\!\!>_k ass_{post}\!\!>\rightarrow<\!\!<\!\!\cdot\!\!>_k ass_{post}\!\!>$$
The V-NULL rule implies $<\!\!<\!\!\cdot\!\!>_k ass_{post}\!\!>\rightarrow<\!\!\cdot\!\!>$.\\
\textbf{Case2:} $\pi_{c}\equiv <\!\!\mathrm{x}:=A[e]\!\!>_k ass_{pre}\Downarrow <\!\!\cdot\!\!>_k ass_{post}$ and $\hbar(\pi_{c})\equiv assert\;ass_{pre}; \mathrm{x}:=A[e]; assert\;ass_{post}$ where $ass_{pre} \equiv <\!\!\rho\!\!>_{env}<\!\!m\!\!>_{mem}<\!\!X\!\!>_{bnd}<\!\!\varphi\!\!>_{form}$ and $ass_{post} \equiv <\!\!\rho[m(\rho(A)+_\mathit{Int}\rho(e))/\mathrm{x}]\!\!>_{env}<\!\!m\!\!>_{mem}<\!\!X\!\!>_{bnd}<\!\!\varphi\!\!>_{form}$.\\
By V-ASSERT and V-ASGN2 rules,
$$<\!\!<\!\!assert\;ass_{pre}; \mathrm{x}:=A[e]; assert\;ass_{post}\!\!>_k ass_{pre}\!\!>\rightarrow<\!\!<\!\!\mathrm{x}:=A[e]; assert\;ass_{post}\!\!>_k ass_{pre}\!\!>$$
$$\rightarrow<\!\!<\!\!assert\;ass_{post}\!\!>_k ass_{post}\!\!>$$
By V-ASSERT and V-NULL rules,
$$<\!\!<\!\!assert\;ass_{post}\!\!>_k ass_{post}\!\!>\rightarrow<\!\!<\!\!\cdot\!\!>_k ass_{post}\!\!>\rightarrow<\!\!\cdot\!\!>$$
\textbf{Case3:} $\pi_{c}\equiv <\!\!A[e_1]:=e_2\!\!>_k ass_{pre}\Downarrow <\!\!\cdot\!\!>_k ass_{post}$ and
$\hbar(\pi_{c})\equiv assert\;ass_{pre}; A[e_1]:=e_2; assert\;ass_{post}$ where $ass_{pre} \equiv <\!\!\rho\!\!>_{env}<\!\!m\!\!>_{mem}<\!\!X\!\!>_{bnd}<\!\!\varphi\!\!>_{form}$ and $ass_{post} \equiv <\!\!\rho\!\!>_{env}<\!\!m[\rho(e_2)/(\rho(A)+_\mathit{Int}\rho(e_1))]\!\!>_{mem}<\!\!X\!\!>_{bnd}<\!\!\varphi\!\!>_{form}$.\\
By V-ASSERT and V-ASGN3 rules,
$$
<\!\!<\!\!assert\;ass_{pre}; A[e_1]:=e_2; assert\;ass_{post}\!\!>_k ass_{pre}\!\!>\rightarrow$$
$$<\!\!<\!\!A[e_1]:=e_2; assert\;ass_{post}\!\!>_k ass_{pre}\!\!>\rightarrow<\!\!<\!\!assert\;ass_{post}\!\!>_k ass_{post}\!\!>
$$
By V-ASSERT and V-NULL rules,
$$<\!\!<\!\!assert\;ass_{post}\!\!>_k ass_{post}\!\!>\rightarrow<\!\!<\!\!\cdot\!\!>_k ass_{post}\!\!>\rightarrow<\!\!\cdot\!\!>$$
\textbf{Case4:} $\pi_{c}\equiv <\!\!A:=Array(\overline{e})\!\!>_k ass_{pre}\Downarrow <\!\!\cdot\!\!>_k ass_{post}$ and $\hbar(\pi_{c})\equiv assert\;ass_{pre}; A:=Array(\overline{e}); assert\;ass_{post}$ where
$ass_{pre} \equiv <\!\!\rho\!\!>_{env}<\!\!m\!\!>_{mem}<\!\!X\!\!>_{bnd}<\!\!\varphi\!\!>_{form}$ and $ass_{post} \equiv <\!\!\rho[p/A]\!\!>_{env}<\!\!p\mapsto[\rho(\overline{e})],m\!\!>_{mem}<\!\!X\cup \{p\}\!\!>_{bnd}<\!\!\varphi\!\!>_{form}$.\\
By V-ASSERT and V-ARRAY rules,
$$
<\!\!<\!\!assert\;ass_{pre}; A:=Array(\overline{e}); assert\;ass_{post}\!\!>_k ass_{pre}\!\!>\rightarrow$$
$$<\!\!<\!\!A:=Array(\overline{e}); assert\;ass_{post}\!\!>_k ass_{pre}\!\!>
\rightarrow<\!\!<\!\!assert\;ass_{post}\!\!>_k ass_{post}\!\!>
$$
By V-ASSERT and V-NULL rules,
$$<\!\!<\!\!assert\;ass_{post}\!\!>_k ass_{post}\!\!>\rightarrow<\!\!<\!\!\cdot\!\!>_k ass_{post}\!\!>\rightarrow<\!\!\cdot\!\!>$$
\textbf{Case5:} $\pi_{c}\equiv <\!\!c_1;c_2\!\!>_k ass_{pre}\Downarrow <\!\!\cdot\!\!>_k ass_{post}$ and $\hbar(\pi_{c})\equiv assert\;ass_{pre};\hbar(\pi_{c_1});\hbar(\pi_{c_2}) ; $\\ $assert\;ass_{post}$ where $\pi_{c_1}\equiv <\!\!c_1\!\!>_k ass_{pre}\Downarrow <\!\!\cdot\!\!>_k ass_1$ and $\pi_{c_2}\equiv <\!\!c_2\!\!>_k ass_1\Downarrow <\!\!\cdot\!\!>_k ass_{post}$ for appropriate $ass_1$.\\
By V-ASSERT rule
$$<\!\!<\!\!assert\;ass_{pre};\hbar(\pi_{c_1});\hbar(\pi_{c_2}) ; assert\;ass_{post}\!\!>_k ass_{pre}\!\!>\rightarrow$$
$$<\!\!<\!\!\hbar(\pi_{c_1});\hbar(\pi_{c_2}) ; assert\;ass_{post}\!\!>_k ass_{pre}\!\!>$$
By the induction hypothesis of $\pi_{c_1}$, we conclude
$$<\!\!<\!\!\hbar(\pi_{c_1})\!\!>_k ass_{pre}\!\!>\rightarrow^*<\!\!<\!\!\cdot\!\!>_k ass_1\!\!>$$
which implies
$$<\!\!<\!\!\hbar(\pi_{c_1});\hbar(\pi_{c_2}) ; assert\;ass_{post}\!\!>_k ass_{pre}\!\!>\rightarrow^*<\!\!<\!\!\hbar(\pi_{c_2}) ; assert\;ass_{post}\!\!>_k ass_1\!\!>$$
By the induction hypothesis of $\pi_{c_2}$, we conclude
$$<\!\!<\!\!\hbar(\pi_{c_2})\!\!>_k ass_1\!\!>\rightarrow^*<\!\!<\!\!\cdot\!\!>_k ass_{post}\!\!>$$
which implies
$$<\!\!<\!\!\hbar(\pi_{c_2}) ; assert\;ass_{post}\!\!>_k ass_1\!\!>\rightarrow^*<\!\!<\!\!assert\;ass_{post}\!\!>_k ass_{post}\!\!>$$
Applying V-ASSERT rule and V-NULL rule, we conclude
$$<\!\!<\!\!assert\;ass_{post}\!\!>_k ass_{post}\!\!>\rightarrow<\!\!<\!\!\cdot\!\!>_k ass_{post}\!\!>\rightarrow<\!\!\cdot\!\!>$$
\textbf{Case6:} $\pi_{c}\equiv <\!\!\textrm{if }(b)\;c_1\textrm{ else }c_2\!\!>_k<\!\!\rho\!\!>_{env}<\!\!m\!\!>_{mem}<\!\!X\!\!>_{bnd}<\!\!\varphi\!\!>_{form}\Downarrow <\!\!\cdot\!\!>_k ass_{post}$ and $\hbar(\pi_{c})\equiv assert(<\!\!\rho\!\!>_{env}<\!\!m\!\!>_{mem}<\!\!X\!\!>_{bnd}<\!\!\varphi\!\!>_{form});(\textrm{if }(b)\;\hbar(\pi_{if_1}) \textrm{ else } \hbar(\pi_{if_2})) ; assert\;ass_{post}$ where
$\pi_{if_1}\equiv <\!\!c_1\!\!>_k<\!\!\rho\!\!>_{env}<\!\!m\!\!>_{mem}<\!\!X\!\!>_{bnd}<\!\!\varphi\wedge (\rho(b)\;is\;true)\!\!>_{form}\Downarrow <\!\!\cdot\!\!>_k ass_{post}$ and
$\pi_{if_2}\equiv <\!\!c_2\!\!>_k<\!\!\rho\!\!>_{env}<\!\!m\!\!>_{mem}<\!\!X\!\!>_{bnd}<\!\!\varphi\wedge (\rho(b)\;is\;false)\!\!>_{form}\Downarrow <\!\!\cdot\!\!>_k ass_{post}$.\\
By V-ASSERT and V-IF rules,
$$
<\!\!<\!\!assert(<\!\!\rho\!\!>_{env}<\!\!m\!\!>_{mem}<\!\!X\!\!>_{bnd}<\!\!\varphi\!\!>_{form});(\textrm{if }(b)\;\hbar(\pi_{if_1}) \textrm{ else } \hbar(\pi_{if_2})) ; assert\;ass_{post}\!\!>_k $$
$$<\!\!\rho\!\!>_{env}<\!\!m\!\!>_{mem}<\!\!X\!\!>_{bnd}<\!\!\varphi\!\!>_{form}\!\!>\rightarrow$$$$<\!\!<\!\!(\textrm{if }(b)\;\hbar(\pi_{if_1}) \textrm{ else } \hbar(\pi_{if_2})) ; assert\;ass_{post}\!\!>_k <\!\!\rho\!\!>_{env}
<\!\!m\!\!>_{mem}<\!\!X\!\!>_{bnd}<\!\!\varphi\!\!>_{form}\!\!>\rightarrow$$
$$<<\!\!\hbar(\pi_{if_1});assert\;ass_{post}\!\!>_k<\!\!\rho\!\!>_{env}<\!\!m\!\!>_{mem}<\!\!X\!\!>_{bnd}<\!\!\varphi\wedge(\rho(b)\;is\;true)\!\!>_{form},$$
$$ <\!\!\hbar(\pi_{if_2});assert\;ass_{post}\!\!>_k<\!\!\rho\!\!>_{env}<\!\!m\!\!>_{mem}<\!\!X\!\!>_{bnd}<\!\!\varphi\wedge(\rho(b)\;is\;false)\!\!>_{form}>
$$
By the induction hypothesis of $\pi_{if_1}$, we conclude
$$<\!\!<\!\!\hbar(\pi_{if_1})\!\!>_k<\!\!\rho\!\!>_{env}<\!\!m\!\!>_{mem}<\!\!X\!\!>_{bnd}<\!\!\varphi\wedge(\rho(b)\;is\;true)\!\!>_{form}\!\!>\rightarrow^*<\!\!<\!\!\cdot\!\!>_k ass_{psot}\!\!>$$
which implies
$$<\!\!<\!\!\hbar(\pi_{if_1});assert\;ass_{post}\!\!>_k<\!\!\rho\!\!>_{env}<\!\!m\!\!>_{mem}<\!\!X\!\!>_{bnd}<\!\!\varphi\wedge(\rho(b)\;is\;true)\!\!>_{form}\!\!>$$
$$\rightarrow^*<\!\!<\!\!assert\;ass_{post}\!\!>_k ass_{psot}\!\!>
$$
By the induction hypothesis of $\pi_{if_2}$, we conclude
$$<\!\!<\!\!\hbar(\pi_{if_2})\!\!>_k<\!\!\rho\!\!>_{env}<\!\!m\!\!>_{mem}<\!\!X\!\!>_{bnd}<\!\!\varphi\wedge(\rho(b)\;is\;false)\!\!>_{form}\!\!>\rightarrow^*<\!\!<\!\!\cdot\!\!>_k ass_{psot}\!\!>$$
which implies
$$<\!\!<\!\!\hbar(\pi_{if_2});assert\;ass_{post}\!\!>_k<\!\!\rho\!\!>_{env}<\!\!m\!\!>_{mem}<\!\!X\!\!>_{bnd}<\!\!\varphi\wedge(\rho(b)\;is\;false)\!\!>_{form}\!\!>$$
$$\rightarrow^*<\!\!<\!\!assert\;ass_{post}\!\!>_k ass_{psot}\!\!>
$$
Therefor,
$$
<<\!\!\hbar(\pi_{if_1});assert\;ass_{post}\!\!>_k<\!\!\rho\!\!>_{env}<\!\!m\!\!>_{mem}<\!\!X\!\!>_{bnd}<\!\!\varphi\wedge(\rho(b)\;is\;true)\!\!>_{form},$$
$$<\!\!\hbar(\pi_{if_2});assert\;ass_{post}\!\!>_k<\!\!\rho\!\!>_{env}<\!\!m\!\!>_{mem}<\!\!X\!\!>_{bnd}<\!\!\varphi\wedge(\rho(b)\;is\;false)\!\!>_{form}>$$
$$\rightarrow^*<<\!\!assert\;ass_{post}\!\!>_k aas_{post},\;<\!\!assert\;ass_{post}\!\!>_k aas_{post}>
$$
Applying V-ASSERT rule and V-NULL rule, we conclude
$$<<\!\!assert\;ass_{post}\!\!>_k aas_{post},\;<\!\!assert\;ass_{post}\!\!>_k aas_{post}>\rightarrow<\!\!\cdot\!\!>$$
\textbf{Case7:} $\pi_{c}\equiv <\!\!\textrm{while }b\textrm{ do  }c\!\!>_k<\!\!\rho\!\!>_{env}<\!\!m\!\!>_{mem}<\!\!X\!\!>_{bnd}<\!\!\varphi\!\!>_{form}\Downarrow <\!\!\cdot\!\!>_k ass_{post}$ and $\hbar(\pi_{c})\equiv assert(<\!\!\rho\!\!>_{env}<\!\!m\!\!>_{mem}<\!\!X\!\!>_{bnd}<\!\!\varphi\!\!>_{form});(\textrm{while }b\textrm{ do }\hbar(\pi_{body})) ; assert\;ass_{post}$ where $\pi_{body}\equiv <\!\!c\!\!>_k<\!\!\rho\!\!>_{env}<\!\!m\!\!>_{mem}<\!\!X\!\!>_{bnd}<\!\!\varphi\wedge (\rho(b)\;is\;true)\!\!>_{form}\Downarrow <\!\!\cdot\!\!>_k<\!\!\rho\!\!>_{env}<\!\!m\!\!>_{mem}<\!\!X\!\!>_{bnd}<\!\!\varphi\!\!>_{form}$ and $aas_{post}\equiv <\!\!\rho\!\!>_{env}<\!\!m\!\!>_{mem}<\!\!X\!\!>_{bnd}<\!\!\varphi\wedge(\rho(b)\;is\;false)\!\!>_{form}$.\\
By V-ASSERT and V-WHILE rules,
$$
<\!\!<\!\!assert(<\!\!\rho\!\!>_{env}<\!\!m\!\!>_{mem}<\!\!X\!\!>_{bnd}<\!\!\varphi\!\!>_{form});(\textrm{while }b\textrm{ do }\hbar(\pi_{body})) ; assert\;ass_{post}\!\!>_k <\!\!\rho\!\!>_{env}$$
$$<\!\!m\!\!>_{mem}<\!\!X\!\!>_{bnd}<\!\!\varphi\!\!>_{form}\!\!>\rightarrow$$
$$<\!\!<\!\!(\textrm{while }b\textrm{ do }\hbar(\pi_{body})) ; assert\;ass_{post}\!\!>_k <\!\!\rho\!\!>_{env}<\!\!m\!\!>_{mem}<\!\!X\!\!>_{bnd}<\!\!\varphi\!\!>_{form}\!\!>\rightarrow$$
$$<<\!\!\hbar(\pi_{body});assert(<\!\!\rho\!\!>_{env}<\!\!m\!\!>_{mem}<\!\!X\!\!>_{bnd}<\!\!\varphi\!\!>_{form})\!\!>_k <\!\!\rho\!\!>_{env}<\!\!m\!\!>_{mem}<\!\!X\!\!>_{bnd}$$
$$<\!\!\varphi\wedge(\rho(b)\;is\;true)\!\!>_{form},<\!\!assert\;ass_{post}\!\!>_k <\!\!\rho\!\!>_{env}<\!\!m\!\!>_{mem}<\!\!X\!\!>_{bnd}<\!\!\varphi\wedge(\rho(b)\;is\;false)\!\!>_{form}>
$$
By the induction hypothesis of $\pi_{body}$,  we conclude
$$<\!\!<\!\!\hbar(\pi_{body})\!\!>_k <\!\!\rho\!\!>_{env}<\!\!m\!\!>_{mem}<\!\!X\!\!>_{bnd}<\!\!\varphi\wedge(\rho(b)\;is\;true)\!\!>_{form}\!\!>$$
$$\rightarrow^*<\!\!<\!\!\cdot\!\!>_k <\!\!\rho\!\!>_{env}<\!\!m\!\!>_{mem}<\!\!X\!\!>_{bnd}<\!\!\varphi\!\!>_{form}\!\!>$$
which implies
$$<\!\!<\!\!\hbar(\pi_{body});assert(<\!\!\rho\!\!>_{e\!n\!v}<\!\!m\!\!>_{m\!e\!m}<\!\!X\!\!>_{b\!n\!d}$$
$$<\!\!\varphi\!\!>_{f\!o\!r\!m})\!\!>_k <\!\!\rho\!\!>_{e\!n\!v}<\!\!m\!\!>_{m\!e\!m}<\!\!X\!\!>_{b\!n\!d}<\!\!\varphi\wedge(\rho(b)\;is\;true)\!\!>_{f\!o\!r\!m}\!\!>\rightarrow^*$$
$$<\!\!<\!\!assert(<\!\!\rho\!\!>_{env}<\!\!m\!\!>_{mem}<\!\!X\!\!>_{bnd}<\!\!\varphi\!\!>_{form})\!\!>_k <\!\!\rho\!\!>_{env}<\!\!m\!\!>_{mem}<\!\!X\!\!>_{bnd}<\!\!\varphi\!\!>_{form}\!\!>$$
By V-ASSERT and V-NULL rules
$$
<\!\!<\!\!assert(<\!\!\rho\!\!>_{env}<\!\!m\!\!>_{mem}<\!\!X\!\!>_{bnd}<\!\!\varphi\!\!>_{form})\!\!>_k <\!\!\rho\!\!>_{env}<\!\!m\!\!>_{mem}<\!\!X\!\!>_{bnd}<\!\!\varphi\!\!>_{form},$$
$$<\!\!assert\;ass_{post}\!\!>_k <\!\!\rho\!\!>_{env}<\!\!m\!\!>_{mem}<\!\!X\!\!>_{bnd}<\!\!\varphi\wedge(\rho(b)\;is\;false)\!\!>_{form}>\rightarrow^*<\!\!\cdot\!\!>
$$
\textbf{Case8:} $\pi_{c}\equiv <\!\!\textrm{await }b \textrm{ then } cc\!\!>_k<\!\!\rho\!\!>_{env}<\!\!m\!\!>_{mem}<\!\!X\!\!>_{bnd}<\!\!\varphi\!\!>_{form}\Downarrow <\!\!\cdot\!\!>_k ass_{post}$ and $\hbar(\pi_{c})\equiv assert(<\!\!\rho\!\!>_{env}<\!\!X\!\!>_{bnd}<\!\!\varphi\!\!>_{form});
(\textrm{await }b \textrm{ then }\;\hbar(\pi_{body})) ; assert\;ass_{post}$ where $\pi_{body}\equiv <\!\!cc\!\!>_k<\!\!\rho\!\!>_{env}<\!\!m\!\!>_{mem}<\!\!X\!\!>_{bnd}<\!\!\varphi\wedge (\rho(b)\;is\;true)\!\!>_{form}\Downarrow <\!\!\cdot\!\!>_k ass_{post}$.\\
By V-ASSERT and V-AWAIT rules,
$$
<\!\!<\!\!assert(<\!\!\rho\!\!>_{env}<\!\!X\!\!>_{bnd}<\!\!\varphi\!\!>_{form});
(\textrm{await }b \textrm{ then }\;\hbar(\pi_{body})) ; assert\;ass_{post}\!\!>_k <\!\!\rho\!\!>_{env}$$
$$<\!\!m\!\!>_{mem}<\!\!X\!\!>_{bnd}<\!\!\varphi\!\!>_{form}\!\!>\rightarrow^*$$
$$<\!\!<\!\!\hbar(\pi_{body}); assert\;ass_{post}\!\!>_k <\!\!\rho\!\!>_{env}<\!\!m\!\!>_{mem}<\!\!X\!\!>_{bnd}<\!\!\varphi\wedge(\rho(b)\;is\;true)\!\!>_{f\!o\!r\!m}\!\!>
$$
By the induction hypothesis of $\pi_{body}$, we conclude
$$<\!\!<\!\!\hbar(\pi_{body})\!\!>_k <\!\!\rho\!\!>_{env}<\!\!m\!\!>_{mem}<\!\!X\!\!>_{bnd}<\!\!\varphi\wedge(\rho(b)\;is\;true)\!\!>_{form}\!\!>\rightarrow^*<\!\!<\!\!\cdot\!\!>_k aas_{psot}\!\!>$$
Therefor,
$$<\!\!<\!\!\hbar(\pi_{body}); assert\;ass_{post}\!\!>_k <\!\!\rho\!\!>_{env}<\!\!m\!\!>_{mem}<\!\!X\!\!>_{bnd}<\!\!\varphi\wedge(\rho(b)\;is\;true)\!\!>_{form}\!\!>\rightarrow^*$$
$$<\!\!<\!\!assert\;ass_{post}\!\!>_k aas_{psot}\!\!>
$$
By V-ASSERT and V-NULL rules
$$<\!\!<\!\!assert\;ass_{post}\!\!>_k aas_{psot}\!\!>\rightarrow<\!\!<\!\!\cdot\!\!>_k aas_{psot}\!\!>\rightarrow<\!\!\cdot\!\!>$$
\textbf{Case9:} $\pi_{c}\equiv <\!\!c_1\parallel c_2\!\!>_k<\!\!\rho\!\!>_{env}<\!\!m\!\!>_{mem}<\!\!X_1\!\!>_{bnd}<\!\!\varphi_1\wedge \varphi_2\!\!>_{form}\Downarrow <\!\!\cdot\!\!>_k ass_{post}$ and $\hbar(\pi_{c})\equiv assert(<\!\!\rho\!\!>_{env}<\!\!m\!\!>_{mem}<\!\!X_1\!\!>_{bnd}<\!\!\varphi_1\wedge \varphi_2 \!\!>_{form});
(\hbar(\pi_{c_1})\parallel\hbar(\pi_{c_2})) ; assert\;ass_{post}$ where $\pi_{c_1}\equiv <\!\!c_1\!\!>_k<\!\!\rho\!\!>_{env}<\!\!m\!\!>_{mem}<\!\!X_1\!\!>_{bnd}<\!\!\varphi_1\!\!>_{form}\Downarrow <\!\!\cdot\!\!>_k<\!\!\rho'\!\!>_{env}<\!\!m'\!\!>_{mem}<\!\!X_2\!\!>_{bnd}<\!\!\varphi_1'\!\!>_{form}$ and $\pi_{c_2}\equiv <\!\!c_2\!\!>_k<\!\!\rho\!\!>_{env}<\!\!m\!\!>_{mem}<\!\!X_1\!\!>_{bnd}<\!\!\varphi_2\!\!>_{form}\Downarrow <\!\!\cdot\!\!>_k<\!\!\rho'\!\!>_{env}<\!\!m'\!\!>_{mem}<\!\!X_2\!\!>_{bnd}<\!\!\varphi_2'\!\!>_{form}$ and $aas_{post}\equiv <\!\!\rho'\!\!>_{env}<\!\!m'\!\!>_{mem}<\!\!X_2\!\!>_{bnd}<\!\!\varphi_1'\wedge\varphi_2'\!\!>_{form}$\\
By V-ASSERT and V-PAR rules,
$$
<\!\!<\!\!assert(<\!\!\rho\!\!>_{env}<\!\!m\!\!>_{mem}<\!\!X_1\!\!>_{bnd}<\!\!\varphi_1\wedge \varphi_2 \!\!>_{form});
(\hbar(\pi_{c_1})\parallel\hbar(\pi_{c_2})) ; assert\;ass_{post}\!\!>_k <\!\!\rho\!\!>_{env}$$
$$<\!\!m\!\!>_{mem}<\!\!X_1\!\!>_{bnd}<\!\!\varphi_1\wedge \varphi_2 \!\!>_{form}\!\!>\rightarrow<\!\!<\!\!(\hbar(\pi_{c_1})\parallel\hbar(\pi_{c_2})) ; assert\;ass_{post}\!\!>_k <\!\!\rho\!\!>_{env}<\!\!m\!\!>_{mem}$$
$$<\!\!X_1\!\!>_{bnd}<\!\!\varphi_1\wedge \varphi_2 \!\!>_{form}\!\!>\rightarrow<<\!\!\hbar(\pi_{c_1})\!\!>_k <\!\!\rho\!\!>_{env}<\!\!m\!\!>_{mem}<\!\!X_1\!\!>_{bnd}<\!\!\varphi_1\!\!>_{form},$$
$$<\!\!\hbar(\pi_{c_2})\!\!>_k <\!\!\rho\!\!>_{env}<\!\!m\!\!>_{mem}<\!\!X_1\!\!>_{bnd}<\!\!\varphi_2\!\!>_{form},$$
$$<\!\! assert\;ass_{post}\!\!>_k result(\hbar(\pi_{c_1}),\rho,m,X_1,\varphi_1)\cap result(\hbar(\pi_{c_2}),\rho,m,X_1,\varphi_2)>$$
By the induction hypothesis of $\pi_{c_1}$, we conclude
$$<\!\!<\!\!\hbar(\pi_{c_1})\!\!>_k <\!\!\rho\!\!>_{env}<\!\!m\!\!>_{mem}<\!\!X_1\!\!>_{bnd}<\!\!\varphi_1\!\!>_{form}\!\!>\rightarrow^*$$
$$<\!\!<\!\!\cdot\!\!>_k<\!\!\rho'\!\!>_{env}<\!\!m'\!\!>_{mem}<\!\!X_2\!\!>_{bnd}<\!\!\varphi_1'\!\!>_{form}\!\!>$$
which implies
$$result(\hbar(\pi_{c_1}),\rho,m,X_1,\varphi_1)=<\!\!\rho'\!\!>_{env}<\!\!m'\!\!>_{mem}<\!\!X_2\!\!>_{bnd}<\!\!\varphi_1'\!\!>_{form}\!\!>
$$
By the induction hypothesis of $\pi_{c_2}$, we conclude
$$<\!\!<\!\!\hbar(\pi_{c_2})\!\!>_k <\!\!\rho\!\!>_{env}<\!\!m\!\!>_{mem}<\!\!X_1\!\!>_{bnd}<\!\!\varphi_2\!\!>_{form}\!\!>\rightarrow^*$$
$$<\!\!<\!\!\cdot\!\!>_k<\!\!\rho'\!\!>_{env}<\!\!m'\!\!>_{mem}<\!\!X_2\!\!>_{bnd}<\!\!\varphi_2'\!\!>_{form}\!\!>$$
which implies
$$result(\hbar(\pi_{c_2}),\rho,m,X_1,\varphi_2)=<\!\!\rho'\!\!>_{env}<\!\!m'\!\!>_{mem}<\!\!X_2\!\!>_{bnd}<\!\!\varphi_2'\!\!>_{form}\!\!>
$$
Therefor,
$$<\!\!<\!\!\hbar(\pi_{c_1})\!\!>_k <\!\!\rho\!\!>_{env}<\!\!m\!\!>_{mem}<\!\!X_1\!\!>_{bnd}<\!\!\varphi_1\!\!>_{form},$$ $$<\!\!\hbar(\pi_{c_2})\!\!>_k <\!\!\rho\!\!>_{env}<\!\!m\!\!>_{mem}<\!\!X_1\!\!>_{bnd}<\!\!\varphi_2\!\!>_{form}>,$$
$$<\!\! assert\;ass_{post}\!\!>_k result(\hbar(\pi_{c_1}),\rho,m,X_1,\varphi_1)\cap result(\hbar(\pi_{c_2}),\rho,m,X_1,\varphi_2)>\rightarrow^*$$
$$<<\!\!\cdot\!\!>,<\!\!\cdot\!\!>,<\!\!assert\;ass_{post}\!\!>_k<\!\!\rho'\!\!>_{env}<\!\!m'\!\!>_{mem}<\!\!X_2\!\!>_{bnd}<\!\!\varphi_1'\wedge\varphi_2'\!\!>_{form}>
$$
By V-ASSERT and V-NULL rules, we conclude
$$
<<\!\!\cdot\!\!>,<\!\!\cdot\!\!>,<\!\!assert\;ass_{post}\!\!>_k<\!\!\rho'\!\!>_{env}<\!\!m'\!\!>_{mem}<\!\!X_2\!\!>_{bnd}<\!\!\varphi_1'\wedge\varphi_2'\!\!>_{form}>\rightarrow^*<\!\!\cdot\!\!>
$$
\textbf{Case10:} $\pi_{c}\equiv <\!\!skip\!\!>_k ass\Downarrow <\!\!\cdot\!\!>_k ass$ and
$\hbar(\pi_{c})\equiv assert\;ass; skip; assert\;ass$.\\
By V-ASSERT ,V-SKIP and V-NULL rules ,
$$
<\!\!<\!\!assert\;ass; skip; assert\;ass\!\!>_k ass\!\!>\rightarrow<\!\!<\!\!skip; assert\;ass\!\!>_k ass\!\!>\rightarrow<\!\!<\!\! assert\;ass\!\!>_k ass\!\!>$$
$$\rightarrow<\!\!<\!\!\cdot\!\!>_k ass\!\!>\rightarrow<\!\!\cdot\!\!>$$
\end{proof}
\section{Application Example}
Let's take a standard problem in parallel programming as an example. The producer process provides  products (e.g. values) to the consumer process. Because the speed of producer process is different from that of consumer, it is profitable to set up a buffer between producer and consumer. But the storage is limited, the buffer can only store $N$ products. Figure 5 uses this solution to copy values of array $A[1:M]$ into array $B[1:M]$. Figure 6 gives proof outlines of the producer process and Figure 7 gives proof outlines of the consumer process. The yellow text is the code, the rest are assertions. $<\!\!\rho\!\!>_{env}<\!\!m\!\!>_{mem}<\!\!X\!\!>_{bnd}<\!\!\varphi\!\!>_{form}\;c\;<\!\!\rho'\!\!>_{env}<\!\!m'\!\!>_{mem}<\!\!X'\!\!>_{bnd}<\!\!\varphi'\!\!>_{form}$
implies the existence of a proof of the corresponding correctness pair, using the rules in Figure 3. Two consecutive assertions  denote a use of the rule of M-CONS.\\
\begin{figure}
  \small
  \begin{tabular}{l}
    \hline
    $\textbf{Comment:}$\\
    $C \equiv\textrm{the }\textrm{shared } \textrm{buffer }\textrm{of }\textrm{size }N$;\\
    $\mathrm{in} \equiv \textrm{number } \textrm{of } \textrm{elements } \textrm{added } \textrm{to } \textrm{the } \textrm{buffer}$;\\
    $\mathrm{out}\equiv  \textrm{number } \textrm{of } \textrm{elements } \textrm{removed } \textrm{from } \textrm{the } \textrm{buffer}$;\\
    $\mathrm{in}-\mathrm{out} \equiv \textrm{the } \textrm{total } \textrm{number } \textrm{of } \textrm{elements } \textrm{in } \textrm{the } \textrm{buffer}, \textrm{those } \textrm{are } \textrm{in } \textrm{order}:\mathit{C}[\mathrm{out}\;mod\;N],\cdots,$\\
    $\mathit{C}[\mathrm{(out+in-out-1)}\;mod\;N]$\\\\
    $\textbf{Begin:}$\\
     $\mathrm{in}:=0;\;\; \mathrm{out}:=0;\;\; \mathrm{i}:=1;\;\; \mathrm{j}:=1;$\\
     $\textbf{producer:}$\\
     $\qquad\qquad\quad\textbf{while }\mathrm{i}<M+1\;\; \textbf{do }$\\
     $\qquad\qquad\qquad\qquad \mathrm{x}:=A[\mathrm{i}];$\\
     $\qquad\qquad\qquad\qquad \textbf{await } \mathrm{in}-\mathrm{out}<N \textbf{ then } skip;$\\
     $\qquad\qquad\qquad\qquad C[\mathrm{in}\; mod\; N]:=\mathrm{x};$\\
     $\qquad\qquad\qquad\qquad \mathrm{in}:=\mathrm{in}+1;$\\
     $\qquad\qquad\qquad\qquad \mathrm{i}:=\mathrm{i}+1;$\\
     $\parallel$\\
     $\textbf{consumer:}$\\
     $\qquad\qquad\quad\textbf{while }\mathrm{j}<M+1\;\; \textbf{do }$\\
     $\qquad\qquad\qquad\qquad \textbf{await } \mathrm{in}-\mathrm{out}>0 \textbf{ then } skip;$\\
     $\qquad\qquad\qquad\qquad \mathrm{y}:=C[\mathrm{out}\; mod\; N];$\\
     $\qquad\qquad\qquad\qquad \mathrm{out}:=\mathrm{out}+1;$\\
     $\qquad\qquad\qquad\qquad B[\mathrm{j}]:=\mathrm{y};$\\
     $\qquad\qquad\qquad\qquad \mathrm{j}:=\mathrm{j}+1;$\\
     $\textbf{End}$\\
\hline
\end{tabular}
  \caption{Producer and Consumer}\label{fig2}
\end{figure}
\begin{figure}
  \small
  \begin{tabular}{l}
    \hline\\
     $1.<\!\!\mathrm{in}\mapsto in, \mathrm{out}\mapsto out, \mathrm{i}\mapsto i, \mathrm{j}\mapsto j, \mathrm{x}\mapsto x, \mathrm{y}\mapsto y, \mathrm{A}\mapsto a, \mathrm{B}\mapsto b, \mathrm{C}\mapsto c, \rho\!\!>_{env}<\!\!a\mapsto[\overline{p}], b\mapsto[\overline{q}],$\\
     $c\mapsto[\overline{k}], m\!\!>_{mem}<\!\!in,out,i,j,x,y,\overline{q},\overline{k}\!\!>_{bnd}<\!\!\varphi\!\!>_{form}$\\\\
     \colorbox{yellow}{$\mathrm{in}:=0;\;\; \mathrm{out}:=0;\;\; \mathrm{i}:=1;\;\; \mathrm{j}:=1;$}\\\\
     $2.<\!\!\mathrm{in}\mapsto in, \mathrm{out}\mapsto out, \mathrm{i}\mapsto i, \mathrm{j}\mapsto j, \mathrm{x}\mapsto x, \mathrm{y}\mapsto y, \mathrm{A}\mapsto a, \mathrm{B}\mapsto b, \mathrm{C}\mapsto c,\rho \!\!>_{env}<\!\!a\mapsto[\overline{p}], b\mapsto[\overline{q}],$\\
     $c\mapsto[\overline{k}], m\!\!>_{mem}<\!\!in,out,i,j,x,y,\overline{q},\overline{k}\!\!>_{bnd}<\!\!\varphi\wedge i=in+1\!\!>_{form}$\\\\
     \colorbox{yellow}{$\textbf{producer:}$}\\
     \colorbox{yellow}{$\qquad\qquad\quad\textbf{while }\mathrm{i}<M+1\;\; \textbf{do }$}\\\\
     $3.<\!\!\mathrm{in}\mapsto in, \mathrm{out}\mapsto out, \mathrm{i}\mapsto i, \mathrm{j}\mapsto j, \mathrm{x}\mapsto x, \mathrm{y}\mapsto y, \mathrm{A}\mapsto a, \mathrm{B}\mapsto b, \mathrm{C}\mapsto c,\rho \!\!>_{env}<\!\!a\mapsto[\overline{p}], b\mapsto[\overline{q}], $\\
     $c\mapsto[\overline{k}], m\!\!>_{mem}<\!\!in,out,i,j,x,y,\overline{q},\overline{k}\!\!>_{bnd}<\!\!\varphi\wedge i=in+1\wedge i<M+1\!\!>_{form}$\\\\
     \colorbox{yellow}{$\qquad\qquad\qquad\qquad \mathrm{x}:=A[\mathrm{i}];$}\\\\
$4.<\!\!\mathrm{in}\mapsto in, \mathrm{out}\mapsto out, \mathrm{i}\mapsto i, \mathrm{j}\mapsto j, \mathrm{x}\mapsto p_i, \mathrm{y}\mapsto y, \mathrm{A}\mapsto a, \mathrm{B}\mapsto b, \mathrm{C}\mapsto c,\rho \!\!>_{env}<\!\!a\mapsto[\overline{p}], b\mapsto[\overline{q}],$\\
     $c\mapsto[\overline{k}], m\!\!>_{mem}<\!\!in,out,i,j,x,y,\overline{q},\overline{k}\!\!>_{bnd}<\!\!\varphi\wedge i=in+1\wedge i<M+1\!\!>_{form}$\\\\
     $5.<\!\!\mathrm{in}\mapsto in, \mathrm{out}\mapsto out, \mathrm{i}\mapsto i, \mathrm{j}\mapsto j, \mathrm{x}\mapsto x, \mathrm{y}\mapsto y, \mathrm{A}\mapsto a, \mathrm{B}\mapsto b, \mathrm{C}\mapsto c,\rho \!\!>_{env}<\!\!a\mapsto[\overline{p}], b\mapsto[\overline{q}],$\\
     $c\mapsto[\overline{k}], m\!\!>_{mem}<\!\!in,out,i,j,x,y,\overline{q},\overline{k}\!\!>_{bnd}<\!\!\varphi\wedge i=in+1\wedge i<M+1\wedge x=p_i\!\!>_{form}$\\\\
     \colorbox{yellow}{$\qquad\qquad\qquad\qquad \textbf{await } \mathrm{in}-\mathrm{out}<N \textbf{ then } skip;$}\\\\
 $6.<\!\!\mathrm{in}\mapsto in, \mathrm{out}\mapsto out, \mathrm{i}\mapsto i, \mathrm{j}\mapsto j, \mathrm{x}\mapsto x, \mathrm{y}\mapsto y, \mathrm{A}\mapsto a, \mathrm{B}\mapsto b, \mathrm{C}\mapsto c,\rho \!\!>_{env}<\!\!a\mapsto[\overline{p}], b\mapsto[\overline{q}],$\\
 $c\mapsto[\overline{k}], m\!\!>_{mem}<\!\!in,out,i,j,x,y,\overline{q},\overline{k}\!\!>_{bnd}<\!\!\varphi\wedge i=in+1\wedge i<M+1\wedge x=p_i\wedge in-out<N$\\
 $>_{form}$\\\\
     \colorbox{yellow}{$\qquad\qquad\qquad\qquad C[\mathrm{in}\; mod\; N]:=x;$}\\\\
$7.<\!\!\mathrm{in}\mapsto in, \mathrm{out}\mapsto out, \mathrm{i}\mapsto i, \mathrm{j}\mapsto j, \mathrm{x}\mapsto x, \mathrm{y}\mapsto y, \mathrm{A}\mapsto a, \mathrm{B}\mapsto b, \mathrm{C}\mapsto c,\rho \!\!>_{env}<\!\!a\mapsto[\overline{p}], b\mapsto[\overline{q}],$\\
$c\mapsto[\overline{k}], m\!\!>_{mem}<\!\!in,out,i,j,x,y,\overline{q},\overline{k}\!\!>_{bnd}<\!\!\varphi\wedge i=in+1\wedge i<M+1\wedge in-out<N\wedge$\\
$k_{in\;mod\;N}=p_i\!\!>_{form}$\\\\
     \colorbox{yellow}{$\qquad\qquad\qquad\qquad \mathrm{in}:=\mathrm{in}+1;$}\\\\
     $8.<\!\!\mathrm{in}\mapsto in, \mathrm{out}\mapsto out, \mathrm{i}\mapsto i, \mathrm{j}\mapsto j, \mathrm{x}\mapsto x, \mathrm{y}\mapsto y, \mathrm{A}\mapsto a, \mathrm{B}\mapsto b, \mathrm{C}\mapsto c,\rho \!\!>_{env}<\!\!a\mapsto[\overline{p}], b\mapsto[\overline{q}], $\\
     $c\mapsto[\overline{k}], m\!\!>_{mem}<\!\!in,out,i,j,x,y,\overline{q},\overline{k}\!\!>_{bnd}<\!\!\varphi\wedge i=in\wedge i<M+1\!\!>_{form}$\\\\
     \colorbox{yellow}{$\qquad\qquad\qquad\qquad \mathrm{i}:=\mathrm{i}+1;$}\\\\
     $9.<\!\!\mathrm{in}\mapsto in, \mathrm{out}\mapsto out, \mathrm{i}\mapsto i, \mathrm{j}\mapsto j, \mathrm{x}\mapsto x, \mathrm{y}\mapsto y, \mathrm{A}\mapsto a, \mathrm{B}\mapsto b, \mathrm{C}\mapsto c,\rho \!\!>_{env}<\!\!a\mapsto[\overline{p}], b\mapsto[\overline{q}],$\\
     $c\mapsto[\overline{k}], m\!\!>_{mem}<\!\!in,out,i,j,x,y,\overline{q},\overline{k}\!\!>_{bnd}<\!\!\varphi\wedge i=in+1\wedge i<M+1\!\!>_{form}$\\\\
      $10.<\!\!\mathrm{in}\mapsto in, \mathrm{out}\mapsto out, \mathrm{i}\mapsto i, \mathrm{j}\mapsto j, \mathrm{x}\mapsto x, \mathrm{y}\mapsto y, \mathrm{A}\mapsto a, \mathrm{B}\mapsto b, \mathrm{C}\mapsto c,\rho \!\!>_{env}<\!\!a\mapsto[\overline{p}], b\mapsto[\overline{q}],$\\
     $c\mapsto[\overline{k}], m\!\!>_{mem}<\!\!in,out,i,j,x,y,\overline{q},\overline{k}\!\!>_{bnd}<\!\!\varphi\wedge i=in+1\wedge i=M+1\!\!>_{form}$\\\\
     $\textrm{where } \varphi=(k_{(n-1)\;mod\;N}=p_n,\; out<n< in+1)\wedge (0\leq in-out\leq N)\wedge(1\leq i \leq M+1)\wedge$\\ $(1\leq j\leq M+1)$ and
     $\overline{p}=p_0,p_1,p_2,\cdots,p_M$ and $a \mapsto[\overline{p}]=a \mapsto p_0, a+1 \mapsto p_1,\cdots, a+M \mapsto p_M$ \\ and $p_n\in Int, 0<n<M+1$\\\\
\hline
\end{tabular}
  \caption{Proof outlines of Producer}\label{fig2}
\end{figure}
\begin{figure}
  \small
  \begin{tabular}{l}
    \hline\\
     $1.<\!\!\mathrm{in}\mapsto in, \mathrm{out}\mapsto out, \mathrm{i}\mapsto i, \mathrm{j}\mapsto j, \mathrm{x}\mapsto x, \mathrm{y}\mapsto y, \mathrm{A}\mapsto a, \mathrm{B}\mapsto b, \mathrm{C}\mapsto c, \rho\!\!>_{env}<\!\!a\mapsto[\overline{p}], b\mapsto[\overline{q}],$\\
     $c\mapsto[\overline{k}], m\!\!>_{mem}<\!\!in,out,i,j,x,y,\overline{q},\overline{k}\!\!>_{bnd}<\!\!\varphi\!\!>_{form}$\\\\
     \colorbox{yellow}{$\mathrm{in}:=0;\;\; \mathrm{out}:=0;\;\; \mathrm{i}:=1;\;\; \mathrm{j}:=1;$}\\\\
     $2.<\!\!\mathrm{in}\mapsto in, \mathrm{out}\mapsto out, \mathrm{i}\mapsto i, \mathrm{j}\mapsto j, \mathrm{x}\mapsto x, \mathrm{y}\mapsto y, \mathrm{A}\mapsto a, \mathrm{B}\mapsto b, \mathrm{C}\mapsto c,\rho \!\!>_{env}<\!\!a\mapsto[\overline{p}], b\mapsto[\overline{q}], $\\
     $c\mapsto[\overline{k}] , m\!\!>_{mem}<\!\!in,out,i,j,x,y,\overline{q},\overline{k}\!\!>_{bnd}<\!\!\varphi\wedge\phi\wedge j=out+1\!\!>_{form}$\\\\
     \colorbox{yellow}{$\textbf{consumer:}$}\\\\
     \colorbox{yellow}{$\qquad\qquad\quad\textbf{while }\mathrm{j}<M+1\;\; \textbf{do }$}\\\\
     $3.<\!\!\mathrm{in}\mapsto in, \mathrm{out}\mapsto out, \mathrm{i}\mapsto i, \mathrm{j}\mapsto j, \mathrm{x}\mapsto x, \mathrm{y}\mapsto y, \mathrm{A}\mapsto a, \mathrm{B}\mapsto b, \mathrm{C}\mapsto c,\rho \!\!>_{env}<\!\!a\mapsto[\overline{p}], b\mapsto[\overline{q}], $\\
     $c\mapsto[\overline{k}] , m\!\!>_{mem}<\!\!in,out,i,j,x,y,\overline{q},\overline{k}\!\!>_{bnd}<\!\!\varphi\wedge\phi\wedge j=out+1\wedge j<M+1\!\!>_{form}$\\\\
     \colorbox{yellow}{$\qquad\qquad\qquad\qquad \textbf{await } \mathrm{in}-\mathrm{out}>0 \textbf{ then } skip;$}\\\\
      $4.<\!\!\mathrm{in}\mapsto in, \mathrm{out}\mapsto out, \mathrm{i}\mapsto i, \mathrm{j}\mapsto j, \mathrm{x}\mapsto x, \mathrm{y}\mapsto y, \mathrm{A}\mapsto a, \mathrm{B}\mapsto b, \mathrm{C}\mapsto c,\rho \!\!>_{env}<\!\!a\mapsto[\overline{p}], b\mapsto[\overline{q}], $\\
     $c\mapsto[\overline{k}] , m\!\!>_{mem}<\!\!in,out,i,j,x,y,\overline{q},\overline{k}\!\!>_{bnd}<\!\!\varphi\wedge\phi\wedge j=out+1\wedge j<M+1\wedge in-out>0\!\!>_{form}$\\\\
      \colorbox{yellow}{$\qquad\qquad\qquad\qquad \mathrm{y}:=C[\mathrm{out}\; mod\; N];$}\\\\
      $5.<\!\!\mathrm{in}\mapsto in, \mathrm{out}\mapsto out, \mathrm{i}\mapsto i, \mathrm{j}\mapsto j, \mathrm{x}\mapsto x, \mathrm{y}\mapsto k_{out\;mod\;N}, \mathrm{A}\mapsto a, \mathrm{B}\mapsto b, \mathrm{C}\mapsto c,\rho \!\!>_{env}<\!\!a\mapsto[\overline{p}], $\\
     $b\mapsto[\overline{q}],  c\mapsto[\overline{k}] ,m\!\!>_{mem}<\!\!in,out,i,j,x,y,\overline{q},\overline{k}\!\!>_{bnd}<\!\!\varphi\wedge\phi\wedge j=out+1\wedge j<M+1\wedge in-out>0\!\!$\\$>_{form}$\\\\
     $6.<\!\!\mathrm{in}\mapsto in, \mathrm{out}\mapsto out, \mathrm{i}\mapsto i, \mathrm{j}\mapsto j, \mathrm{x}\mapsto x, \mathrm{y}\mapsto y, \mathrm{A}\mapsto a, \mathrm{B}\mapsto b, \mathrm{C}\mapsto c,\rho \!\!>_{env}<\!\!a\mapsto[\overline{p}], b\mapsto[\overline{q}], $\\
     $c\mapsto[\overline{k}] , m\!\!>_{mem}<\!\!in,out,i,j,x,y,\overline{q},\overline{k}\!\!>_{bnd}<\!\!\varphi\wedge\phi\wedge j=out+1\wedge j<M+1\wedge in-out>0\wedge$\\$ y=p_j\!\!>_{form}$\\\\
     \colorbox{yellow}{$\qquad\qquad\qquad\qquad \mathrm{out}:=\mathrm{out}+1;$}\\\\
     $7.<\!\!\mathrm{in}\mapsto in, \mathrm{out}\mapsto out, \mathrm{i}\mapsto i, \mathrm{j}\mapsto j, \mathrm{x}\mapsto x, \mathrm{y}\mapsto y, \mathrm{A}\mapsto a, \mathrm{B}\mapsto b, \mathrm{C}\mapsto c,\rho \!\!>_{env}<\!\!a\mapsto[\overline{p}], b\mapsto[\overline{q}], $\\
     $c\mapsto[\overline{k}] , m\!\!>_{mem}<\!\!in,out,i,j,x,y,\overline{q},\overline{k}\!\!>_{bnd}<\!\!\varphi\wedge\phi\wedge j=out\wedge j<M+1\wedge y=p_j\!\!>_{form}$\\\\
     \colorbox{yellow}{$\qquad\qquad\qquad\qquad B[\mathrm{j}]:=\mathrm{y};$}\\\\
     $8.<\!\!\mathrm{in}\mapsto in, \mathrm{out}\mapsto out, \mathrm{i}\mapsto i, \mathrm{j}\mapsto j, \mathrm{x}\mapsto x, \mathrm{y}\mapsto y, \mathrm{A}\mapsto a, \mathrm{B}\mapsto b, \mathrm{C}\mapsto c,\rho \!\!>_{env}<\!\!a\mapsto[\overline{p}], b\mapsto[\overline{q}], $\\
     $c\mapsto[\overline{k}] , m\!\!>_{mem}<\!\!in,out,i,j,x,y,\overline{q},\overline{k}\!\!>_{bnd}<\!\!\varphi\wedge\phi\wedge j=out\wedge j<M+1\wedge q_j=p_j\!\!>_{form}$\\\\
     \colorbox{yellow}{$\qquad\qquad\qquad\qquad \mathrm{j}:=\mathrm{j}+1;$}\\\\
     $9.<\!\!\mathrm{in}\mapsto in, \mathrm{out}\mapsto out, \mathrm{i}\mapsto i, \mathrm{j}\mapsto j, \mathrm{x}\mapsto x, \mathrm{y}\mapsto y, \mathrm{A}\mapsto a, \mathrm{B}\mapsto b, \mathrm{C}\mapsto c,\rho \!\!>_{env}<\!\!a\mapsto[\overline{p}], b\mapsto[\overline{q}],$\\
     $ c\mapsto[\overline{k}] , m\!\!>_{mem}<\!\!in,out,i,j,x,y,\overline{q},\overline{k}\!\!>_{bnd}<\!\!\varphi\wedge\phi\wedge j=out+1\wedge j<M+1\!\!>_{form}$\\\\
      $10.<\!\!\mathrm{in}\mapsto in, \mathrm{out}\mapsto out, \mathrm{i}\mapsto i, \mathrm{j}\mapsto j, \mathrm{x}\mapsto x, \mathrm{y}\mapsto y, \mathrm{A}\mapsto a, \mathrm{B}\mapsto b, \mathrm{C}\mapsto c,\rho \!\!>_{env}<\!\!a\mapsto[\overline{p}], b\mapsto[\overline{q}], $\\
     $c\mapsto[\overline{k}] , m\!\!>_{mem}<\!\!in,out,i,j,x,y,\overline{q},\overline{k}\!\!>_{bnd}<\!\!\varphi\wedge\phi\wedge j=out+1\wedge j=M+1\!\!>_{form}$\\\\
     $\textrm{where } \varphi=(k_{(n-1)\;mod\;N}=p_n,\; out<n< in+1)\wedge (0\leq in-out\leq N)\wedge(1\leq i \leq M+1)\wedge$\\$(1\leq j \leq M+1)$ and
     $\phi=q_n=p_n,\; 0<n<j$ and
     $\overline{p}=p_0,p_1,p_2,\cdots,p_M$ and $a \mapsto[\overline{p}]=a \mapsto p_0,$\\$ a+1 \mapsto p_1,\cdots, a+M \mapsto p_M$ and
     $p_n\in Int, 0<n<M+1$\\\\
\hline
\end{tabular}
  \caption{Proof outlines of Consumer}\label{fig2}
\end{figure}
First of all, let's show that producer and consumer are "interference-free".
The only operation in consumer process that might invalidate the producer process's assertions is $\mathrm{out}:=\mathrm{out}+1$. The assertions of the producer which may be possibly invalidated are:
\begin{align*}
&6.<\!\!\mathrm{in}\mapsto in, \mathrm{out}\mapsto out, \mathrm{i}\mapsto i, \mathrm{j}\mapsto j, \mathrm{x}\mapsto x, \mathrm{y}\mapsto y, \mathrm{A}\mapsto a, \mathrm{B}\mapsto b, \mathrm{C}\mapsto c,\rho \!\!>_{env}<\!\!a\mapsto[\overline{p}], \\
&b\mapsto[\overline{q}], c\mapsto[\overline{k}],m\!\!>_{mem}<\!\!in,out,i,j,x,y,\overline{q},\overline{k}\!\!>_{bnd}<\!\!\varphi\wedge i=in+1\wedge i<M+1\wedge x=p_i\\
&\wedge in-out<N\!\!>_{form}\\
&7.<\!\!\mathrm{in}\mapsto in, \mathrm{out}\mapsto out, \mathrm{i}\mapsto i, \mathrm{j}\mapsto j, \mathrm{x}\mapsto x, \mathrm{y}\mapsto y, \mathrm{A}\mapsto a, \mathrm{B}\mapsto b, \mathrm{C}\mapsto c,\rho \!\!>_{env}<\!\!a\mapsto[\overline{p}], \\
&b\mapsto[\overline{q}], c\mapsto[\overline{k}] , m\!\!>_{mem}<\!\!in,out,i,j,x,y,\overline{q},\overline{k}\!\!>_{bnd}<\!\!\varphi\wedge i=in+1\wedge i<M+1\\
&\wedge in-out<N\wedge k_{in\;mod\;N}=p_i\!\!>_{form}
\end{align*}
Suppose $\gamma$ is a concrete configuration and
\begin{align*}
&\gamma\models<\!\!\mathrm{out}:=\mathrm{out}+1\!\!>_k<\!\!\mathrm{in}\mapsto in, \mathrm{out}\mapsto out, \mathrm{i}\mapsto i, \mathrm{j}\mapsto j, \mathrm{x}\mapsto x, \mathrm{y}\mapsto y, \mathrm{A}\mapsto a, \mathrm{B}\mapsto b, \\
&\mathrm{C}\mapsto c, \rho \!\!>_{env}<\!\!a\mapsto[\overline{p}], b\mapsto[\overline{q}], c\mapsto[\overline{k}] ,m\!\!>_{mem}<\!\!in,out,i,j,x,y,\overline{q},\overline{k}\!\!>_{bnd}<\!\!\varphi\wedge i=in+1\\
&\wedge i<M+1\wedge x=p_i\wedge in-out<N\!\!>_{form}
\end{align*}
and
\begin{align*}
&\gamma\models<\!\!\mathrm{out}:=\mathrm{out}+1\!\!>_k<\!\!\mathrm{in}\mapsto in, \mathrm{out}\mapsto out, \mathrm{i}\mapsto i, \mathrm{j}\mapsto j, \mathrm{x}\mapsto x, \mathrm{y}\mapsto y, \mathrm{A}\mapsto a, \mathrm{B}\mapsto b, \\
&\mathrm{C}\mapsto c,\rho \!\!>_{env}<\!\!a\mapsto[\overline{p}], b\mapsto[\overline{q}], c\mapsto[\overline{k}] ,m\!\!>_{mem}<\!\!in,out,i,j,x,y,\overline{q},\overline{k}\!\!>_{bnd}<\!\!\varphi\wedge i=in+1\\
&\wedge i<M+1\wedge in-out<N\wedge k_{in\;mod\;N}=p_i\!\!>_{form}
\end{align*}
and
\begin{align*}
&\gamma\models<\!\!\mathrm{out}:=\mathrm{out}+1\!\!>_k<\!\!\mathrm{in}\mapsto in, \mathrm{out}\mapsto out, \mathrm{i}\mapsto i, \mathrm{j}\mapsto j, \mathrm{x}\mapsto x, \mathrm{y}\mapsto y, \mathrm{A}\mapsto a, \mathrm{B}\mapsto b, \\
&\mathrm{C}\mapsto c,\rho \!\!>_{env}<\!\!a\mapsto[\overline{p}], b\mapsto[\overline{q}], c\mapsto[\overline{k}] ,m\!\!>_{mem}<\!\!in,out,i,j,x,y,\overline{q},\overline{k}\!\!>_{bnd}<\!\!\varphi\wedge\phi\\
&\wedge j=out+1\wedge j<M+1\wedge in-out>0\wedge y=p_j\!\!>_{form}
\end{align*}
Then there exit $in_{int},out_{int},i_{int},j_{int},x_{int},y_{int},\overline{q}_{int},\overline{k}_{int}$ such that
\begin{align*}
&\gamma=<\!\!\mathrm{out}:=\mathrm{out}+1\!\!>_k<\!\!\mathrm{in}\mapsto in_{int}, \mathrm{out}\mapsto out_{int}, \mathrm{i}\mapsto i_{int}, \mathrm{j}\mapsto j_{int}, \mathrm{x}\mapsto x_{int}, \mathrm{y}\mapsto y_{int}, \\
&\mathrm{A}\mapsto a, \mathrm{B}\mapsto b, \mathrm{C}\mapsto c,\rho \!\!>_{env}<\!\!a\mapsto[\overline{p}], b\mapsto[\overline{q}_{int}], c\mapsto[\overline{k}_{int}] ,m\!\!>_{mem}
\end{align*}
and $i_{int}=in_{int}+1$ and $i_{int}<M+1$ and $x_{int}=p_{i_{int}}$ and $in_{int}-out_{int}<N$ and $k_{in_{int}\;mod\;N}=p_{i_{int}}$ and $j_{int}=out_{int}+1$ and $j_{int}<M+1$ and $in_{int}-out_{int}>0$ and $y_{int}=p_{j_{int}}$ and $k_{(n-1)\;mod\;N}=p_{n},out_{int}<n< in_{int}+1$ and $0\leq in_{int}-out_{int}\leq N$ and $1\leq i_{int} \leq M+1$ and $1\leq j_{int} \leq M+1$ and $q_{n}=p_{n}, 0<n<j_{int}$.\\
By ASGN1 rule, we get
\begin{align*}
&\gamma'=<\!\!\cdot\!\!>_k<\!\!\mathrm{in}\mapsto in_{int}, \mathrm{out}\mapsto out_{int}+1, \mathrm{i}\mapsto i_{int}, \mathrm{j}\mapsto j_{int}, \mathrm{x}\mapsto x_{int}, \mathrm{y}\mapsto y_{int}, \mathrm{A}\mapsto a, \\
&\mathrm{B}\mapsto b, \mathrm{C}\mapsto c,\rho \!\!>_{env}<\!\!a\mapsto[\overline{p}], b\mapsto[\overline{q}_{int}], c\mapsto[\overline{k}_{int}] ,m\!\!>_{mem}
\end{align*}
Set $in=in_{int}, out=out_{int}+1, i=i_{int}, j=j_{int}, x=x_{int}, y=y_{int},\overline{q}=\overline{q}_{int},\overline{k}=\overline{k}_{int}$, then
\begin{align*}
&\gamma'\models<\!\!\cdot\!\!>_k<\!\!\mathrm{in}\mapsto in, \mathrm{out}\mapsto out, \mathrm{i}\mapsto i, \mathrm{j}\mapsto j, \mathrm{x}\mapsto x, \mathrm{y}\mapsto y, \mathrm{A}\mapsto a, \mathrm{B}\mapsto b, \mathrm{C}\mapsto c,\rho \!\!>_{env}\\
&<\!\!a\mapsto[\overline{p}], b\mapsto[\overline{q}], c\mapsto[\overline{k}] ,m\!\!>_{mem}<\!\!in,out,i,j,x,y,\overline{q},\overline{k}\!\!>_{bnd}<\!\!\varphi\wedge i=in+1\wedge i<M+1\\
&\wedge x=p_i\wedge in-out<N\!\!>_{form}
\end{align*}
and
\begin{align*}
&\gamma'\models<\!\!\cdot\!\!>_k<\!\!\mathrm{in}\mapsto in, \mathrm{out}\mapsto out, \mathrm{i}\mapsto i, \mathrm{j}\mapsto j, \mathrm{x}\mapsto x, \mathrm{y}\mapsto y, \mathrm{A}\mapsto a, \mathrm{B}\mapsto b, \mathrm{C}\mapsto c,\rho \!\!>_{env}\\
&<\!\!a\mapsto[\overline{p}], b\mapsto[\overline{q}], c\mapsto[\overline{k}] ,m\!\!>_{mem}<\!\!in,out,i,j,x,y,\overline{q},\overline{k}\!\!>_{bnd}<\!\!\varphi\wedge i=in+1\wedge i<M+1\\
&\wedge in-out<N\wedge k_{in\;mod\;N}=p_i\!\!>_{form}
\end{align*}
Thus, the consumer does not interfere with the producer; Similarly, the
producer does not interfere with the consumer. The proof outlines of $producer \parallel consumer$ is:
\begin{align*}
&<\!\!\mathrm{in}\mapsto in, \mathrm{out}\mapsto out, \mathrm{i}\mapsto i, \mathrm{j}\mapsto j, \mathrm{x}\mapsto x, \mathrm{y}\mapsto y, \mathrm{A}\mapsto a, \mathrm{B}\mapsto b, \mathrm{C}\mapsto c,\rho \!\!>_{env}<\!\!a\mapsto[\overline{p}], \\
&b\mapsto[\overline{q}], c\mapsto[\overline{k}] , m\!\!>_{mem}<\!\!in,out,i,j,x,y,\overline{q},\overline{k}\!\!>_{bnd}<\!\!\varphi\wedge j=out+1\wedge i=in+1\!\!>_{form}\\
&\colorbox{yellow}{$\qquad\qquad\qquad\qquad\qquad\qquad\qquad producer \parallel consumer;$}\\
& <\!\!\mathrm{in}\mapsto in, \mathrm{out}\mapsto out, \mathrm{i}\mapsto i, \mathrm{j}\mapsto j, \mathrm{x}\mapsto x, \mathrm{y}\mapsto y, \mathrm{A}\mapsto a, \mathrm{B}\mapsto b, \mathrm{C}\mapsto c,\rho \!\!>_{env}<\!\!a\mapsto[\overline{p}], \\
&b\mapsto[\overline{q}], c\mapsto[\overline{k}], m\!\!>_{mem}<\!\!in,out,i,j,x,y,\overline{q},\overline{k}\!\!>_{bnd}<\!\!\varphi\wedge\phi\wedge i=M+1\wedge j=M+1\wedge \\ &j=out+1\wedge i=in+1\!\!>_{form}
\end{align*}
Secondary, let us illustrate our verifier on Producer-Consumer. Figure 8 shows the validation process starting with $\Gamma_{start}\equiv <<\!\!\mathrm{in}:=0; \mathrm{out}:=0; \mathrm{i}:=1;\mathrm{j}:=1; assert\;ass_1;(producer \parallel consumer); assert\;ass_{post}\!\!>_k ass_{pre}>$. $<\Gamma_{start}>\rightarrow^*<\cdot>$ means Producer-Consumer is validated.
\begin{figure}
  \small
  \begin{tabular}{l}
    \hline
    \\
    $<\Gamma_{start}>\rightarrow^*<\Gamma_1>\rightarrow<\Gamma_2>\rightarrow<\Gamma_3, \Gamma_4, \Gamma_5>\rightarrow^*<\Gamma_6, \Gamma_7, \Gamma_8, \Gamma_9, \Gamma_5>\rightarrow<\Gamma_{10}, \Gamma_7, \Gamma_{11}, \Gamma_9, \Gamma_5>\rightarrow$\\
  $<\Gamma_{12}, \Gamma_7, \Gamma_{13}, \Gamma_9, \Gamma_5>\rightarrow<\Gamma_{14}, \Gamma_7, \Gamma_{15}, \Gamma_9, \Gamma_5>
  \rightarrow<\Gamma_{16}, \Gamma_7, \Gamma_{17}, \Gamma_9, \Gamma_5>\rightarrow<\Gamma_{18}, \Gamma_7, \Gamma_{19}, \Gamma_9, \Gamma_5>\rightarrow$\\
  $<\Gamma_{20}, \Gamma_7, \Gamma_{21}, \Gamma_9, \Gamma_5>\rightarrow<\Gamma_{22}, \Gamma_7, \Gamma_{23}, \Gamma_9, \Gamma_5>\rightarrow^*<\cdot>$\\\\
   $producer \equiv assert\;ass_{p1};\textbf{while} \;\;\mathrm{i}<M+1\;\; \textbf{do}\;\;(\mathrm{x}:=A[\mathrm{i}];\;\textbf{await}\;\; \mathrm{in}-\mathrm{out}<N \;\; \textbf{then} \;\; skip;$\\$C[\mathrm{in}\; mod\; N]:=x;\mathrm{in}:=\mathrm{in}+1;
  \mathrm{i}:=\mathrm{i}+1;)$\\\\
   $consumer\equiv assert\;ass_{c1};\;\;\textbf{while} \;\;\mathrm{j}<M+1\;\; \textbf{do} \;\;(\textbf{await} \;\; \mathrm{in}-\mathrm{out}>0 \;\; \textbf{then} \;\; skip;\mathrm{y}:=C[\mathrm{out}\; mod\; N];$\\
  $\mathrm{out}:=\mathrm{out}+1;B[\mathrm{j}]:=\mathrm{y};\mathrm{j}:=\mathrm{j}+1;)$\\\\
  $\Gamma_{start}\equiv <<\!\!\mathrm{in}:=0; \mathrm{out}:=0; \mathrm{i}:=1;\mathrm{j}:=1; assert\;ass_1;(producer \parallel consumer); assert\;ass_{post}\!\!>_k$\\ $ass_{pre}>$\\\\
   $\Gamma_1\equiv <<\!\!assert\;ass_1; (producer \parallel consumer); assert\;ass_{post}\!\!>_k ass_0>$\\\\
   $\Gamma_2\equiv <<\!\!producer \parallel consumer; assert\;ass_{post}\!\!>_k ass_1>$\\\\
  $\Gamma_3\equiv <<\!\!producer; assert\;result(<\!\!producer\!\!>_k ass_{p1})\!\!>_k ass_{p1}>$\\\\
  $\Gamma_4\equiv <<\!\!consumer; assert\;result(<\!\!consumer\!\!>_k ass_{c1})\!\!>_k ass_{c1}>$\\\\
  $\Gamma_5\equiv <<\!\!assert\;ass_{post}\!\!>_k result(<\!\!producer\!\!>_k ass_{p1})\cap result(<\!\!consumer\!\!>_k ass_{c1})>$\\\\
  $\Gamma_6\equiv <<\!\!(\mathrm{x}:=A[\mathrm{i}];\;\;\textbf{await}\;\;\mathrm{in}-\mathrm{out}<N \;\; \textbf{then}\;\; skip;C[\mathrm{in}\; mod\; N]:=x;\mathrm{in}:=\mathrm{in}+1;
  \mathrm{i}:=\mathrm{i}+1); $\\ $assert\;ass_{p1}\!\!>_k ass_{p2}>$\\\\
  $\Gamma_7\equiv <<\!\!assert\;result(<\!\!producer\!\!>_k ass_{p1})\!\!>_k ass_{p3}>$\\\\
  $\Gamma_8\equiv <<\!\!(\textbf{await} \;\; \mathrm{in}-\mathrm{out}>0 \;\; \textbf{then} \;\; skip;\mathrm{y}:=C[\mathrm{out}\; mod\; N];\mathrm{out}:=\mathrm{out}+1;B[\mathrm{j}]:=\mathrm{y};\mathrm{j}:=\mathrm{j}+1); $\\ $assert\;ass_{c1}\!\!>_k ass_{c2}>$\\\\
  $\Gamma_9\equiv <<\!\!assert\;result(<\!\!consumer\!\!>_k ass_{c1})\!\!>_k ass_{c3}>$\\\\
  $\Gamma_{10}\equiv <<\!\!(\textbf{await} \;\; \mathrm{in}-\mathrm{out}<N \;\; \textbf{then}\;\; skip;C[\mathrm{in}\; mod\; N]:=x;\mathrm{in}:=\mathrm{in}+1;
  \mathrm{i}:=\mathrm{i}+1); assert\;ass_{p1}\!\!>_k$\\ $ass_{p4}>$\\\\
  $\Gamma_{11}\equiv <<\!\!(skip;\mathrm{y}:=C[\mathrm{out}\; mod\; N];\mathrm{out}:=\mathrm{out}+1;B[\mathrm{j}]:=\mathrm{y};\mathrm{j}:=\mathrm{j}+1); assert\;ass_{c1}\!\!>_k ass_{c4}>$\\\\
  $\Gamma_{12}\equiv <<\!\!(skip;C[\mathrm{in}\; mod\; N]:=x;\mathrm{in}:=\mathrm{in}+1;
  \mathrm{i}:=\mathrm{i}+1); assert\;ass_{p1}\!\!>_k ass_{p5}>$\\\\
  $\Gamma_{13}\equiv <<\!\!(\mathrm{y}:=C[\mathrm{out}\; mod\; N];\mathrm{out}:=\mathrm{out}+1;B[\mathrm{j}]:=\mathrm{y};\mathrm{j}:=\mathrm{j}+1); assert\;ass_{c1}\!\!>_k ass_{c4}>$\\\\
  $\Gamma_{14}\equiv <<\!\!(C[\mathrm{in}\; mod\; N]:=x;\mathrm{in}:=\mathrm{in}+1;
  \mathrm{i}:=\mathrm{i}+1); assert\;ass_{p1}\!\!>_k ass_{p5}>$\\\\
  $\Gamma_{15}\equiv <<\!\!(\mathrm{out}:=\mathrm{out}+1;B[\mathrm{j}]:=\mathrm{y};\mathrm{j}:=\mathrm{j}+1); assert\;ass_{c1}\!\!>_k ass_{c5}>$\\\\
  $\Gamma_{16}\equiv <<\!\!(\mathrm{in}:=\mathrm{in}+1;
  \mathrm{i}:=\mathrm{i}+1); assert\;ass_{p1}\!\!>_k ass_{p6}>$\\\\
  $\Gamma_{17}\equiv <<\!\!(B[\mathrm{j}]:=\mathrm{y};\mathrm{j}:=\mathrm{j}+1); assert\;ass_{c1}\!\!>_k ass_{c6}>$\\\\
  $\Gamma_{18}\equiv <<\!\!
  \mathrm{i}:=\mathrm{i}+1; assert\;ass_{p1}\!\!>_k ass_{p7}>$\\\\
  $\Gamma_{19}\equiv <<\!\!\mathrm{j}:=\mathrm{j}+1; assert\;ass_{c1}\!\!>_k ass_{c7}>$\\\\
  $\Gamma_{20}\equiv <<\!\!
  assert\;ass_{p1}\!\!>_k ass_{p8}>$\\\\
  $\Gamma_{21}\equiv <<\!\!assert\;ass_{c1}\!\!>_k ass_{c8}>$\\\\
  $\Gamma_{22}\equiv <<\!\!\cdot\!\!>_k ass_{p1}>$\\\\
    \hline
\end{tabular}
\end{figure}
\begin{figure}
  \small
  \begin{tabular}{l}
    \hline
    \\
  $\Gamma_{23}\equiv <<\!\!\cdot\!\!>_k ass_{c1}>$\\\\
  $ass_{pre} \equiv<\!\!\mathrm{in}\mapsto in, \mathrm{out}\mapsto out, \mathrm{i}\mapsto i, \mathrm{j}\mapsto j, \mathrm{x}\mapsto x, \mathrm{y}\mapsto y, \mathrm{A}\mapsto a, \mathrm{B}\mapsto b, \mathrm{C}\mapsto c, \rho\!\!>_{env}<\!\!a\mapsto[\overline{p}], b\mapsto[\overline{q}], $\\
 $c\mapsto[\overline{k}], m\!\!>_{mem}<\!\!in,out,i,j,x,y,\overline{q},\overline{k}\!\!>_{bnd}<\!\!\varphi\wedge\phi\!\!>_{form}$\\
 \\
 $ass_{post} \equiv<\!\!\mathrm{in}\mapsto in, \mathrm{out}\mapsto out, \mathrm{i}\mapsto i, \mathrm{j}\mapsto j, \mathrm{x}\mapsto x, \mathrm{y}\mapsto y, \mathrm{A}\mapsto a, \mathrm{B}\mapsto b, \mathrm{C}\mapsto c,\rho \!\!>_{env}<\!\!a\mapsto[\overline{p}], b\mapsto[\overline{q}], $\\
 $c\mapsto[\overline{k}], m\!\!>_{mem}<\!\!in,out,i,j,x,y,\overline{q},\overline{k}\!\!>_{bnd}<\!\!\varphi\wedge\phi \wedge j\!=\!out\!+\!1\wedge i\!=\!in\!+\!1\!\!>_{f\!o\!r\!m}$\\
 \\
 $ass_0 \equiv<\!\!\mathrm{in}\mapsto 0, \mathrm{out}\mapsto 0, \mathrm{i}\mapsto 1, \mathrm{j}\mapsto 1, \mathrm{x}\mapsto x, \mathrm{y}\mapsto y, \mathrm{A}\mapsto a, \mathrm{B}\mapsto b, \mathrm{C}\mapsto c,\rho \!\!>_{env}<\!\!a\mapsto[\overline{p}], b\mapsto[\overline{q}], $\\
     $c\mapsto[\overline{k}] , m\!\!>_{mem}<\!\!in,out,i,j,x,y,\overline{q},\overline{k}\!\!>_{bnd}<\!\!\varphi\wedge\phi\wedge i=in+1\wedge i=1\wedge j=out+1\wedge j=1\!\!>_{form}$\\\\
     $ass_1\equiv<\!\!\mathrm{in}\mapsto in, \mathrm{out}\mapsto out, \mathrm{i}\mapsto i, \mathrm{j}\mapsto j, \mathrm{x}\mapsto x, \mathrm{y}\mapsto y, \mathrm{A}\mapsto a, \mathrm{B}\mapsto b, \mathrm{C}\mapsto c,\rho \!\!>_{env}<\!\!a\mapsto[\overline{p}], b\mapsto[\overline{q}], $\\
     $c\mapsto[\overline{k}] , m\!\!>_{mem}<\!\!in,out,i,j,x,y,\overline{q},\overline{k}\!\!>_{bnd}<\!\!\varphi\wedge\phi\wedge i=in+1\wedge j=out+1\!\!>_{form}$\\\\
     $ass_{p1}\equiv<\!\!\mathrm{in}\mapsto in, \mathrm{out}\mapsto out, \mathrm{i}\mapsto i, \mathrm{j}\mapsto j, \mathrm{x}\mapsto x, \mathrm{y}\mapsto y, \mathrm{A}\mapsto a, \mathrm{B}\mapsto b, \mathrm{C}\mapsto c,\rho \!\!>_{env}<\!\!a\mapsto[\overline{p}], b\mapsto[\overline{q}], $\\
     $c\mapsto[\overline{k}] , m\!\!>_{mem}<\!\!in,out,i,j,x,y,\overline{q},\overline{k}\!\!>_{bnd}<\!\!\varphi\wedge i=in+1\!\!>_{form}$\\\\
     $ass_{c1}\equiv<\!\!\mathrm{in}\mapsto in, \mathrm{out}\mapsto out, \mathrm{i}\mapsto i, \mathrm{j}\mapsto j, \mathrm{x}\mapsto x, \mathrm{y}\mapsto y, \mathrm{A}\mapsto a, \mathrm{B}\mapsto b, \mathrm{C}\mapsto c,\rho \!\!>_{env}<\!\!a\mapsto[\overline{p}], b\mapsto[\overline{q}], $\\
     $c\mapsto[\overline{k}] , m\!\!>_{mem}<\!\!in,out,i,j,x,y,\overline{q},\overline{k}\!\!>_{bnd}<\!\!\varphi\wedge\phi\wedge j=out+1\!\!>_{form}$\\\\
     $ass_{p2}\equiv<\!\!\mathrm{in}\mapsto in, \mathrm{out}\mapsto out, \mathrm{i}\mapsto i, \mathrm{j}\mapsto j, \mathrm{x}\mapsto x, \mathrm{y}\mapsto y, \mathrm{A}\mapsto a, \mathrm{B}\mapsto b, \mathrm{C}\mapsto c,\rho \!\!>_{env}<\!\!a\mapsto[\overline{p}], b\mapsto[\overline{q}], $\\
     $c\mapsto[\overline{k}] , m\!\!>_{mem}<\!\!in,out,i,j,x,y,\overline{q},\overline{k}\!\!>_{bnd}<\!\!\varphi\wedge i=in+1\wedge i<M+1\!\!>_{form}$\\\\
      $ass_{c2}\equiv<\!\!\mathrm{in}\mapsto in, \mathrm{out}\mapsto out, \mathrm{i}\mapsto i, \mathrm{j}\mapsto j, \mathrm{x}\mapsto x, \mathrm{y}\mapsto y, \mathrm{A}\mapsto a, \mathrm{B}\mapsto b, \mathrm{C}\mapsto c,\rho \!\!>_{env}<\!\!a\mapsto[\overline{p}], b\mapsto[\overline{q}], $\\
     $c\mapsto[\overline{k}] , m\!\!>_{mem}<\!\!in,out,i,j,x,y,\overline{q},\overline{k}\!\!>_{bnd}<\!\!\varphi\wedge\phi\wedge j=out+1\wedge j<M+1\!\!>_{form}$\\\\
      $ass_{p3}\equiv<\!\!\mathrm{in}\mapsto in, \mathrm{out}\mapsto out, \mathrm{i}\mapsto i, \mathrm{j}\mapsto j, \mathrm{x}\mapsto x, \mathrm{y}\mapsto y, \mathrm{A}\mapsto a, \mathrm{B}\mapsto b, \mathrm{C}\mapsto c,\rho \!\!>_{env}<\!\!a\mapsto[\overline{p}], b\mapsto[\overline{q}], $\\
     $c\mapsto[\overline{k}] , m\!\!>_{mem}<\!\!in,out,i,j,x,y,\overline{q},\overline{k}\!\!>_{bnd}<\!\!\varphi\wedge i=in+1\wedge i=M+1\!\!>_{form}$\\\\
      $ass_{c3}\equiv<\!\!\mathrm{in}\mapsto in, \mathrm{out}\mapsto out, \mathrm{i}\mapsto i, \mathrm{j}\mapsto j, \mathrm{x}\mapsto x, \mathrm{y}\mapsto y, \mathrm{A}\mapsto a, \mathrm{B}\mapsto b, \mathrm{C}\mapsto c,\rho \!\!>_{env}<\!\!a\mapsto[\overline{p}], b\mapsto[\overline{q}], $\\
     $c\mapsto[\overline{k}] , m\!\!>_{mem}<\!\!in,out,i,j,x,y,\overline{q},\overline{k}\!\!>_{bnd}<\!\!\varphi\wedge\phi\wedge j=out+1\wedge j=M+1\!\!>_{form}$\\\\
     $ass_{p4}\equiv<\!\!\mathrm{in}\mapsto in, \mathrm{out}\mapsto out, \mathrm{i}\mapsto i, \mathrm{j}\mapsto j, \mathrm{x}\mapsto x, \mathrm{y}\mapsto y, \mathrm{A}\mapsto a, \mathrm{B}\mapsto b, \mathrm{C}\mapsto c,\rho \!\!>_{env}<\!\!a\mapsto[\overline{p}], b\mapsto[\overline{q}], $\\
     $c\mapsto[\overline{k}] , m\!\!>_{mem}<\!\!in,out,i,j,x,y,\overline{q},\overline{k}\!\!>_{bnd}<\!\!\varphi\wedge i=in+1\wedge i<M+1\wedge x=p_i\!\!>_{form}$\\\\
      $ass_{c4}\equiv<\!\!\mathrm{in}\mapsto in, \mathrm{out}\mapsto out, \mathrm{i}\mapsto i, \mathrm{j}\mapsto j, \mathrm{x}\mapsto x, \mathrm{y}\mapsto y, \mathrm{A}\mapsto a, \mathrm{B}\mapsto b, \mathrm{C}\mapsto c,\rho \!\!>_{env}<\!\!a\mapsto[\overline{p}], b\mapsto[\overline{q}], $\\
     $c\mapsto[\overline{k}] , m\!\!>_{mem}<\!\!in,out,i,j,x,y,\overline{q},\overline{k}\!\!>_{bnd}<\!\!\varphi\wedge\phi\wedge j=out+1\wedge j<M+1\wedge in-out>0\!\!>_{form}$\\\\
      $ass_{p5}\equiv<\!\!\mathrm{in}\mapsto in, \mathrm{out}\mapsto out, \mathrm{i}\mapsto i, \mathrm{j}\mapsto j, \mathrm{x}\mapsto x, \mathrm{y}\mapsto y, \mathrm{A}\mapsto a, \mathrm{B}\mapsto b, \mathrm{C}\mapsto c,\rho \!\!>_{env}<\!\!a\mapsto[\overline{p}], b\mapsto[\overline{q}], $\\
     $c\mapsto[\overline{k}] , m\!\!>_{mem}<\!\!in,out,i,j,x,y,\overline{q},\overline{k}\!\!>_{bnd}<\!\!\varphi\wedge i=in+1\wedge i<M+1\wedge x=p_i\wedge in-out<N\!\!>_{form}$\\\\
     $ass_{c5}\equiv<\!\!\mathrm{in}\mapsto in, \mathrm{out}\mapsto out, \mathrm{i}\mapsto i, \mathrm{j}\mapsto j, \mathrm{x}\mapsto x, \mathrm{y}\mapsto y, \mathrm{A}\mapsto a, \mathrm{B}\mapsto b, \mathrm{C}\mapsto c,\rho \!\!>_{env}<\!\!a\mapsto[\overline{p}], b\mapsto[\overline{q}], $\\
     $c\mapsto[\overline{k}] , m\!\!>_{mem}<\!\!in,out,i,j,x,y,\overline{q},\overline{k}\!\!>_{bnd}<\!\!\varphi\wedge\phi\wedge j\!=\!out\!+\!1\wedge j\!<\!M\!+\!1\wedge in\!-\!out\!>\!0\wedge y\!=\!p_j\!\!>_{form}$\\\\
     $ass_{p6}\equiv<\!\!\mathrm{in}\mapsto in, \mathrm{out}\mapsto out, \mathrm{i}\mapsto i, \mathrm{j}\mapsto j, \mathrm{x}\mapsto x, \mathrm{y}\mapsto y, \mathrm{A}\mapsto a, \mathrm{B}\mapsto b, \mathrm{C}\mapsto c,\rho \!\!>_{env}<\!\!a\mapsto[\overline{p}], b\mapsto[\overline{q}], $\\
     $c\mapsto[\overline{k}] , m\!\!>_{mem}<\!\!in,out,i,j,x,y,\overline{q},\overline{k}\!\!>_{bnd}<\!\!\varphi\wedge i\!=\!in\!+\!1\wedge i\!<\!\!M\!+\!1\wedge in\!-\!out\!<\!\!N\wedge k_{in\;mod\;N}\!\!=\!p_i\!\!>_{form}$\\\\
     $ass_{c6}\equiv<\!\!\mathrm{in}\mapsto in, \mathrm{out}\mapsto out, \mathrm{i}\mapsto i, \mathrm{j}\mapsto j, \mathrm{x}\mapsto x, \mathrm{y}\mapsto y, \mathrm{A}\mapsto a, \mathrm{B}\mapsto b, \mathrm{C}\mapsto c,\rho \!\!>_{env}<\!\!a\mapsto[\overline{p}], b\mapsto[\overline{q}], $\\
     $c\mapsto[\overline{k}] , m\!\!>_{mem}<\!\!in,out,i,j,x,y,\overline{q},\overline{k}\!\!>_{bnd}<\!\!\varphi\wedge\phi\wedge j=out\wedge j<M+1\wedge y=p_j\!\!>_{form}$\\\\
      $ass_{p7}\equiv<\!\!\mathrm{in}\mapsto in, \mathrm{out}\mapsto out, \mathrm{i}\mapsto i, \mathrm{j}\mapsto j, \mathrm{x}\mapsto x, \mathrm{y}\mapsto y, \mathrm{A}\mapsto a, \mathrm{B}\mapsto b, \mathrm{C}\mapsto c,\rho \!\!>_{env}<\!\!a\mapsto[\overline{p}], b\mapsto[\overline{q}], $\\
     $c\mapsto[\overline{k}] , m\!\!>_{mem}<\!\!in,out,i,j,x,y,\overline{q},\overline{k}\!\!>_{bnd}<\!\!\varphi\wedge i=in\wedge i<M+1\!\!>_{form}$\\\\
    \hline
\end{tabular}
\end{figure}
\begin{figure}
  \small
  \begin{tabular}{l}
    \hline
    \\
    $ass_{c7}\equiv<\!\!\mathrm{in}\mapsto in, \mathrm{out}\mapsto out, \mathrm{i}\mapsto i, \mathrm{j}\mapsto j, \mathrm{x}\mapsto x, \mathrm{y}\mapsto y, \mathrm{A}\mapsto a, \mathrm{B}\mapsto b, \mathrm{C}\mapsto c,\rho \!\!>_{env}<\!\!a\mapsto[\overline{p}], b\mapsto[\overline{q}], $\\
     $c\mapsto[\overline{k}] , m\!\!>_{mem}<\!\!in,out,i,j,x,y,\overline{q},\overline{k}\!\!>_{bnd}<\!\!\varphi\wedge\phi\wedge j=out\wedge j<M+1\wedge q_j=p_j\!\!>_{form}$\\\\
      $ass_{p8}\equiv<\!\!\mathrm{in}\mapsto in, \mathrm{out}\mapsto out, \mathrm{i}\mapsto i, \mathrm{j}\mapsto j, \mathrm{x}\mapsto x, \mathrm{y}\mapsto y, \mathrm{A}\mapsto a, \mathrm{B}\mapsto b, \mathrm{C}\mapsto c,\rho \!\!>_{env}<\!\!a\mapsto[\overline{p}], b\mapsto[\overline{q}], $\\
     $c\mapsto[\overline{k}] , m\!\!>_{mem}<\!\!in,out,i,j,x,y,\overline{q},\overline{k}\!\!>_{bnd}<\!\!\varphi\wedge i=in+1\wedge i<M+1\!\!>_{form}$\\\\
     $ass_{c8}\equiv<\!\!\mathrm{in}\mapsto in, \mathrm{out}\mapsto out, \mathrm{i}\mapsto i, \mathrm{j}\mapsto j, \mathrm{x}\mapsto x, \mathrm{y}\mapsto y, \mathrm{A}\mapsto a, \mathrm{B}\mapsto b, \mathrm{C}\mapsto c,\rho \!\!>_{env}<\!\!a\mapsto[\overline{p}], b\mapsto[\overline{q}], $\\
     $c\mapsto[\overline{k}] , m\!\!>_{mem}<\!\!in,out,i,j,x,y,\overline{q},\overline{k}\!\!>_{bnd}<\!\!\varphi\wedge\phi\wedge j=out+1\wedge j<M+1\!\!>_{form}$\\\\
      $\textrm{where } \varphi=(k_{(n-1)\;mod\;N}=p_n,\; out<n< in+1)\wedge (0\leq in-out\leq N)\wedge(1\leq i \leq M+1)\wedge$ \\
     $(1\leq j \leq M+1) \textrm{ and } \phi=q_n=p_n,\; 0<n<j \textrm{ and }\overline{p}=p_0,p_1,p_2,\cdots,p_M \textrm{ and } a \mapsto[\overline{p}]=a \mapsto p_0, $ \\
     $a+1 \mapsto p_1,\cdots, a+M \mapsto p_M \textrm{ and } p_n\in Int, 0<n<M+1$\\\\
    \hline
\end{tabular}
 \caption{Verification of Producer-Consumer}\label{fig2}
\end{figure}
\section{Conclusion and Future Work}
\label{sec:typesetting-summary}
In this paper, following Grigore Ro{\c{s}}u et al's work, we consider matching logic for PIMP. In our matching logic, we  redefine "interference-free" to character parallel rule and prove the soundness of matching logic to the operational semantics of PIMP. We also link PIMP's operational semantics and PIMP's verification formally by constructing a matching logic verifier for PIMP which executes
rewriting logic semantics symbolically on configuration patterns and is sound and complete to matching logic for PIMP. That is our matching logic verifier for PIMP is sound to the operational semantics of PIMP. The state-of-the-art in mechanical program verification
is to develop and prove its proof system soundness to a trusted operational semantics. So far, we have achieved this goal in theory. Finally, we also verify the matching logic verifier through an example which is a standard
problem in parallel programming. \\
Matching logic for PIMP requires $c_1, c_2$ is "interference-free". Our further work is to generalize our results in this
paper so that they dono't depend on ``interference-free". In theory, matching logic verifier meets the state-of-the-art in mechanical program verification. Unfortunately, there is no practical executable matching logic verifier in this paper. Although it needs huge effort to implement the executable matching logic verifier, this is another work we need to do further.\\



\bibliography{lipics-v2019-sample-article}

\begin{thebibliography}{10}

\bibitem{klin2011bialgebras}
Klin. Bartek.
\newblock Bialgebras for structural operational semantics: An introduction.
\newblock {\em Theoretical Computer Science}, 412(38):5043--5069, 2011.
\newblock \href {http://dx.doi.org/10.1016/j.tcs.2011.03.023}
  {\path{doi:10.1016/j.tcs.2011.03.023}}.

\bibitem{reynolds2009theories}
Reynolds~J. C.
\newblock {\em Theories of programming languages}.
\newblock Cambridge University Press, 2009.

\bibitem{DBLP:journals/cacm/klein12}
Klein. Casey, Clements. John, and Dimoulas.~Christos et~al.
\newblock Run your research: on the effectiveness of lightweight mechanization.
\newblock {\em ACM SIGPLAN Notices}, 47(1):285--296, 2012.
\newblock \href {http://dx.doi.org/10.1145/2103621.2103691}
  {\path{doi:10.1145/2103621.2103691}}.

\bibitem{DBLP:conf/focs/Hathhorn15}
Hathhorn. Chris, Ellison. Chucky, and Ro\c{s}u. Grigore.
\newblock Defining the undefinedness of c.
\newblock In {\em Proceedings of the 36th ACM SIGPLAN Conference on Programming
  Language Design and Implementation}, pages 336--345, 2015.
\newblock \href {http://dx.doi.org/10.1145/2737924.2737979}
  {\path{doi:10.1145/2737924.2737979}}.

\bibitem{DBLP:journals/cacm/ellison12}
Ellison. Chucky and Ro\c{s}u. Grigore.
\newblock An executable formal semantics of c with applications.
\newblock {\em ACM SIGPLAN Notices}, 47(1):533--544, 2012.
\newblock \href {http://dx.doi.org/10.1145/2103621.2103719}
  {\path{doi:10.1145/2103621.2103719}}.

\bibitem{calcagno2015moving}
Calcagno. Cristiano, Distefano. Dino, and J{\'e}r{\'e}my et~al Dubreil.
\newblock Moving fast with software verification.
\newblock In {\em NASA Formal Methods Symposium}, pages 3--11, 2015.
\newblock \href {http://dx.doi.org/10.1007/978-3-319-17524-9_1}
  {\path{doi:10.1007/978-3-319-17524-9_1}}.

\bibitem{plotkin1981structural}
Plotkin~G. D.
\newblock A structural approach to operational semantics.
\newblock 1981.

\bibitem{DBLP:conf/focs/Park15}
Park. Daejun, Stef\u{a}nescu. Andrei, and Ro\c{s}u. Grigore.
\newblock Kjs: A complete formal semantics of javascript.
\newblock In {\em Proceedings of the 36th ACM SIGPLAN Conference on Programming
  Language Design and Implementation}, pages 346--356, 2015.
\newblock \href {http://dx.doi.org/10.1145/2737924.2737991}
  {\path{doi:10.1145/2737924.2737991}}.

\bibitem{harel1984}
Harel. David.
\newblock Dynamic logic.
\newblock In {\em Handbook of philosophical logic}, pages 497--604. 1984.
\newblock \href {http://dx.doi.org/10.1007/978-94-009-6259-0_10}
  {\path{doi:10.1007/978-94-009-6259-0_10}}.

\bibitem{DBLP:conf/focs/Bogdanas15}
Bogdanas. Denis and Ro\c{s}u. Grigore.
\newblock K-java: a complete semantics of java.
\newblock In {\em Proceedings of the 42nd Annual ACM SIGPLAN-SIGACT Symposium
  on Principles of Programming Languages}, pages 445--456, 2015.
\newblock \href {http://dx.doi.org/10.1145/2676726.2676982}
  {\path{doi:10.1145/2676726.2676982}}.

\bibitem{cserbuanuctua2009rewriting}
{\c{S}}erb{\u{a}}nu{\c{t}}{\u{a}}~T. Florin, Ro{\c{s}}u. Grigore, and Meseguer.
  Jos{\'e}.
\newblock A rewriting logic approach to operational semantics.
\newblock {\em Information and Computation}, 207(2):305--340, 2009.
\newblock \href {http://dx.doi.org/10.1016/j.ic.2008.03.026}
  {\path{doi:10.1016/j.ic.2008.03.026}}.

\bibitem{DBLP:journals/cacm/politz13}
Politz~J. Gibbs, Martinez. Alejandro, and Milano.~Matthew et~al.
\newblock Python: the full monty.
\newblock {\em ACM SIGPLAN Notices}, 48(10):217--232, 2013.
\newblock \href {http://dx.doi.org/10.1145/2544173.2509536}
  {\path{doi:10.1145/2544173.2509536}}.

\bibitem{rosu2017matching}
Ro\c{s}u. Grigore.
\newblock Matching logic.
\newblock {\em arXiv preprint arXiv:1705.06312}, 2017.
\newblock \href {http://dx.doi.org/1705.06312} {\path{doi:1705.06312}}.

\bibitem{rosu2011matching}
Ro\c{s}u. Grigore and Stefanescu. Andrei.
\newblock Matching logic: a new program verification approach (nier track).
\newblock In {\em 2011 33rd International Conference on Software Engineering
  (ICSE)}, pages 868--871, 2011.
\newblock \href {http://dx.doi.org/10.1145/1985793.1985928}
  {\path{doi:10.1145/1985793.1985928}}.

\bibitem{rosu2012checking}
Ro\c{s}u. Grigore and Stefanescu. Andrei.
\newblock Checking reachability using matching logic.
\newblock In {\em Proceedings of the ACM international conference on Object
  oriented programming systems languages and applications}, pages 555--574,
  2012.
\newblock \href {http://dx.doi.org/10.1145/2384616.2384656}
  {\path{doi:10.1145/2384616.2384656}}.

\bibitem{rocsu2010matching}
Ro\c{s}u. Grigore, Ellison. Chucky, and Schulte. Wolfram.
\newblock Matching logic: An alternative to hoare/floyd logic.
\newblock In {\em International Conference on Algebraic Methodology and
  Software Technology}, pages 142--162, 2010.
\newblock \href {http://dx.doi.org/10.1007/978-3-642-17796-5_9}
  {\path{doi:10.1007/978-3-642-17796-5_9}}.

\bibitem{DBLP:journals/cacm/rosu10}
Ro\c{s}u. Grigore and \c{S}erb\v{a}nu\c{t}\v{a} T.~Florin.
\newblock An overview of the k semantic framework.
\newblock {\em The Journal of Logic and Algebraic Programming}, 79(6):397--434,
  2010.
\newblock \href {http://dx.doi.org/10.1016/j.jlap.2010.03.012}
  {\path{doi:10.1016/j.jlap.2010.03.012}}.

\bibitem{DBLP:journals/cacm/rosu14}
Ro\c{s}u. Grigore and \c{S}erb\v{a}nu\c{t}\v{a} T.~Florin.
\newblock K overview and simple case study.
\newblock {\em Electronic Notes in Theoretical Computer Science}, 304:3--56,
  2014.
\newblock \href {http://dx.doi.org/10.1016/j.entcs.2014.05.002}
  {\path{doi:10.1016/j.entcs.2014.05.002}}.

\bibitem{rosu2009rewriting}
Ro\c{s}u. Grigore, Ellison~C. M, and Schulte. Wolfram.
\newblock From rewriting logic executable semantics to matching logic program
  verification.
\newblock Technical report, 2009.

\bibitem{rocsu2012hoare}
Ro{\c{s}}u. Grigore and {\c{S}}tef{\u{a}}nescu. Andrei.
\newblock From hoare logic to matching logic reachability.
\newblock In {\em International Symposium on Formal Methods}, pages 387--402,
  2012.
\newblock \href {http://dx.doi.org/10.1007/978-3-642-32759-9_32}
  {\path{doi:10.1007/978-3-642-32759-9_32}}.

\bibitem{barringer1984now}
Barringer. Howard, Kuiper. Ruurd, and Pnueli. Amir.
\newblock Now you may compose temporal logic specifications.
\newblock In {\em Proceedings of the sixteenth annual ACM symposium on Theory
  of computing}, pages 51--63, 1984.
\newblock \href {http://dx.doi.org/10.1145/800057.808665}
  {\path{doi:10.1145/800057.808665}}.

\bibitem{abrahamson1979modal}
Abrahamson. Karl.
\newblock Modal logic of concurrent nondeterministic programs.
\newblock In {\em Semantics of concurrent computation}, pages 21--33. Springer,
  1979.

\bibitem{aceto2001structural}
Aceto. Luca, Fokkink. Wan, and Verhoef. Chris.
\newblock Structural operational semantics.
\newblock In {\em Handbook of process algebra}, pages 197--292. 2001.
\newblock \href {http://dx.doi.org/10.1016/B978-044482830-9/50021-7}
  {\path{doi:10.1016/B978-044482830-9/50021-7}}.

\bibitem{marti2002rewriting}
Narciso Mart{\'\i}-Oliet and Jos{\'e} Meseguer.
\newblock Rewriting logic as a logical and semantic framework.
\newblock In {\em Handbook of Philosophical Logic}, pages 1--87. 2002.
\newblock \href {http://dx.doi.org/10.1007/978-94-017-0464-9_1}
  {\path{doi:10.1007/978-94-017-0464-9_1}}.

\bibitem{DBLP:journals/cacm/Bodin14}
Bodin. Martin, Chargu{\'e}raud. Arthur, and Filaretti.~Daniele et~al.
\newblock A trusted mechanised javascript specification.
\newblock {\em ACM SIGPLAN Notices}, 49(1):87--100, 2014.
\newblock \href {http://dx.doi.org/10.1145/2578855.2535876}
  {\path{doi:10.1145/2578855.2535876}}.

\bibitem{meseguer1992conditional}
Jos{\'e} Meseguer.
\newblock Conditional rewriting logic as a unified model of concurrency.
\newblock {\em Theoretical computer science}, 96(1):73--155, 1992.
\newblock \href {http://dx.doi.org/10.1016/0304-3975(92)90182-F}
  {\path{doi:10.1016/0304-3975(92)90182-F}}.

\bibitem{meseguer2012twenty}
Jos{\'e} Meseguer.
\newblock Twenty years of rewriting logic.
\newblock {\em The Journal of Logic and Algebraic Programming},
  81(7-8):721--781, 2012.
\newblock \href {http://dx.doi.org/10.1016/j.jlap.2012.06.003}
  {\path{doi:10.1016/j.jlap.2012.06.003}}.

\bibitem{bjorner2015}
Bj{\o}rner. Nikolaj, Gurfinkel. Arie, and McMillan.~Ken et~al.
\newblock Horn clause solvers for program verification.
\newblock In {\em Fields of Logic and Computation II}, pages 24--51. 2015.
\newblock \href {http://dx.doi.org/10.1007/978-3-319-23534-9_2}
  {\path{doi:10.1007/978-3-319-23534-9_2}}.

\bibitem{DBLP:journals/cacm/Sewell10}
Sewell. Peter, Nardelli~F. Zappa, and Owens.~Scott et~al.
\newblock Ott: Effective tool support for the working semanticist.
\newblock {\em Journal of functional programming}, 20(1):71, 2010.
\newblock \href {http://dx.doi.org/10.1017/S0956796809990293}
  {\path{doi:10.1017/S0956796809990293}}.

\bibitem{pnueli1977the}
Amir Pnueli.
\newblock The temporal logic of programs.
\newblock {\em foundations of computer science}, pages 46--57, 1977.

\bibitem{DBLP:journals/cacm/Apt19}
Apt~K. R and Olderog~E. R.
\newblock Fifty years of hoare's logic.
\newblock {\em Formal Aspects of Computing}, 31(6):751--807, 2019.
\newblock \href {http://dx.doi.org/10.1007/s00165-019-00501-3}
  {\path{doi:10.1007/s00165-019-00501-3}}.

\bibitem{domenica2015method}
Domenica~D. R, Siddharth~R. S, and Duggal.~Chanderpreet et~al.
\newblock Method, system, and computer program product for customer-level data
  verification, November~24 2015.
\newblock US Patent 9,195,985.

\bibitem{DBLP:journals/cacm/Sasse07}
Sasse. Ralf and Jos{\'e} Meseguer.
\newblock Java+ itp: A verification tool based on hoare logic and algebraic
  semantics.
\newblock {\em Electronic Notes in Theoretical Computer Science},
  176(4):29--46, 2007.
\newblock \href {http://dx.doi.org/10.1016/j.entcs.2007.06.006}
  {\path{doi:10.1016/j.entcs.2007.06.006}}.

\bibitem{DBLP:journals/cacm/Hoare69}
Hoare C.~A. Richard.
\newblock An axiomatic basis for computer programming.
\newblock {\em Communications of the ACM}, 12(10):576--580, 1969.
\newblock \href {http://dx.doi.org/10.1145/363235.363259}
  {\path{doi:10.1145/363235.363259}}.

\bibitem{rocsu2020matching}
Grigore Ro{\c{s}}u and Chen. Xiaohong.
\newblock Matching logic: the foundation of the k framework (invited talk).
\newblock In {\em Proceedings of the 9th ACM SIGPLAN International Conference
  on Certified Programs and Proofs}, pages 1--1, 2020.
\newblock \href {http://dx.doi.org/10.1145/3372885.3378574}
  {\path{doi:10.1145/3372885.3378574}}.

\bibitem{DBLP:conf/focs/Owens08}
Owens. Scott.
\newblock A sound semantics for ocaml light.
\newblock In {\em European Symposium on Programming}, pages 1--15, 2008.
\newblock \href {http://dx.doi.org/10.1007/978-3-540-78739-6_1}
  {\path{doi:10.1007/978-3-540-78739-6_1}}.

\bibitem{BF00268134}
Owicki. Susan and Gries. David.
\newblock An axiomatic proof technique for parallel programs i.
\newblock {\em Acta Inf.}, 6(4):319--340, 1976.
\newblock \href {http://dx.doi.org/10.1007/BF00268134}
  {\path{doi:10.1007/BF00268134}}.

\end{thebibliography}
\end{document}